\documentclass[10pt,journal]{IEEEtran} 
\IEEEoverridecommandlockouts
\usepackage[letterpaper,margin=1in]{geometry}

\usepackage{cite}
\usepackage{amsmath,amssymb,amsfonts}
\usepackage{empheq}
\usepackage[bb=dsserif]{mathalpha}
\usepackage{algorithmic}
\usepackage{graphicx}
\usepackage{textcomp}
\usepackage{xcolor}
\usepackage{tikz,hyperref}
\usepackage{multirow,bigstrut}

\def\BibTeX{{\rm B\kern-.05em{\sc i\kern-.025em b}\kern-.08em
    T\kern-.1667em\lower.7ex\hbox{E}\kern-.125emX}}

\newcommand{\Enc}{\mathsf{Enc}}
\newcommand{\Dec}{\mathsf{Dec}}

\newcommand{\Eve}{\mathsf{Eve}}

\newtheorem{theorem}{Theorem}

\newtheorem{remark}{Remark}
\newtheorem{definition}{Definition}
\newtheorem{proposition}{Proposition}

\newtheorem{corollary}{Corollary}

\begin{document}

\title{Secure Rate-Distortion-Perception: A Randomized Distributed Function Computation Approach for Realism}

\author{
  \IEEEauthorblockN{Gustaf Åhlgren\textsuperscript{1} and Onur G\"unl\"u\textsuperscript{2, 1}}\\
    \IEEEauthorblockA{\textsuperscript{1}%
        Information Theory and Security Laboratory (ITSL), Link{\"o}ping University, Sweden
    }\\
    \IEEEauthorblockA{\textsuperscript{2}%
        Lehrstuhl für Nachrichtentechnik, Technische Universität Dortmund, Germany\\
        gustaf.ahlgren@liu.se, onur.guenlue@tu-dortmund.de
    }

}

\maketitle

\begin{abstract}
Fundamental rate–distortion–perception (RDP) trade-offs arise in applications requiring maintained perceptual quality of reconstructed data, such as neural image compression. When compressed data is transmitted over public communication channels, security risks emerge. We therefore study secure RDP under negligible information leakage over both noiseless channels and broadcast channels (BCs) with correlated noise components. For noiseless channels, the exact secure RDP region is characterized. For BCs, an inner bound is derived and shown to be tight for a class of more-capable BCs. Separate source–channel coding is further shown to be optimal for this exact secure RDP region with unlimited common randomness available. Moreover, when both encoder and decoder have access to side information correlated with the source and the channel is noiseless, the exact RDP region is established. If only the decoder has correlated side information in the noiseless setting, an inner bound is derived along with a special case where the region is exact. Binary and Gaussian examples demonstrate that common randomness can significantly reduce the communication rate in secure RDP settings, unlike in standard rate–distortion settings. Thus, our results illustrate that random binning-based coding achieves strong secrecy, low distortion, and high perceptual quality simultaneously.
\end{abstract} 


%
%
\section{Introduction}


Modern communication networks are increasingly designed to convey the information that is most relevant to the end application, rather than to reproduce signals with exact bit‑level fidelity. This transition has motivated substantial interest in semantic communication techniques \cite{GunduzBITS, GunduzSemantic}, which prioritize the transmission of the most informative feature-domain representations of the source data. Such approaches are advantageous in scenarios with limited bandwidth or latency limitations, including autonomous driving and immersive multimedia systems such as augmented or virtual reality \cite{GunduzBITS}.

From an information‑theoretic viewpoint, semantic communication can be formulated as a remote source coding problem \cite{DobrushinRemote, BergerBook,Csiszarbook}, where the encoder observes a source indirectly and the decoder aims to recover the source and not necessarily the encoder's noisy observation of that source. Our recent work has introduced the randomized distributed function computation (RDFC) framework \cite{MyWIFS}, which incorporates a controlled randomization mechanism at the encoder. This mechanism enables the decoder to generate outputs that follow a specified target distribution while ensuring that essential system functionality is retained. The RDFC framework provides strong performance guarantees by utilizing results from strong coordination \cite{CuffChannelSynthesis}, and these guarantees remain valid even when the available common randomness is limited or entirely absent \cite{MyWIFS,OnurEUCNC,OurDeepRDFC,GustafISIT2025}. The RDFC framework has been applied in areas such as neural compression using generative models \cite{Flamich2020,HavasiMRC}, federated learning settings with side information \cite{Isik2024adaptive}, and mechanisms designed to provide differential privacy \cite{MyWIFS,CheukDP,OnurRDFCjournal,KairouzLDP}.

In this work, we focus on the rate–distortion–perception (RDP) problem \cite[Section~17.4.2]{BookforDataCompression}, \cite{RethinkingRDP,Matsumoto2018RDP,Chen2022RDP,KhistiUniversalRDP,TheisWagner2021codingforDP,YukselCR}, which is closely tied to image compression \cite{JonaE2E} and can be seen as a special case of the RDFC framework due to its randomized nature. The RDP setting seeks to jointly reduce the expected distortion between the original image and its reconstruction while also constraining the reconstruction to follow a distribution that resembles that of the source \cite{Wagner2022rate,DenizRDP}. The latter requirement enforces high perceptual quality in the reconstruction, such that the generated outputs exhibit statistical characteristics aligned with the original image. Such a constraint is also captured in generative discriminators used in deep learning–based models \cite{Goodfellow2014generative,Wasserstein,ImprovedWasserstein}.


Security and privacy play a critical role in practical systems, in which the encoder outputs may be exposed to unauthorized observers in the network. The security risk is amplified for joint source-channel coding schemes \cite{SecureDeepJSCC,DiffSecureDeepJSCC}. Physical-layer security methods offer a means to counter such threats \cite{OurSecureFunction, CuffRDSecrecy,Onur2026lowRDP}. In this work, we consider secure RDP under strong secrecy, for which the amount of leakage is negligible, in two fundamental transmission scenarios. The first scenario is a noiseless communication model, which can correspond to end-to-end image compression at higher open systems interconnection (OSI) layers. The second scenario involves a noisy broadcast channel (BC), where correlated noise components across multiple receivers introduce additional complexity in determining the achievable rate, distortion, and perceptual constraints. For the noiseless case, we characterize the exact secure RDP region. For the BC setting, we develop an inner bound on the secure RDP region by employing a refined random-binning method to ensure strong secrecy. Furthermore, under a more-capable channel assumption, we obtain an exact description of the secure RDP region for transmissions over BCs.


The source-channel separation theorem \cite{shannon1948OGPaper} motivated the layered system designs used in communication networks. Similarly, recent work has examined such separation results for RDP when the encoder and decoder aim to satisfy a perception constraint, which is imposed on the reconstructed image as a probability distribution constraint. The source-channel separation result in \cite{QuAlmostRDPdifferentDist} examines the optimality of separation when the encoder and decoder aim to produce an independent and identically distributed (i.i.d.) target probability distribution at the decoder, which is similar to the perception (or realism) constraint considered in this paper. Similarly, \cite{tian2025source} analyzes the separation for the RDP problem for more general realism constraints and shows that separation is not optimal for block-level realism constraints if common randomness is not available at the encoder and the decoder. Motivated by these results, we analyze source-channel separation for a special case of secure RDP over BCs when an unlimited amount of common randomness is available.


For learned image compression systems, the previously processed images could also provide correlated side information to the encoder and/or decoder. Thus, as extensions of the classical distributed lossy source coding with side information results in \cite{WynerZiv} that examine how side information influences the rate-distortion performance, recent work considers the RDP framework with side information \cite{WagnerHamidGunduzarXiv2025,HamidGunduzISIT2023,NiuCondRDP}. In this work, we consider two formulations of side information, depending on its availability, when we assume that the channel is noiseless. In the first setting, both the encoder and decoder have access to the side information that is correlated with the input data. In the second setting, the side information is available only at the decoder. For the former setting, we characterize the corresponding secure RDP region. For the latter setting, we establish an inner bound for the secure RDP setting and show that this bound is exact for a special case.

\subsection{Main Contributions}
The main contributions of this work include:
\begin{itemize}
    \item We characterize the exact secure RDP region for noiseless communication channels under a strong secrecy constraint on the reconstructed sequence;
    \item We derive an inner bound on the secure RDP region for noisy memoryless BCs and show that this inner bound is tight for some more-capable BCs to establish an exact rate region characterization for that special case;
    \item We establish the exact secure RDP region when side information is available at both the encoder and decoder and the communication channel is noiseless;
    \item For noiseless channels, we obtain a general inner bound on the secure RDP region when the side information is available only at the decoder and identify a special case for which the secure RDP region is exact; and
    \item We provide a binary example illustrating how common randomness can substantially reduce the communication rate in secure RDP settings, which is a gain not attainable in standard rate-distortion formulations.
\end{itemize}

These results follow mainly from our previous conference papers \cite{GustafISIT2025,GustafWCNC2026}. Moreover, the new contributions in this work include:
\begin{itemize}
    \item We characterize the optimality of source-channel coding separation over a noisy memoryless BC such that the legitimate receiver has a more-capable BC than the eavesdropper when there is unlimited common randomness.
    \item Finally, we evaluate a Gaussian source example to illustrate how decoder side information and common randomness shape the secure RDP trade-off.
\end{itemize}

%
%
\subsection{Paper Outline}
Section~\ref{sec:sysmodel} provides the secure RDP system models and the main definitions. Section~\ref{sec:ISIT_results} establishes the rate regions for secure RDP over noiseless and noisy channels. In Section~\ref{sec:WCNC_results}, we extend the secure RDP over noiseless channel results to two different side information scenarios. In Section~\ref{sec:SCS_analysis}, we analyze the optimality of source-channel separation for a special case of secure RDP over BCs with unlimited common randomness. 
Section~\ref{sec:examples} evaluates some of the secure RDP regions for relevant examples, including Gaussian sources. 

%
%
\subsection{Notation}
Let $X$ denote a random variable with the alphabet $\mathcal{X}$, and $x$ denotes its realization. Abbreviate $P_{X}(x)$ as $P_{X}$, denoting the probability distribution of a random variable $X$. Denote the set of integers $\{a, a+1,\ldots,b\}$ as $[a:b]$, and the set of indices $\{1,2, \ldots, i\!-\!1,i\!+\!1,\ldots,n\}$ as $\{n\} \backslash i$, respectively. $X \sim \text{Unif}[a:b]$ represents a random variable $X$ that is uniformly distributed on the interval between $a$ and $b$. The total variation (TV) distance between two probability distributions is denoted
\begin{equation}
    ||P_{X} - P_{Y}||_{\text{TV}} \triangleq \frac{1}{2}\sum_{a \in \mathcal{A}} |P_{X}(a) - P_{Y}(a)|.
\end{equation}
Let $\mathbb{1}\{\cdot\}$ denote the indicator function. For $a, b \in [0,1]$, we define the *-operator as $a*b = a(1-b) + b(1-a)$.
Let $P_{X} \approx P_{Y}$ denote that the two probability distributions $P_{X}$ and $P_{Y}$ are close in TV distance. Let $H_{b}(p) = - p \log{p} - (1 - p) \log{(1-p)}$ denote the binary entropy function,
with $0 \cdot \log{0} \triangleq 0$. A binary symmetric channel (BSC) with cross-over probability $\alpha\in[0,1]$ is denoted as BSC$(\alpha)$. A Bernoulli random variable with success probability $\beta\in[0,1]$ is denoted as \text{Bern}$(\beta)$.

%
%
\section{System Models and Definitions}\label{sec:sysmodel}
For all RDP system models analyzed in this paper, we consider an encoder that observes a source sequence $X^{n} \sim Q_{X}^{n}$ taking values in the finite Polish alphabet $\mathcal{X}^{n}$ along with  common randomness $C \in [1: 2^{nR_{0}}]$, the latter of which is also available to the decoder. The encoder then outputs either a channel input $\widetilde{X}^{n}$ or an index $S \in [1:2^{nR}]$ that is communicated over either a noisy memoryless BC or a noiseless channel, respectively. The encoder output $\widetilde{Y}^{n}$ (which is a noisy version of $\widetilde{X}^{n}$) or $S$ is received by a decoder. Moreover, an eavesdropper (Eve) in the network observes another channel output sequence $\widetilde{Z}^{n}$ or the index $S$. Observing the common randomness $C$ and either of the noisy or noiseless channel outputs ($\widetilde{Y}^{n}$ or $S$), the decoder outputs $Y^{n}\in\mathcal{X}^{n}$ such that (i) the expected distortion between $X^{n}$ and $Y^{n}$ is minimized, (ii) the induced distribution of the output sequence $P_{Y^{n}}$ is close in TV distance to the input distribution $Q_{X}^{n}$, and (iii) strong secrecy is satisfied, which is imposed between $Y^n$ and either of the noisy or noiseless channel outputs. We remark that such a secrecy constraint is relevant in the context of generative artificial intelligence where the output image is the commodity to keep confidential, as discussed in \cite{GustafISIT2025,GustafWCNC2026}. For our distortion metric we consider a distortion $d(x^{n},y^{n})$ such that
\begin{equation}
    d(x^{n},y^{n}) = \frac{1}{n} \sum^{n}_{i=1}d(x_{i},y_{i}),
\end{equation}
with $d:\mathcal{X}\times\mathcal{Y} \rightarrow [0,\infty)$. Fig.~\ref{fig:simplified_SRDP} depicts the system model for the secure RDP over a noiseless channel, whose rate region definition is given below.

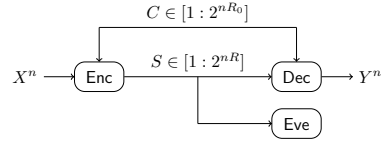
\begin{figure}[t]
    \centering
    \resizebox{0.65\linewidth}{!}{
    \begin{tikzpicture}
		\node (in) at (-2.5,0) {$X^{n}$};
        \node (enc) at (-1,0) [draw,rounded corners = 6pt, minimum width=1cm,minimum height=0.6cm, align=left] {$\Enc$};
        \node (dec) at (3,0) [draw,rounded corners = 6pt, minimum width=1cm,minimum height=0.6cm, align=left] {$\Dec$};
        \node (eve) at (3,-0.95) [draw,rounded corners = 6pt, minimum width=1cm,minimum height=0.6cm, align=left] {$\Eve$};
        \node (out) at (4.5,0) {$Y^{n}$};
        \node (common) at (1,1.3) {$C \in [1:2^{nR_{0}}]$};
        \node (message) at (1,.3) {$S \in [1:2^{nR}]$};
        \draw[<-] (enc) -- (-1,1);
        \draw (-1,1) -- (3,1);
        \draw[<-] (dec) -- (3,1);
        \draw[->] (in) -- (enc);
        \draw[->] (dec) -- (out);
        \draw[->] (enc) -- (dec);
        \draw (1,0) -- (1,-0.95);
        \draw[->] (1,-0.95) -- (2.5,-0.95);
    \end{tikzpicture}
    }
    \caption{The system model of the secure RDP problem with transmissions over a noiseless channel.}
    \label{fig:simplified_SRDP}
\end{figure}

\begin{definition}\label{def:SecureRDPNoiseless}
    A secure RDP tuple $(R,R_{0},D)$ is \emph{achievable} for an i.i.d target output distribution $Q_{X}^{n}$ if, for any $\epsilon\!>\!0$, there exist $n\!\geq\!1$, an encoder, and a decoder such that
    \begin{align}
        &||P_{Y^{n}} - Q_{X}^{n}||_{TV} \leq \epsilon \label{eq:real_SRDP} && \text{(realism)} \\
        &\mathbb{E}[d(X^{n},Y^{n})] \leq D + \epsilon \label{eq:dist_SRDP} && \text{(distortion)} \\
        &I(Y^{n};S) \leq \epsilon \label{eq:secrecy_SRDP} && \text{(strong secrecy)}.
    \end{align}
    The \emph{secure} RDP region $\mathcal{R}$ is the closure of the set of all achievable tuples.\hfill $\lozenge$
\end{definition}

Fig.~\ref{fig:noisy_SRDP_broadcast} depicts the system model for secure RDP over a noisy memoryless BC, whose rate region definition is given below.

\begin{figure}[t]
    \centering
    \resizebox{0.65\linewidth}{!}{
    \begin{tikzpicture}
		\node (in) at (-2.5,0) {$X^{n}$};
        \node (enc) at (-1,0) [draw,rounded corners = 6pt, minimum width=1cm,minimum height=0.6cm, align=left] {$\Enc$};
        \node (ch_x) at (0,0.3) {$\widetilde{X}^n$};
        \node (ch_in) at (0.88,0) {};
        \node (channel) at (1.5,-0.45) [draw,rounded corners = 6pt, minimum width=1.5cm,minimum height=1.4cm, align=left] {$P_{\widetilde{Y}\widetilde{Z}|\widetilde{X}}$};
        \node (ch_out_y) at (2.12,0) {};
        \node (ch_out_z) at (2.12,-0.95) {};
        \node (ch_y) at (2.9,0.3) {$\widetilde{Y}^n$};
        \node (dec) at (4,0) [draw,rounded corners = 6pt, minimum width=1cm,minimum height=0.6cm, align=left] {$\Dec$};
        \node (out) at (5.5,0) {$Y^{n}$};
        \node (eve) at (4,-0.95) [draw,rounded corners = 6pt, minimum width=1cm,minimum height=0.6cm, align=left] {$\Eve$};
        \node (eve_out) at (2.9,-0.65) {$\widetilde{Z}^n$};
        \node (common) at (1.5,1.3) {$C \in [1:2^{nR_{0}}]$};
        \draw[->] (in) -- (enc);
        \draw[->] (enc) -- (ch_in);
        \draw[->] (ch_out_z) -- (3.5,-0.95);
        \draw[->] (dec) -- (out);
        \draw[->] (ch_out_y) -- (dec);
        \draw[<-] (enc) -- (-1,1);
        \draw (-1,1) -- (4,1);
        \draw[<-] (dec) -- (4,1);
    \end{tikzpicture}
    }
    \caption{The system model of the secure noisy RDP problem with transmissions over a noisy memoryless BC.}
    \label{fig:noisy_SRDP_broadcast}
\end{figure}
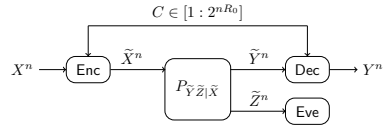
\begin{definition}\label{def:SecureRDPNoisy}
    A secure noisy RDP tuple $(R, R_{0}, D)$ is achievable given a noisy memoryless BC $P_{\widetilde{Y}\widetilde{Z}|\widetilde{X}}$, and an i.i.d target output distribution $Q_{X}^{n}$ if for $n \geq 1$ there exists an encoder and decoder that satisfy the realism and distortion constraints in (\ref{eq:real_SRDP}) and (\ref{eq:dist_SRDP}), and
    \begin{align}
        &I(Y^{n};\widetilde{Z}^{n}) \leq \epsilon \label{eq:secrecy_noisyRDP} && \text{(strong secrecy)}.
    \end{align}
    The \emph{secure} noisy RDP region $\mathcal{R}_{N}$ is the closure of the set of all achievable tuples.\hfill $\lozenge$
\end{definition}

We next provide a definition of a more-capable BC, for which we establish the exact secure noisy RDP region in Section~\ref{sec:ISIT_results}.

\begin{definition}\label{def:morecapabledef}
    A memoryless BC $P_{\widetilde{Y}\widetilde{Z}|\widetilde{X}}$ is more-capable if, for all $P_{\widetilde{X}}$, we have 
    \begin{align}
        I(\widetilde{Y};\widetilde{X})\geq I(\widetilde{Z};\widetilde{X}) \label{eq:morecapabledef}.
    \end{align}
\end{definition}

%
%

In Section~\ref{sec:WCNC_results}, we also consider the secure RDP problem when the decoder has access to side information $Z^{n} \in \mathcal{Z}^{n}$, where $\mathcal{Z}$ is a finite Polish space, that is correlated with the source sequence $X^{n}$ according to $P_{X^{n}Z^{n}} = Q_{XZ}^{n}$. Similarly, the encoder observes the source sequence $X^{n}$, the common randomness $C$, and the side information $Z^{n}$ which might be available also to the encoder. In both systems considered in Section~\ref{sec:WCNC_results}, the encoder produces an index $S$ that is observed by the decoder and the eavesdropper and the decoder produces the sequence $Y^{n}$ from $S$, $C$, and $Z^{n}$ to (i) reduce the expected distortion between $X^{n}$ and $Y^{n}$, (ii) ensure that the output probability distribution of $Y^{n}$ is close to that of the input sequence $X^{n}$ in TV, and (iii) achieve strong secrecy imposed between $S$ and $Y^{n}$. In the system depicted in Fig.~\ref{fig:SecRDPwithSIatBoth}, both the encoder and the decoder have access to the side information, and in Fig.~\ref{fig:SecRDPwithSIatDec}, only the decoder has access to the side information $Z^{n}$.

We next give a definition of the rate region for secure RDP with side information.

\begin{figure}[t]
    \centering
    \resizebox{0.65\linewidth}{!}{
    \begin{tikzpicture}
		\node (in) at (-2.5,0) {$X^{n}$};
        \node (si) at (-2.5,-1.5) {$Z^{n}$};
        \node (enc) at (-1,0) [draw,rounded corners = 6pt, minimum width=1cm,minimum height=0.6cm, align=left] {$\Enc$};
        \node (dec) at (3,0) [draw,rounded corners = 6pt, minimum width=1cm,minimum height=0.6cm, align=left] {$\Dec$};
        \node (eve) at (2,-0.70) [draw,rounded corners = 6pt, minimum width=1cm,minimum height=0.6cm, align=left] {$\Eve$};
        \node (out) at (4.5,0) {$Y^{n}$};
        \node (common) at (1,1.3) {$C \in [1:2^{nR_{0}}]$};
        \node (message) at (1,.3) {$S \in [1:2^{nR}]$};
        \draw[<-] (enc) -- (-1,1);
        \draw (-1,1) -- (3,1);
        \draw[<-] (dec) -- (3,1);
        \draw[->] (in) -- (enc);
        \draw[->] (dec) -- (out);
        \draw[->] (enc) -- (dec);
        \draw (1,0) --  (1,-0.7);
        \draw[->] (1,-0.7) -- (eve);
        \draw (si) -- (3,-1.5);
        \draw[->] (-1,-1.5) -- (enc);
        \draw[->] (3,-1.5) -- (3,-0.3);
    \end{tikzpicture}
    }
    \caption{The system model of the secure RDP problem where the side information is available at both the encoder and the decoder.}
    \label{fig:SecRDPwithSIatBoth}
\end{figure}
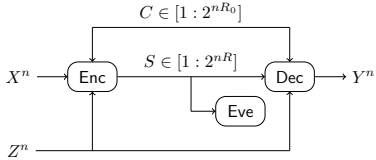

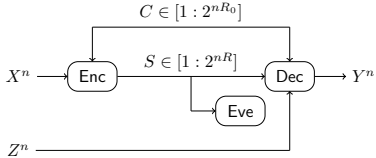
\begin{figure}[t]
    \centering
    \resizebox{0.65\linewidth}{!}{
    \begin{tikzpicture}
		\node (in) at (-2.5,0) {$X^{n}$};
        \node (si) at (-2.5,-1.5) {$Z^{n}$};
        \node (enc) at (-1,0) [draw,rounded corners = 6pt, minimum width=1cm,minimum height=0.6cm, align=left] {$\Enc$};
        \node (dec) at (3,0) [draw,rounded corners = 6pt, minimum width=1cm,minimum height=0.6cm, align=left] {$\Dec$};
        \node (eve) at (2,-0.70) [draw,rounded corners = 6pt, minimum width=1cm,minimum height=0.6cm, align=left] {$\Eve$};
        \node (out) at (4.5,0) {$Y^{n}$};
        \node (common) at (1,1.3) {$C \in [1:2^{nR_{0}}]$};
        \node (message) at (1,.3) {$S \in [1:2^{nR}]$};
        \draw[<-] (enc) -- (-1,1);
        \draw (-1,1) -- (3,1);
        \draw[<-] (dec) -- (3,1);
        \draw[->] (in) -- (enc);
        \draw[->] (dec) -- (out);
        \draw[->] (enc) -- (dec);
        \draw (1,0) --  (1,-0.7);
        \draw[->] (1,-0.7) -- (eve);
        \draw (si) -- (3,-1.5);
        \draw[->] (3,-1.5) -- (3,-0.3);
    \end{tikzpicture}
    }
    \caption{The system model of the secure RDP problem where the side information is available at the decoder only.}
    \label{fig:SecRDPwithSIatDec}
\end{figure}

\begin{definition}\label{def:secRDPwithSI}
    A \emph{secure} RDP tuple $(R, R_{0}, D)$ is achievable with side information $Z^{n} \in \mathcal{Z}^{n}$, correlated with the source sequence $X^{n}$ according to $Q_{XZ}^{n}$, if, for any $\epsilon > 0$ there exists $n \geq 1$, an encoder, and a decoder that satisfy the realism and distortion constraints in (\ref{eq:real_SRDP}) and (\ref{eq:dist_SRDP}), and
    \begin{align}
        &I(Y^{n};S) \leq \epsilon \label{eq:secrecy_SRDPwithSI} && \text{(strong secrecy)}.
    \end{align}
    The rate region for the secure RDP with side information is the closure of the set of all achievable tuples. For secure RDP with side information available at both the encoder and the decoder, the rate region is denoted as $\mathcal{R}_{\text{SI,ED}}$, and for secure RDP with side information available only at the decoder as $\mathcal{R}_{\text{SI,D}}$. \hfill $\lozenge$
\end{definition}

We next define a special case where the side information $Z^{n}$ and the output $Y^{n}$ are pairwise i.i.d, for which we establish the exact $\mathcal{R}_{\text{SI,D}}$ in Section~\ref{sec:WCNC_results}.

\begin{definition}\label{def:jointlyiid}
    Two sequences of random variables $A^{n}$ and $B^{n}$ are jointly i.i.d if and only if we have
    \begin{equation}
        P_{A^{n}B^{n}} = P_{AB}^{n}.
    \end{equation} \hfill $\lozenge$
\end{definition}

In Section~\ref{sec:SCS_analysis}, we analyze the optimality of separate source and channel coding for the secure RDP problem with unlimited common randomness, for which we can achieve the minimal index rate $R$. The considered system model is depicted in Fig.~\ref{fig:noisy_SRDP_SCS}. Different from the system model in Fig.~\ref{fig:noisy_SRDP_broadcast}, the number of channel uses is $m\geq 1$, which is not necessarily equal to the sequence length $n$. Thus, the secrecy constraint here differs from (\ref{eq:secrecy_noisyRDP}) because the channel sequences and the source and reconstruction sequences have different lengths. Similar to \cite{tian2025source}, the definition of the achievable region for the secure RDP problem with unlimited common randomness and transmission over a noisy memoryless BC is given in Definition~\ref{def:SecureRDPNoisy_SCS} below.

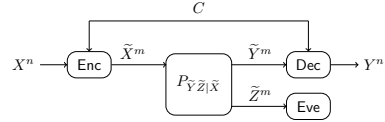
\begin{figure}[t]
    \centering
    \resizebox{0.65\linewidth}{!}{
    \begin{tikzpicture}
		\node (in) at (-2.5,0) {$X^{n}$};
        \node (enc) at (-1,0) [draw,rounded corners = 6pt, minimum width=1cm,minimum height=0.6cm, align=left] {$\Enc$};
        \node (ch_x) at (0,0.3) {$\widetilde{X}^m$};
        \node (ch_in) at (0.88,0) {};
        \node (channel) at (1.5,-0.45) [draw,rounded corners = 6pt, minimum width=1.5cm,minimum height=1.4cm, align=left] {$P_{\widetilde{Y}\widetilde{Z}|\widetilde{X}}$};
        \node (ch_out_y) at (2.12,0) {};
        \node (ch_out_z) at (2.12,-0.95) {};
        \node (ch_y) at (2.9,0.3) {$\widetilde{Y}^m$};
        \node (dec) at (4,0) [draw,rounded corners = 6pt, minimum width=1cm,minimum height=0.6cm, align=left] {$\Dec$};
        \node (out) at (5.5,0) {$Y^{n}$};
        \node (eve) at (4,-0.95) [draw,rounded corners = 6pt, minimum width=1cm,minimum height=0.6cm, align=left] {$\Eve$};
        \node (eve_out) at (2.9,-0.65) {$\widetilde{Z}^m$};
        \node (common) at (1.5,1.3) {$C$};
        \draw[->] (in) -- (enc);
        \draw[->] (enc) -- (ch_in);
        \draw[->] (ch_out_z) -- (3.5,-0.95);
        \draw[->] (dec) -- (out);
        \draw[->] (ch_out_y) -- (dec);
        \draw[<-] (enc) -- (-1,1);
        \draw (-1,1) -- (4,1);
        \draw[<-] (dec) -- (4,1);
    \end{tikzpicture}
    }
    \caption{The system model of the secure noisy RDP problem with transmissions over a noisy memoryless BC when there is enough common randomness $C$ available at the encoder and the decoder.}
    \label{fig:noisy_SRDP_SCS}
\end{figure}

\begin{definition}\label{def:SecureRDPNoisy_SCS}
    A secure noisy RDP tuple $(R, D)$ is achievable given a noisy memoryless BC $P_{\widetilde{Y}\widetilde{Z}|\widetilde{X}}$ and with unlimited common randomness if for $n,m \geq 1$ there exists an encoder and decoder that satisfy the realism and distortion constraints in (\ref{eq:real_SRDP}) and (\ref{eq:dist_SRDP}), and
    \begin{align}
        &\frac{m}{n} \leq \kappa + \epsilon && \text{(mismatch factor)} \\ 
        &I(Y^{n};\widetilde{Z}^{m}) \leq \epsilon \label{eq:secrecy_noisyRDP_SCS} && \text{(strong secrecy)}.
    \end{align}
    where $\kappa$ is the channel mismatch factor for the block-lengths $n$ and $m$ for the source and the channel sequences. The \emph{secure} noisy RDP region with unlimited common randomness $\mathcal{R}_{\text{N,UC}}$ is the closure of the set of all achievable tuples.\hfill $\lozenge$
\end{definition}

The achievable region with joint source-channel coding for the secure noisy RDP problem with unlimited common randomness is denoted $\mathcal{R}_{\text{N,UC,J}}$, and the achievable region with separate source channel coding is denoted $\mathcal{R}_{\text{N,UC,S}}$.




The main rate-region definitions are outlined in Table~\ref{tab:rateRegions} for convenience. 

{\renewcommand{\arraystretch}{1.3}
\linespread{1.7} 
\begin{table*}[t]
\caption{An overview of the main rate regions defined.}
\label{tab:rateRegions}
    \begin{center}
        \begin{tabular}{|c|c|c|c|c|}
        \hline
        Region & Channel & Side information & Common randomness & Secrecy constraint \\
        \hline
        \hline
        $\mathcal{R}$ & Noiseless & - & at rate $R_{0}$ & $I(Y^{n};S) < \epsilon$  \\
        \hline
        $\mathcal{R}_{\text{N}}$ & Noisy BC  & - & at rate $R_{0}$ & $I(Y^{n};\widetilde{Z}^{n}) < \epsilon$ \\
        \hline
        $\mathcal{R}_{\text{SI,ED}}$ & Noiseless & at encoder \& decoder & at rate $R_{0}$ & $I(Y^{n};S) < \epsilon$ \\ 
        \hline
        $\mathcal{R}_{\text{SI,D}}$ & Noiseless & at decoder & at rate $R_{0}$ & $I(Y^{n};S) < \epsilon$ \\
        \hline
        $\mathcal{R}_{\text{N,UC,J}}$ & Noisy BC  & - & unlimited & $I(Y^{n};\widetilde{Z}^{m}) < \epsilon$ \\
        \hline
        $\mathcal{R}_{\text{N,UC,S}}$ & Noisy BC  & - & unlimited & $I(Y^{n};\widetilde{Z}^{m}) < \epsilon$ \\
        \hline
        \end{tabular}
    \end{center}
\end{table*}
}

%
%
\section{Secure RDP over Channels}\label{sec:ISIT_results}
We first provide the capacity region for the secure RDP problem, depicted in Fig.~\ref{fig:simplified_SRDP}, in Theorem~\ref{theo:SRDPNoiseless}.
\begin{theorem}\label{theo:SRDPNoiseless}
    The region $\mathcal{R}$ for a given i.i.d source distribution $Q_{X}$ is the union overall joint probability distributions $P_{XUY}$, where $P_{Y} \approx Q_{X}$, of the set of all $(R, R_{0}, D)$ tuples that satisfy
    \begin{align}
        &R \geq I(U;X), \\
        &R_{0} \geq I(U;Y), \\
        &D \geq \mathbb{E}[d(X,Y)] \label{eq:distortionfromTheorem1}
    \end{align}
    such that $X-U-Y$ forms a Markov chain, and the cardinality $|\mathcal{U}|$ of $U$ can be limited to $|\mathcal{U}| \leq |\mathcal{X}|^{2} + 1$.
\end{theorem}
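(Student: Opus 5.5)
We prove achievability by output statistics of random binning (OSRB), in the style of strong coordination \cite{CuffChannelSynthesis}, and the converse via a time-sharing variable. The key observation is that the secrecy constraint $I(Y^n;S)\le\epsilon$ together with the realism constraint asks that $(S,Y^n)$ look like an independent pair, where $S$ is uniform and $Y^n$ is approximately i.i.d.\ $Q_X$.

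\textbf{Achievability.} Fix $P_{XUY}=Q_XP_{U|X}P_{Y|U}$ with $P_Y=Q_X$. We work with the i.i.d.\ source-side distribution $Q_X^nP_{U|X}^nP_{Y|U}^n$. We assign two independent random bins to $U^n$: a bin $C\in[1:2^{nR_0}]$ (to play the role of the common randomness) and a bin $S\in[1:2^{nR}]$ (the message), plus an extra bin $F$ of rate $\tilde R$ if needed.
\begin{itemize}
\item By the OSRB lemma, $(C,F)$ is nearly uniform and independent of $X^n$ when $R_0+\tilde R<H(U|X)$. This lets the encoder generate $U^n$ from $(X^n,C,F)$ by likelihood encoding, with $F$ shared as well.
\item The decoder recovers $U^n$ from $(C,S,F)$ by Slepian–Wolf decoding when $R_0+R+\tilde R>H(U)$.
\item $(C,S,F)$ is nearly uniform and independent of $Y^n$ when $R_0+R+\tilde R<H(U|Y)$. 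Since $C$ and $F$ are known to the decoder only as key material, the condition actually needed is that $(S,F)$ is independent of $Y^n$, i.e.\ $R+\tilde R<H(U|Y)$.
\end{itemize}
Rather than fuss over this bookkeeping, the cleaner route is soft covering (channel synthesis) with a superposition-free codebook. Generate $2^{n(R+R_0)}$ codewords $u^n(s,c)$ i.i.d.\ $P_U$. The encoder uses the likelihood encoder given $(x^n,c)$ and sends $s$; the decoder passes $u^n(s,c)$ through $P_{Y|U}^n$. The standard results then give:
\begin{itemize}
\item soft covering of $X^n$ when $R+R_0>I(U;X)$;
\item approximation of the joint law by $Q_X^nP_{U|X}^nP_{Y|U}^n$, which yields realism and distortion;
\item secrecy: for each fixed $s$, the set $\{u^n(s,c)\}_c$ soft-covers $Y^n$ through $P_{Y|U}$ when $R_0>I(U;Y)$, so $P_{Y^n|S=s}\approx Q_X^n$ in TV.
\end{itemize}
Because the soft-covering TV decays exponentially, this converts to $I(Y^n;S)\le\epsilon$ by the standard TV-to-mutual-information bound. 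The condition $R+R_0>I(U;X)$ together with $R_0>I(U;Y)$ is weaker than the claimed $R\ge I(U;X)$. To get $R\ge I(U;X)$ alone, we take $C$ as a deterministic function used only for decoding, so that $R\ge I(U;X)$ suffices. This matches the region, since the stated region requires $R\ge I(U;X)$ regardless of $R_0$. So I use the scheme with $R>I(U;X)$ (compression only, rate-distortion style, with the likelihood encoder over $u^n(s)$), with $c$ indexing sub-bins that randomize $Y^n$. In the codebook $u^n(s,c)$, $c$ is selected by the encoder jointly with $s$, and $R>I(U;X)$ must be argued via a Wyner-common-information-type decomposition. This reconciliation of rates is the step I expect to be the main obstacle, and I would settle it with OSRB as in \cite{GustafISIT2025}.

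\textbf{Converse.} Let $T\sim\mathrm{Unif}[1:n]$ be a time-sharing variable independent of everything, and set $U_i=(S,C)$, $U=(U_T,T)$, $X=X_T$, $Y=Y_T$. Then $X-U-Y$ holds because $Y^n$ is generated from $(S,C)$. The three constraints follow from the bounds below.
\begin{itemize}
\item \textbf{Rate.} We have $nR\ge H(S)\ge I(S;X^n|C)=I(S,C;X^n)$, using that $C$ is independent of $X^n$. Since $X^n$ is i.i.d., this is at least $\sum_i I(S,C;X_i)$, which gives $R\ge I(U;X)$ after noting $T$ is independent of $X_T$.
\item \textbf{Common randomness.} We have $nR_0\ge H(C)\ge I(C;Y^n|S)=I(S,C;Y^n)-I(S;Y^n)\ge I(S,C;Y^n)-\epsilon$. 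Since $P_{Y^n}\approx Q_X^n$, the usual TV-continuity argument gives $I(S,C;Y^n)\ge\sum_i I(S,C;Y_i)-n\delta(\epsilon)$, hence $R_0\ge I(U;Y)-\delta$. The same argument yields $I(T;Y_T)\le\delta$.
\item \textbf{Distortion and realism.} The distortion bound is immediate, and $P_{Y_T}\approx Q_X$ follows from the realism constraint.
\end{itemize}
Finally, the cardinality bound $|\mathcal U|\le|\mathcal X|^2+1$ follows from the support lemma applied to preserving $P_{XY}$ (which has $|\mathcal X|^2-1$ free entries) and the two mutual informations $I(U;X)$ and $I(U;Y)$.
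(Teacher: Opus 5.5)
Your converse is sound and is essentially the paper's. Your simpler auxiliary $U_i=(S,C)$ works as well as the paper's $U_i=(S,C,X^{i-1})$, since $I(S,C;X^n)\ge\sum_i I(S,C;X_i)$ for i.i.d.\ $X^n$, and $X_i-(S,C)-Y_i$ holds because the decoder acts only on $(S,C)$.

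\textbf{The gap is in the achievability, which as written never reaches a working scheme.} Your first bullet list already contains exactly the paper's OSRB constraints: $\tilde R+R_0<H(U|X)$, $R+\tilde R<H(U|Y)$ and $R+R_0+\tilde R>H(U)$. Eliminating $\tilde R$ by Fourier--Motzkin gives $R>I(U;X)$ and $R_0>I(U;Y)$ directly. The paper then finishes with soft covering for realism, the typical average lemma for distortion, and fixing a good realization $F=f$. You abandon this route at the one remaining step and switch to a likelihood-encoder scheme in which you misstate the encoder's covering condition.

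In that scheme $C$ is uniform, independent of $X^n$ and pre-shared with the decoder. For the given $c$, the likelihood encoder chooses only among $\{u^n(s,c)\}_s$. The distance between the induced and ideal distributions is therefore $\mathbb{E}_C\|Q_X^n-\Gamma_{X^n|C}\|_{\text{TV}}$. So $X^n$ must be soft-covered for each fixed $c$, which requires $R>I(U;X)$, not $R+R_0>I(U;X)$.

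Had $R+R_0>I(U;X)$ sufficed, the achievable region would be strictly larger than your own converse allows. That contradiction should have flagged the error. Your remark that this condition is ``weaker than'' the claimed region and must be ``reconciled'' has the logic backwards.

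Your repair is also not a valid scheme. Either $c$ is the pre-shared common randomness, in which case the encoder cannot choose it. Or the encoder chooses it, in which case the decoder cannot form $u^n(s,c)$ unless $c$ is transmitted. Sending $c$ over the public link costs extra rate and reveals it to Eve, destroying the secrecy argument. Since you then defer this ``reconciliation'' to OSRB, no achievability proof is actually given.

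With the correct per-$c$ condition, the soft-covering route closes without any reconciliation:
\begin{itemize}
\item $R>I(U;X)$ makes the encoder's induced distribution close to the ideal one.
\item $R_0>I(U;Y)$ gives per-$s$ covering of $Y^n$. With exponential decay, this yields $I(Y^n;S)\to 0$. It also yields realism, since it implies $R+R_0>I(U;Y)$.
\item Distortion follows from the ideal distribution, under which $(U^n,X^n,Y^n)\sim P_{UXY}^n$ on average over codebooks.
\end{itemize}
That would be a legitimate and somewhat more elementary alternative to the paper's OSRB argument.
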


Comparing Theorem~\ref{theo:SRDPNoiseless} result with \cite[Theorem~6]{DenizRDP}, we observe that the amount of common randomness needed to achieve the realism and distortion constraints also suffices to achieve the secrecy constraint for this case.

%
%
\begin{IEEEproof}[Proof sketch for Theorem~\ref{theo:SRDPNoiseless}]
We next provide the achievability and converse proofs.

\emph{Proof sketch of achievability:} 
The method of output statistics of random binning (OSRB) \cite{OSRBAmin} is applied in the achievability proof of Theorem~\ref{theo:SRDPNoiseless}. The key idea of OSRB proofs is to show that the induced probability distributions of two coding schemes are asymptotically equal. This means that the rates that are later derived for one coding scheme, random binning that is called Protocol A in \cite{OSRBAmin}, will apply to the other coding scheme, random coding that is called Protocol B in \cite{OSRBAmin}. Specific choices of the encoder and decoder in these two coding, or binning, schemes are made in the ingress to the proofs of the theorems and corollaries in this paper to show that the induced probability distributions converge asymptotically, which follow below.

Fix a distribution $P_{XUY}$ such that $\mathbb{E}[d(X, Y)] \leq D + \epsilon_n$, in which $\epsilon_n > 0$ and $\epsilon_n\rightarrow 0$ when $n\rightarrow\infty$. Introduce a random variable $F$, drawn uniformly at random according to $P_{F} = \text{Unif}[1:2^{n\widetilde{R}}]$, representing the public choice of the encoder and decoder pair. Both the encoder and decoder have access to $F$. We generate the auxiliary random variable $U^{n}$ and assign bin indices $C$, $F$, and $S$ independently and uniformly to each sequence $u^{n} \in \mathcal{U}^{n}$. This generation and bin assignment is done in two ways: \\
(i) \emph{Random Binning} where the auxiliary random variable sequence $U^{n}$ is generated in an i.i.d manner according to $P_{U}^{n}$ such that the joint probability distribution factorizes as $P_{XU} = Q_{X}P_{U|X}$. The random bin assignment 

(1) assigns $C = \varphi_{1}(U^{n})$, where $\varphi_{1}:\mathcal{U}^{n} \rightarrow [1:2^{nR_{0}}]$ maps each sequence $u^{n}$ independently and uniformly at random to an index $P_{C|U^{n}} \sim \text{Unif}[1:2^{nR_{0}}]$; 

(2) assigns $F = \varphi_{2}(U^{n})$, where $\varphi_{2}:\mathcal{U}^{n} \rightarrow [1:2^{n\widetilde{R}}]$ maps each sequence $u^{n}$ independently and uniformly at random to an index $P_{F|U^{n}} \sim \text{Unif}[1:2^{n\widetilde{R}}]$; and 

(3) assigns $S = \varphi_{3}(U^{n})$, where $\varphi_{3}:\mathcal{U}^{n} \rightarrow [1:2^{nR}]$ maps each sequence $u^{n}$ independently and uniformly at random to an index $P_{S|U^{n}} \sim \text{Unif}[1:2^{nR}]$. 

The receiver recovers $U^{n}$ from $(C,F,S)$ using a decoder $P_{\hat{U}^{n}|CFS}^{\text{RB}}$ that produces the output $\hat{U}^{n}$ which is mapped to $Y^{n}$ according to $P^{\text{RB}}_{Y^{n}|\hat{U}^{n}}$; and\\
(ii) \emph{Random Coding}, where the common randomness $C$ and the selection of encoder-decoder pair $F$ are generated independently and uniformly at random according to $P_{C} \sim \text{Unif}[1:2^{nR_{0}}]$ and $P_{F} \sim \text{Unif}[1:2^{n\widetilde{R}}]$, respectively. The auxiliary random variable $U^{n}$ is generated according to the distribution $P_{U^{n}|CFX^{n}}^{\text{RB}}$, which is derived from the random binning and relates to the uniform random binning distributions by rewriting
\begin{equation}
    P_{CFU^{n}|X^{n}}^{\text{RB}} = P_{U^{n}|X^{n}} P_{C|U^{n}} P_{F|U^{n}} = P_{CF} P_{U^{n}|CFX^{n}}^{\text{RB}}.
\end{equation}
The selections of $S$ is the same as for the random binning, according to $P^{\text{RB}}_{S|U^{n}}$. Recovering $U^{n}$ is done with the same decoder $P_{\hat{U}^{n}|CFS}^{\text{RB}}$ as in the random binning with the output $\hat{U}^{n}$, which is mapped to $Y^{n}$ in the same way as in the random binning according to $P^{\text{RB}}_{Y^{n}|\hat{U}^{n}}$.

The two ways of assigning bin indices induce two different joint distributions, denoted $P^{\text{RB}}$ and $P^{\text{RC}}$ for the random binning and the random coding respectively.
\begin{align}
    P^{RB} :=& Q^{n}_{X} P_{U^{n}|X^{n}} P_{C|U^{n}} P_{F|U^{n}} P^{\text{RB}}_{S|U^{n}} P_{\hat{U}^{n}|CFS}^{\text{RB}} P^{\text{RB}}_{Y^{n}|\hat{U}^{n}} \nonumber \\  ~~~~~~=& P_{CFX^{n}} P_{U^{n}|CFX^{n}}^{\text{RB}} P_{S|U^{n}}^{\text{RB}} P^{\text{RB}}_{\hat{U}^{n}|CFS} P^{\text{RB}}_{Y^{n}|\hat{U}^{n}} \label{eq:prb_noiseless} \\
    P^{RC} :=& P_{C} P_{F} Q^{n}_{X} P_{U^{n}|CFX^{n}}^{\text{RB}} P^{\text{RB}}_{S|U^{n}} P^{\text{RB}}_{\hat{U}^{n}|CFS} P^{\text{RB}}_{Y^{n}|\hat{U}^{n}} \label{eq:prc_noiseless} .
\end{align}

The TV distance between $P^{\text{RB}}$ and $P^{\text{RC}}$ can be upper-bounded via the following triangle inequality
\begin{align}
    &||P^{RB} -P^{RC}||_{\text{TV}} \leq \nonumber \\
    &||P^{RB} \!-\! P_{C}P_{F}Q^{n}_{X}P_{U^{n}|CFX^{n}}^{\text{RB}} 
    P^{\text{RB}}_{S|U^{n}}P^{\text{RB}}_{\hat{U}^{n}|CFS}   \nonumber \\ & \qquad\qquad \cdot  P^{\text{RB}}_{Y^{n}|\hat{U}^{n}}||_{\text{TV}} \nonumber \\ 
    & \qquad\!+\! ||P_{C}P_{F}Q^{n}_{X}P_{U^{n}|CFX^{n}}^{\text{RB}}P^{\text{RB}}_{S|U^{n}}P^{\text{RB}}_{\hat{U}^{n}|CFS} P^{\text{RB}}_{Y^{n}|\hat{U}^{n}} \nonumber \\ &\qquad\qquad\qquad \!-\! P^{\text{RC}}||_{\text{TV}} \nonumber \\
    &\stackrel{(a)}{=} ||P^{RB} \!-\! P_{C}P_{F}Q^{n}_{X}P_{U^{n}|CFX^{n}}^{\text{RB}}P^{\text{RB}}_{S|U^{n}}P^{\text{RB}}_{\hat{U}^{n}|CFS}  \nonumber \\ & \qquad\qquad\qquad \cdot P^{\text{RB}}_{Y^{n}|\hat{U}^{n}}||_{\text{TV}} \nonumber \\ &\stackrel{(b)}{=} ||P_{CFX^{n}} \!-\!P_{C}P_{F}Q^{n}_{X}||_{\text{TV}} \label{eq:TVDsums}
\end{align}
where (a) follows from the definition of $P^{\text{RC}}$, (b) follows by \cite[Lemma 17]{CuffsThesis}. The TV distance in (\ref{eq:TVDsums}) asymptotically tend to zero if we impose the rate constraints given below.

First, the indices $C$ and $F$ are almost independent of $X^{n}$ and jointly uniformly distributed if we have \cite[Theorem~1]{OSRBAmin}
\begin{equation}
    \widetilde{R} + R_{0} < H(U|X). \label{eq:indep_of_S_C}
\end{equation}
The public message index $S$ and the public index $F$ are almost independent of $Y^{n}$ and jointly uniformly distributed if we have \cite[Theorem 1]{OSRBAmin}
\begin{equation}
    \widetilde{R} + R < H(U|Y), \label{eq:secrect_rdp}
\end{equation}
which ensures that the secrecy constraint in (\ref{eq:secrecy_SRDP}) is satisfied, as the alphabet $\mathcal{Y}=\mathcal{X}$ is finite, see also the statements from \cite[Eq.~2]{OSRBAmin} in \cite[Theorem~1]{OSRBAmin}.

Finally, $P_{U^{n}|CFX^{n}}^{\text{RB}}$, that is used in both $P^{\text{RC}}$ and $P^{\text{RB}}$, is fixed to be a Slepian-Wolf decoder $P^{\text{SW}}_{\hat{U}^{n}|CFS}$, which can recover $U^{n}$ from $(S,C,F)$ reliably with an error tends to zero as $n \rightarrow \infty$, if we have \cite[Lemma 1]{OSRBAmin}
\begin{equation}
    \widetilde{R} + R_{0} + R > H(U) \label{eq:reconstruction_rates},
\end{equation}
which makes the induced probability distribution of the decoder nearly equal
\begin{equation}
    P^{\text{RB}}_{\hat{U}^{n}|CFS} P_{Y^{n}|\hat{U}^{n}}^{\text{RB}} \approx P^{\text{RB}}_{U^{n}|CFS} P_{Y^{n}|U^{n}}^{\text{RB}} \mathbb{1}\{ \hat{U}^{n} = U^{n}\}
\end{equation}
Applying Fourier-Motzkin elimination \cite[Appendix D]{Elgamalbook} to (\ref{eq:indep_of_S_C})-(\ref{eq:reconstruction_rates}) results in the following achievable rates
\begin{align}
    &R = I(U;X) + \delta \label{eq:message_case1},\\
    &R_{0} = I(U;Y) + \delta \label{eq:common_rand_case1}
\end{align}
for any $\delta > 0$ that tends to zero as $n \rightarrow \infty$. 

Next, the TV distance between the desired output probability distribution in the realism constraint (\ref{eq:real_SRDP}) and the induced output probability distribution of the codebook $U^{n}(c,s)$ is
\begin{equation}
    \left|\left| \sum_{c,s} \frac{P_{Y^{n}|U^{n}}(y^{n}|U^{n}(c,s))}{2^{nR_{0}} 2^{nR}} - Q_{X}^{n} \right|\right|_{\text{TV}}.
\end{equation}
The distance between the assumed input probability distribution $Q_{X}^{n}$ and the induced input probability distribution of the codebook $U^{n}(c,s)$ in TV is
\begin{equation}
    \frac{1}{2^{nR_{0}}} \left|\left| \sum_{c,s} \frac{P_{X^{n}|U^{n}}(x^{n}|U^{n}(c,s))}{2^{nR}} - Q_{X}^{n} \right|\right|_{\text{TV}} \label{eq:induced_x_noiseless}
\end{equation}
where the probability distribution of the common randomness $C$ is moved outside the TV distance since it is independent of $X^{n}$. By the soft covering lemma \cite[Lemma~IV.1]{CuffChannelSynthesis}, the TV distances above go to zero if $R_{0} + R > I(U;Y)$ and $R > I(U;X)$. The two rates are satisfied by (\ref{eq:message_case1}) and (\ref{eq:common_rand_case1}), similar to the proof of Theorem 2 in \cite{Wagner2022rate}. 
Moreover, by the typical average lemma \cite[pp.~26]{Elgamalbook}, the distortion constraint (\ref{eq:dist_SRDP}) is satisfied since all sequence tuples $(x^n, u^n, y^n)$ are in the jointly typical set with high probability.

Equation (\ref{eq:TVDsums}) shows that the two induced distributions from the random binning and the random coding schemes are asymptotically equal, by \cite[2) in Lemma~3]{OSRBAmin} there exists an instance of $F=f$ such that when conditioning on this instance of $F=f$ the distance between $P^{\text{RB}}$ and $P^{\text{RC}}$ in (\ref{eq:TVDsums}) in TV is upper bounded as follows
\begin{align}
    &||P^{\text{RB}}_{CFX^{n}} - P^{\text{RC}}_{CFX^{n}}||_{\text{TV}} \leq \delta_{\text{TV}} \nonumber \\ 
    & \qquad\implies ||P^{\text{RB}}_{CX^{n}|F=f} - P^{\text{RC}}_{CX^{n}|F=f}||_{\text{TV}} \leq 2\delta_{\text{TV}} \label{eq:lemma3bound}
\end{align}
where $\delta_{\text{TV}} > 0$ vanishes when $n$ goes to infinity when we impose the rates in (\ref{eq:message_case1}) and (\ref{eq:common_rand_case1}). We can now identify $(P^{\text{RB}}_{U^{n}|CFX^{n}}(u^{n}|c,F=f,x^{n}), P^{\text{RB}}_{S|U^{n}}(s|u^{n}))$ as the encoder and $(P^{\text{SW}}_{\hat{U}^{n}|CFS}(\hat{u}^{n}|c,F=f,s), P_{Y^{n}|\hat{u}^{n}}(y^{n}|u^{n}))$ as the decoder, where the superscript SW means that it's the Slepian-Wolf decoder that was fixed earlier in the proof, to have an encoder-decoder design that achieves the constraints (\ref{eq:real_SRDP})-(\ref{eq:secrecy_SRDP}). 

\emph{Proof sketch for converse:} Assume there exists a tuple $(R, R_{0}, D)$ that satisfies (\ref{eq:real_SRDP})-(\ref{eq:secrecy_SRDP}) for some $\epsilon > 0$ and $n \geq 1$. Let $U_{i} \triangleq (S,C,X^{i-1})$ define an auxiliary random variable such that $X_i-U_i-Y_i$ forms a Markov chain. Let $T$ be a time-sharing random variable $T \in [1:n]$ that is independent of all other random variables and uniformly distributed over $[1:n]$.

We have
\begin{align}
    nR \geq& H(S|C) \geq I(X^{n};S|C) \stackrel{(a)}{=} I(X^{n};SC) \nonumber\\ 
    =& \sum^{n}_{i=1} I(X_{i};SC|X^{i-1})\stackrel{(b)}{=} \sum^{n}_{i=1} I(X_{i};SCX^{i-1})\nonumber\\
    \stackrel{(c)}{=}& nI(X_{T};U_{T}|T)= nI(X_{T};U_{T}T) \nonumber \\
    \stackrel{(d)}{=}& nI(X;U)
\end{align}
where (a) and (b) follow because $X^{n}$ is i.i.d and independent of $C$, (c) follows from the definition of $U_i$ and $T$, and (d) follows by defining $U \triangleq (U_{T}, T)$ and $X \triangleq X_{T}$.

We obtain
\begin{align}
    nR_{0} \geq& H(C|S) \geq I(Y^{n};C|S)\stackrel{(a)}{\geq} I(Y^{n};CS) - f(\epsilon) \nonumber \\ 
    =& \sum^{n}_{i=1} I(Y_{i};CS|Y^{i-1}) - f(\epsilon) \nonumber \\
    =& \sum^{n}_{i=1} [I(Y_{i};CSY^{i-1}) - I(Y_{i};Y^{i-1})] - f(\epsilon) \nonumber \\
    \stackrel{(b)}{\geq}& \sum^{n}_{i=1} [I(Y_{i};CSY^{i-1}) - g(\epsilon)] - f(\epsilon) \stackrel{(c)}\nonumber\\
    {\geq}& \sum^{n}_{i=1} [I(Y_{i};CSX^{i-1}) - g(\epsilon)] - f(\epsilon) \nonumber \\
    \stackrel{(d)}{=}& nI(Y_{T};U_{T}|T) - 2ng(\epsilon) - f(\epsilon)\nonumber\\ 
    {=}& nI(Y_{T};U_{T}T) - 2ng(\epsilon) - f(\epsilon) \nonumber \\
    \stackrel{(e)}{=}& nI(Y;U) - 2ng(\epsilon) - f(\epsilon)
\end{align}
where (a) follows for a function $f(\epsilon)$ that tends to zero as $\epsilon \rightarrow 0$, given by the secrecy constraint in (\ref{eq:secrecy_SRDP}), (b) follows by (\ref{eq:real_SRDP}) as $Y^{n}$ is nearly i.i.d and for a function $g(\epsilon)$ that tends to zero as $n \rightarrow \infty$, given by the realism constraint (\ref{eq:real_SRDP}), (c) follows by the data processing inequality applied to the Markov chain $(C,S,X^{i-1})-(Y^{i-1},C,S)-Y_{i}$, in (d) follows by the definition of $U_{i}$ and $T$, and because $Y^{n}$ is nearly i.i.d, and (e) follows by defining $U \triangleq (U_{T}, T)$ and $Y \triangleq Y_{T}$.

We have
\begin{align}
    D + \epsilon \stackrel{(a)}{\geq}& \mathbb{E}[d(X^{n},Y^{n})] = \frac{1}{n} \sum^{n}_{i=1} \mathbb{E}[d(X_{i},Y_{i})] \nonumber \\
    \stackrel{(b)}{\geq}& \mathbb{E}[d(X_{T},Y_{T})|T] + \epsilon' \nonumber \\
    \stackrel{(c)}{=}& \mathbb{E}[d(X,Y)] + \epsilon'\label{eq:dist_conv}
\end{align}
where (a) follows by (\ref{eq:dist_SRDP}), (b) follows since $Y^{n}$ is almost i.i.d for an $\epsilon'$ that goes to zero when $n \rightarrow \infty$ and (c) follows by defining $X \triangleq X_{T}$ and $Y \triangleq Y_{T}$.

Consider $\mathcal{X} \!\times\! \mathcal{Y} \!=\!\mathcal{X}\! \times\! \mathcal{X} \!=\! \{ (x_{1}, y_{1}), (x_{1},y_{2}), (x_{2}, y_{1}), (x_{2}, y_{2}), \dots , (x_{|\mathcal{X}|}, y_{|\mathcal{X}|}) \}$, where all tuples are indexed by $k \in [1:|\mathcal{X}|^{2}]$. Consider also the following $|X|^{2} + 1$ real-valued continuous functions on the connected and compact subset of all probability distributions on $\mathcal{X} \times \mathcal{X}$
\begin{align}
    f_{k}(P_{XY}) =& \begin{cases}
        P_{XY}((x_{i},y_{j})_{k}) & \text{for}~k \in [1 : |\mathcal{X}|^{2} - 1] \\
        H(X) & \text{for}~k = |\mathcal{X}|^{2} \\
        H(Y) & \text{for}~k = |\mathcal{X}|^{2} + 1 \\
    \end{cases}
\end{align}
where $i,j\in[1:|\mathcal{X}|]$. By the support lemma \cite[Lemma 15.4]{Csiszarbook}, there exists a random variable $U'$ that takes at most $|\mathcal{X}|^{2} + 1$ values such that we preserve $P_{XY}$, $H(X)$, $H(Y)$, $H(X|U)$, and $H(Y|U)$ if we replace $U$ with $U'$ so that the cardinality bound in Theorem~\ref{theo:SRDPNoiseless} follows.
\end{IEEEproof}

Next, we provide an inner bound for the secure noisy RDP region, depicted in Fig.~\ref{fig:noisy_SRDP_broadcast}.
\begin{theorem}\label{theo:SRDPnoisy}
    The rate region $\mathcal{R}_{N}$ for a given i.i.d source distribution $Q_{X}$ and a memoryless noisy BC $P_{\widetilde{Y} \widetilde{Z}|\widetilde{X}}$ includes the union overall joint probability distributions $P_{XW_1YW_2}$, where $P_{Y} \approx Q_{X}$, of the set of all $(R, R_{0}, D)$ tuples that satisfy (\ref{eq:distortionfromTheorem1}) and
    \begin{align}
        &I(W_{1};X) \leq R \leq I(W_{2};\widetilde{Y}), \\
        &R_{0} \geq I(W_{1};Y) + I(W_{2};\widetilde{Z}) - I(W_{2};\widetilde{Y})
    \end{align}
    such that $X\!-\!W_{1}\!-\!Y$ and $W_2\!-\!\widetilde{X}\!-\!(\widetilde{Y},\widetilde{Z})$ form Markov chains, and $(X, W_{1}, Y)$ is mutually independent of $(W_{2},\widetilde{X},\widetilde{Y},\widetilde{Z})$. It suffices to have $|\mathcal{W}_1|\!\leq\! |\mathcal{X}|^2\!+\!1$ and $|\mathcal{W}_2|\!\leq\! |\mathcal{\widetilde{X}}|\!+\!1$.
\end{theorem}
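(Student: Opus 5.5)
Since the region in Theorem~\ref{theo:SRDPnoisy} is only an inner bound, I only need an achievability proof. I would build it by separation, reusing the OSRB scheme from the proof of Theorem~\ref{theo:SRDPNoiseless} for the source part and a wiretap-type random-binning code for the channel part. I would also run the channel code in a joint OSRB framework, so that the secrecy analysis covers $I(Y^n;\widetilde{Z}^n)$ directly, not just the secrecy of the message index.

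\emph{Source layer.} Fix $P_{XW_1Y}$ with $X-W_1-Y$ and $\mathbb{E}[d(X,Y)]\le D+\epsilon_n$. Generate $W_1^n$ i.i.d. and assign to each sequence three independent uniform bins: $C\in[1:2^{nR_0'}]$, $F_1\in[1:2^{n\widetilde{R}_1}]$, and $S\in[1:2^{nR}]$, exactly as in the proof of Theorem~\ref{theo:SRDPNoiseless}. The rate conditions are:
\begin{itemize}
\item $\widetilde{R}_1+R_0'<H(W_1|X)$, so that $(C,F_1)$ is independent of $X^n$;
\item $\widetilde{R}_1+R_0'+R>H(W_1)$, so that Slepian--Wolf recovery of $W_1^n$ is possible;
\item $\widetilde{R}_1+R<H(W_1|Y)$, so that $(S,F_1)$ is nearly uniform and independent of $Y^n$.
\end{itemize}
The last condition is the key one: it implies that any function of $S$ (and of independent channel randomness) is nearly independent of $Y^n$.

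\emph{Channel layer.} Fix $P_{W_2\widetilde{X}}$. Generate $W_2^n$ i.i.d. and bin it into the message $S$ (rate $R$), a local randomization index $K$ (rate $R_K$), and a public index $F_2$ (rate $\widetilde{R}_2$). Then pass $W_2^n$ through $P_{\widetilde{X}|W_2}$ and the channel. The conditions are:
\begin{itemize}
\item $R+R_K+\widetilde{R}_2<H(W_2)$, so that $(S,K,F_2)$ is uniform; this allows $S$ to be generated by the source layer and $K$ locally;
\item $R+R_K+\widetilde{R}_2>H(W_2|\widetilde{Y})$, so that the decoder can recover $W_2^n$ from $(\widetilde{Y}^n,F_2)$.
\end{itemize}
Because $(X,W_1,Y)$ is independent of the channel variables, the product distribution factorizes. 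Carrying out the triangle-inequality and \cite[Lemma~17]{CuffsThesis} argument of (\ref{eq:TVDsums}) on the combined scheme reduces the problem to the independence conditions above. Fixing $F_1=f_1$ and $F_2=f_2$ via \cite[Lemma~3]{OSRBAmin} then yields a deterministic public choice. Realism and distortion follow as in Theorem~\ref{theo:SRDPNoiseless}, since $\hat{S}=S$ with high probability and $\hat{W}_1^n=W_1^n$.

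\emph{Secrecy, the main obstacle.} Here I need $I(Y^n;\widetilde{Z}^n)\to 0$. The leakage $\widetilde{Z}^n$ depends on the source side only through $S$, so I would bound it by splitting the analysis according to how much $\widetilde{Z}^n$ reveals about $S$:
\begin{itemize}
\item If $R>I(W_2;\widetilde{Z})$, the standard wiretap condition, I would require $R_K+\widetilde{R}_2<H(W_2|\widetilde{Z})$. Then the indices are almost independent of $\widetilde{Z}^n$, and together with $Y^n-S-\widetilde{Z}^n$ this gives strong secrecy.
\item If $R<I(W_2;\widetilde{Z})$, Eve can learn $S$ entirely. In that case secrecy must come from the source layer, via $(S,F_1)\perp Y^n$.
\end{itemize}
Both cases need extra common randomness to make up for what the channel fails to hide. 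I would expect the final condition to be $R_0\ge I(W_1;Y)+I(W_2;\widetilde{Z})-I(W_2;\widetilde{Y})$, obtained by letting part of the common randomness also serve as the randomization index $K$ of the channel code. That is, $C$ is split into $(C_1,C_2)$ with $C_2=K$, and the decoder already knows $K$; this is the step where the channel-reliability term $I(W_2;\widetilde{Y})$ enters. Fourier--Motzkin elimination of $\widetilde{R}_1,\widetilde{R}_2,R_K$ and the split rates should then give $I(W_1;X)\le R\le I(W_2;\widetilde{Y})$ and the stated bound on $R_0$.

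Getting the Fourier--Motzkin bookkeeping to produce exactly this $R_0$ expression, while keeping the joint leakage analysis rigorous, is the step I expect to be hardest. The cardinality bounds follow from the support lemma as in Theorem~\ref{theo:SRDPNoiseless}, applied separately to $W_1$ (preserving $P_{XY}$, $H(X|W_1)$, $H(Y|W_1)$) and to $W_2$ (preserving $P_{\widetilde{X}}$ and two conditional entropies).
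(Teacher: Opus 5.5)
There is a genuine gap, and it sits in the step you defer. As you set it up, the source layer always imposes $\widetilde{R}_1+R<H(W_1|Y)$. Combined with the Slepian--Wolf condition $\widetilde{R}_1+R_0'+R>H(W_1)$, this already forces $R_0'>I(W_1;Y)$. You then take $K=C_2$ as a sub-index of that same $C$, so the total common-randomness rate never falls below $I(W_1;Y)$, whatever the channel code does. The theorem's bound $I(W_1;Y)+I(W_2;\widetilde{Z})-I(W_2;\widetilde{Y})$ is strictly smaller whenever $I(W_2;\widetilde{Y})>I(W_2;\widetilde{Z})$. So no Fourier--Motzkin bookkeeping on your constraint set can produce it.

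Your two-case secrecy argument does not close this gap:
\begin{itemize}
\item If the channel hides $S$ using local randomization, you only get $R< I(W_2;\widetilde{Y})-I(W_2;\widetilde{Z})$, not $R\le I(W_2;\widetilde{Y})$. You would also have to drop the source-layer secrecy constraint to gain anything.
\item If Eve can decode $S$, you are back at $R_0>I(W_1;Y)$.
\end{itemize}
Neither extreme gives the stated region when $0<I(W_2;\widetilde{Y})-I(W_2;\widetilde{Z})<I(W_1;X)<I(W_2;\widetilde{Y})$. The missing idea is a way for one secret index to act both as source-layer common randomness and as the key hiding the channel layer, with a leakage analysis that uses this jointly.

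Your channel-layer constraints are also misstated:
\begin{itemize}
\item Recovering $W_2^n$ from $(\widetilde{Y}^n,F_2)$ needs $\widetilde{R}_2>H(W_2|\widetilde{Y})$, or $\widetilde{R}_2+R_K>H(W_2|\widetilde{Y})$ if $K$ is shared. It does not need $R+R_K+\widetilde{R}_2>H(W_2|\widetilde{Y})$, which would let $R$ approach $H(W_2)$.
\item Secrecy of $S$ needs $(S,F_2)$, not $(K,F_2)$, to be nearly independent of $\widetilde{Z}^n$, i.e.\ $R+\widetilde{R}_2<H(W_2|\widetilde{Z})$.
\end{itemize}

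The paper avoids all of this with a single OSRB scheme on the product auxiliary $W=(W_1,W_2)$. It generates $W^n$ i.i.d.\ and bins the \emph{joint} sequence into $C$ (rate $R_0$) and a public index $F$ (rate $\widetilde{R}$). The encoder draws $W^n$ from $P^{\text{RB}}_{W^n|CFX^n}$ and sends $\widetilde{X}^n$ generated from $W_2^n$. The decoder Slepian--Wolf decodes $W^n$ from $(C,F,\widetilde{Y}^n)$. The conditions are $\widetilde{R}+R_0<H(W|X)$, $\widetilde{R}+R_0>H(W|\widetilde{Y})$, and $\widetilde{R}<H(W|Y,\widetilde{Z})$.

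The last condition suffices for secrecy for two reasons. In the binning protocol, $Y^n$ and $\widetilde{Z}^n$ are exactly independent by the product structure. Eve never sees $C$, so only $F$ must be nearly independent of $(Y^n,\widetilde{Z}^n)$.

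Eliminating $\widetilde{R}$ immediately gives $R_0>H(W|\widetilde{Y})-H(W|Y,\widetilde{Z})=I(W_1;Y)+I(W_2;\widetilde{Z})-I(W_2;\widetilde{Y})$ and $I(W_2;\widetilde{Y})>I(W_1;X)$. The constraint on $R$ is then handled by a soft-covering argument with communication rate $I(W_2;\widetilde{Y})$. Here the single index $C$ does the double duty your layered design cannot express. A separation-style proof would have to reproduce this explicitly.
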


Compared to the related problem of secure strong coordination in \cite[Theorem~1]{cervia2020secure}, we observe that the first term $I(W_{1};Y)$ in the bound on common randomness is reduced, similarly to the reduction of common randomness between the RDP result in \cite[Theorem~6]{DenizRDP} and the distributed channel simulation result in \cite[Theorem~II.1]{CuffChannelSynthesis}. An important difference between our results and \cite{cervia2020secure} is that our secrecy constraint is imposed only the output of the decoder, rather than on both the input and the output. The proof of Theorem~\ref{theo:SRDPnoisy} follows similarly to the achievability proof of Theorem~\ref{theo:SRDPNoiseless} with the difference being that rates for the unsecured communication between the encoder and the decoder needs to be derived in order to show that the rates required to achieve the realism constraint is satisfied by the rates given by the initial part of the proof using OSRB. The proof of Theorem~\ref{theo:SRDPnoisy} is provided in Appendix~\ref{app:ThmSRDPNoisy}.

Moreover, under the assumption that $P_{\widetilde{Y}\widetilde{Z}|\widetilde{X}}$ a more capable BC as described in Definition~\ref{def:morecapabledef}, the region $\mathcal{R}_{N,MCB} \subseteq \mathcal{R}_{N}$ is next shown to be exact. Below, we provide the exact region for this special case.
\begin{corollary}\label{cor:SRDPmoreCapable}
    The region $\mathcal{R}_{N,MCB}$ for a given i.i.d source distribution $Q_{X}$ with the assumption that the decoder has a more-capable channel over the memoryless BC $P_{\widetilde{Y}\widetilde{Z}|\widetilde{X}}$ is the union overall joint probability distributions $P_{XWY}$, where $P_{Y} \approx Q_{X}$, of the set of all $(R,R_{0},D)$ tuples that satisfy (\ref{eq:distortionfromTheorem1}) and
    \begin{align}
        &I(W;X) \leq R\leq  I(\widetilde{X};\widetilde{Y}), \\
        &R_{0} \geq I(W;Y) + I(\widetilde{X};\widetilde{Z}) - I(\widetilde{X};\widetilde{Y})
    \end{align}
    where $X\!-\! W\! -\! Y$ forms a Markov chain, $(X,W,Y)$ and $(\widetilde{X},\widetilde{Y},\widetilde{Z})$ are mutually independent, and $|\mathcal{W}|\! \leq\! |\mathcal{X}|^{2} \!+\! 1$.
\end{corollary}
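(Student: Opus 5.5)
For achievability, I would specialize Theorem~\ref{theo:SRDPnoisy}. I would choose $W_1 = W$ and $W_2 = \widetilde{X}$. The Markov chain $W_2-\widetilde{X}-(\widetilde{Y},\widetilde{Z})$ then holds trivially, and the independence of $(X,W,Y)$ and $(\widetilde{X},\widetilde{Y},\widetilde{Z})$ is inherited. The constraints of Theorem~\ref{theo:SRDPnoisy} reduce exactly to $I(W;X)\le R\le I(\widetilde{X};\widetilde{Y})$ and $R_0\ge I(W;Y)+I(\widetilde{X};\widetilde{Z})-I(\widetilde{X};\widetilde{Y})$. The cardinality bound on $\mathcal{W}$ carries over from that of $\mathcal{W}_1$. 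The more-capable assumption is not needed for this direction. Its role is in the converse, where it shows that the choice $W_2=\widetilde{X}$ is without loss of optimality.

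For the converse, I would fix a code satisfying (\ref{eq:real_SRDP}), (\ref{eq:dist_SRDP}) and (\ref{eq:secrecy_noisyRDP}).

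\emph{Source side.} I would follow the converse of Theorem~\ref{theo:SRDPNoiseless}, with the message $S$ replaced by $\widetilde{Y}^n$. Define $W_i=(\widetilde{Y}^n,C,X^{i-1})$. Then $X_i-W_i-Y_i$ holds because $Y_i$ is a function of $(\widetilde{Y}^n,C)$ and private decoder randomness.

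\emph{Rate constraint on $R$.} I would write the chain
\[
nI(W;X)\le I(X^n;\widetilde{Y}^nC)\le I(\widetilde{X}^n;\widetilde{Y}^n)\le\sum_i I(\widetilde{X}_i;\widetilde{Y}_i),
\]
which uses the memorylessness of the channel. This yields $I(W;X)\le I(\widetilde{X};\widetilde{Y})$ with a time-shared $\widetilde{X}_T$. Since $R$ plays the role of a channel-side rate here, I would read the bound $R\le I(\widetilde{X};\widetilde{Y})$ through the choice of input distribution.

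\emph{Common randomness constraint.} I would start from
\[
nR_0\ge H(C)\ge I(C;Y^n\widetilde{Y}^n)-I(\ldots).
\]
More precisely, I would bound
\[
I(Y^n;C\widetilde{Y}^n)\le I(Y^n;C|\widetilde{Y}^n)+I(Y^n;\widetilde{Y}^n).
\]
The left-hand side is lower-bounded by $nI(W;Y)$ up to $o(n)$ terms, using near-i.i.d.\ realism as in step (b) of the proof of Theorem~\ref{theo:SRDPNoiseless}. For the remaining term I would use secrecy, $I(Y^n;\widetilde{Z}^n)\le\epsilon$, to write
\[
I(Y^n;\widetilde{Y}^n)\le I(Y^n;\widetilde{Y}^n)-I(Y^n;\widetilde{Z}^n)+\epsilon.
\]
Bounding this difference is the crux. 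I need
\[
I(Y^n;\widetilde{Y}^n)-I(Y^n;\widetilde{Z}^n)\le n\bigl[I(\widetilde{X};\widetilde{Y})-I(\widetilde{X};\widetilde{Z})\bigr],
\]
and then combine it with $I(Y^n;C|\widetilde{Y}^n)\le H(C)$.

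\emph{Main obstacle.} This single-letterization is where I expect the difficulty. I would use the Csisz\'ar sum identity to obtain
\[
\sum_i\bigl[I(V_i;\widetilde{Y}_i|A_i)-I(V_i;\widetilde{Z}_i|A_i)\bigr]
\]
with $A_i=(\widetilde{Y}^{i-1},\widetilde{Z}_{i+1}^n)$ and $V_i=(Y^n,A_i)$. Since $V_i-\widetilde{X}_i-(\widetilde{Y}_i,\widetilde{Z}_i)$, the more-capable property (\ref{eq:morecapabledef}) applied conditionally shows that
\[
I(V_i;\widetilde{Y}_i|A_i)-I(V_i;\widetilde{Z}_i|A_i)\le I(\widetilde{X}_i;\widetilde{Y}_i|A_i)-I(\widetilde{X}_i;\widetilde{Z}_i|A_i).
\]
The reason is that the extra term $I(\widetilde{X};\widetilde{Y}|V)-I(\widetilde{X};\widetilde{Z}|V)$ is nonnegative under more-capability, applied for each conditional input law. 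To then remove the conditioning on $A_i$, I would absorb it into the channel-input distribution (the union over $P_{\widetilde{X}}$ handles this). This step is the most delicate one, since removing the conditioning may require the region's union to be convex, which time-sharing provides.

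\emph{Finishing.} I would finish with the distortion bound exactly as in (\ref{eq:dist_conv}) and the support-lemma cardinality argument from Theorem~\ref{theo:SRDPNoiseless}.
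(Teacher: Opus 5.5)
\textbf{The crux step fails, and the converse cannot be rescued.} Your achievability half is the paper's: Theorem~\ref{theo:SRDPnoisy} with $W_1=W$, $W_2=\widetilde{X}$. Your converse follows the paper's route step for step: $nR_0\ge I(Y^n;C|\widetilde{Y}^n)$, then the secrecy constraint, then the Csisz\'ar-sum decomposition with $A_i=(\widetilde{Y}^{i-1},\widetilde{Z}_{i+1}^n)$ and $V_i=(Y^n,A_i)$, then more-capability. The extra $X^{i-1}$ in your $W_i$ is harmless, since $Y^n-(C,\widetilde{Y}^n)-X^n$ gives $I(Y_i;W_i)=I(Y_i;C\widetilde{Y}^n)$, which is what the paper's $W_t=(C,\widetilde{Y}^n)$ yields. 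The step that fails is the one you flagged as the crux. The Markov chain $V_i-\widetilde{X}_i-(\widetilde{Y}_i,\widetilde{Z}_i)$ is false, because the decoder computes $Y^n$ from $\widetilde{Y}^n$, which contains $\widetilde{Y}_i$ itself. The wiretap-channel argument you are borrowing needs the conditioning variable to be a message upstream of the channel; the decoder output is downstream. The paper asserts the same Markov chain in step (e) of (\ref{eq:last_conv_eq}), so this is a gap you share with it, not a deviation from it.

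No smarter auxiliary can repair it, because both the inequality you need and the converse itself are false. Let $Q_X=\text{Bern}(1/2)$ with Hamming distortion, $\widetilde{Y}=\widetilde{X}\oplus N$ with $N\sim\text{Bern}(p)$ and $0<p<1/2$, and $\widetilde{Z}$ constant; this channel is more-capable. Uncoded transmission, $\widetilde{X}^n=X^n$ and $Y^n=\widetilde{Y}^n$, gives exact realism, $I(Y^n;\widetilde{Z}^n)=0$, $D=p$ and $R_0=0$. Yet $I(Y^n;\widetilde{Y}^n)-I(Y^n;\widetilde{Z}^n)=n>n(1-H_{b}(p))$, so your crux inequality fails.

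Moreover, a $W$ placing this tuple in the claimed region would need $I(W;X)\le 1-H_{b}(p)$ and $I(W;Y)\le 1-H_{b}(p)$. But $P(X\neq Y)\le p$ with uniform marginals gives $I(X;Y)\ge 1-H_{b}(p)$ by Fano. Then $X-W-Y$ and data processing force $I(W;X)=I(W;Y)=I(X;Y)$, i.e., $I(W;X|Y)=I(W;Y|X)=0$. Equality in Fano makes $P_{XY}$ the full-support doubly symmetric binary source. Hence $W$ is independent of $(X,Y)$, so $X$ is independent of $Y$, a contradiction. Compactness with $|\mathcal{W}|\le 5$ keeps the tuple outside the closure as well.

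So only the inner-bound half of the statement holds. A joint code can let the decoder's channel noise act as randomness, which the separated structure $(X,W,Y)\perp(\widetilde{X},\widetilde{Y},\widetilde{Z})$ cannot represent, and no converse along these lines can be completed.
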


Corollary~\ref{cor:SRDPmoreCapable} result is similar to \cite[Corollary~1]{cervia2020secure}, but with a reduction in the required common randomness rate that corresponds to the secure communication rate of $I(\widetilde{X};\widetilde{Y}) - I(\widetilde{X};\widetilde{Z})$ between the encoder and the decoder. The proof for Corollary~\ref{cor:SRDPmoreCapable} is given below.

\begin{IEEEproof}[Proof sketch]
The achievability of Corollary~\ref{cor:SRDPmoreCapable} follows from the inner bound in Theorem~\ref{theo:SRDPnoisy} by setting $W_{1} = W$ and $W_{2} = \widetilde{X}$. The converse proof for Corollary~1 is given below.

Assume there exists a $(R, R_{0}, D)$ tuple that satisfied (\ref{eq:real_SRDP}), (\ref{eq:dist_SRDP}), and (\ref{eq:secrecy_noisyRDP}) for some $\epsilon > 0$ and $n \geq 1$. Define a time-sharing random variable $T \in [1:n]$ with distribution $T \sim \text{Unif}[1:n]$ that is independent of all other random variables in the model. Let $W_{t}=(C,\widetilde{Y}^{n})$ define the auxiliary random variable $W_{t}$.

We have 
\begin{align}
    0 \stackrel{(a)}{\leq}& I(\widetilde{X}^{n};\widetilde{Y}^{n}) - I(\widetilde{Y}^{n};X^{n}C) \nonumber \\
    \leq& I(\widetilde{X}^{n};\widetilde{Y}^n) - I(\widetilde{Y}^{n};X^{n}|C)  \nonumber\\ 
    \stackrel{(b)}{=}& nI(\widetilde{X};\widetilde{Y}) - \sum^{n}_{t=1} I(\widetilde{Y}^{n};X_{t}|X^{t-1}C) \nonumber \\
    \stackrel{(c)}{=}& nI(\widetilde{X};\widetilde{Y}) - \sum^{n}_{t=1} I(\widetilde{Y}^{n}X^{t-1}C;X_{t}) \nonumber \\ 
    \leq& nI(\widetilde{X};\widetilde{Y}) - \sum^{n}_{t=1} I(\widetilde{Y}^{n}C;X_{t}) \nonumber \\
    \stackrel{(d)}{=}& n[I(\widetilde{X};\widetilde{Y}) - I(W_{T};X_{T}|T)] \nonumber \\ 
    \stackrel{(e)}{=}& n[I(\widetilde{X};\widetilde{Y}) - I(W;X)] \label{eq:combination_ch_src}
\end{align}
where (a) follows from applying the data processing inequality in the Markov chain $(X^{n}, C)-\widetilde{X}^{n}-\widetilde{Y}^{n}$, (b) follows since the channel $P_{\widetilde{Y}\widetilde{Z}|\widetilde{X}}$ is memoryless, (c) follows since $X^{n}$ is i.i.d and independent of $C$, (d) follows from identifying $W_{T}$ and the definition of $T$, and (e) follows from defining $W \triangleq (W_{T}, T)$ and $X \triangleq X_{T}$.

We observe
\begin{align}
    nR_{0} \geq& H(C|\widetilde{Y}^{n})- H(C|\widetilde{Y}^{n}Y^{n}) = I(Y^{n};C|\widetilde{Y}^{n}) \nonumber \\
    =& I(Y^{n};C\widetilde{Y}^{n}) - I(Y^{n};\widetilde{Y}^{n}) \nonumber
\end{align}
\begin{align}
    \stackrel{(a)}{\geq}&\! I(Y^{n};C\widetilde{Y}^{n}) \!-\! I(Y^{n};\widetilde{Y}^{n}) \!+\! I(Y^{n};\widetilde{Z}^{n}) \!-\! f(\epsilon) \label{eq:main_cor_conv}
\end{align}
where (a) follows by the secrecy constraint (\ref{eq:secrecy_noisyRDP}) for a function $f(\epsilon)$ that tends to zero when $\epsilon$ tends to zero.

For the first term in (\ref{eq:main_cor_conv}), we have
\begin{align}
    I(Y^{n};C,\widetilde{Y}^{n}) =& \sum^{n}_{t=1} I(Y_{t};C\widetilde{Y}^{n}|Y^{t-1}) \nonumber \\
    \stackrel{(a)}{\geq}& \sum^{n}_{t=1} I(Y_{t};C\widetilde{Y}^{n}Y^{t-1}) - ng(\epsilon) \nonumber \\ 
    \geq& \sum^{n}_{t=1} I(Y_{t};C\widetilde{Y}^{n}) - ng(\epsilon) \nonumber \\
    \stackrel{(b)}{=}& nI(Y_{T};W_{T}|T) - 2ng(\epsilon) \nonumber \\ 
    =& n[I(Y_{T};W_{T}T) - I(Y_{T};T)] - 2ng(\epsilon) \nonumber \\ 
    \stackrel{(c)}{\geq}& nI(Y_{T};W_{T}T) - 3ng(\epsilon) \nonumber \\
    \stackrel{(d)}{=}& nI(Y;W) - 3ng(\epsilon)
\end{align}
where (a) follows by \cite[Lemma~3]{cervia2020secure} since $Y^{n}$ is nearly i.i.d by the realism constraint (\ref{eq:real_SRDP}), (b) follows by the definition of $W_{t}$ and $T$, (c) follows from \cite[Lemma~VI.3]{CuffChannelSynthesis} and by (\ref{eq:real_SRDP}), and (d) follows from defining $Y \triangleq Y_{T}$. 

For the second and third terms in (\ref{eq:main_cor_conv}), we have
\begin{align}
    &I(Y^{n};\widetilde{Y}^{n}) - I(Y^{n};\widetilde{Z}^{n}) \nonumber \\ 
    &\stackrel{(a)}{=} \sum^{n}_{t=1} [I(Y^{n};\widetilde{Y}_{t}|\widetilde{Y}^{t-1},\widetilde{Z}^{n}_{t+1}) - I(Y^{n};\widetilde{Z}_{t}|\widetilde{Y}^{t-1},\widetilde{Z}^{n}_{t+1})] \nonumber \\ 
    &\stackrel{(b)}{=} \sum^{n}_{t=1} [I(W_{2,t};\widetilde{Y}_{t}|W_{3,t}) - I(W_{2,t};\widetilde{Z}_{t}|W_{3,t})] \nonumber \\ 
    &\stackrel{(c)}{\leq} n[I(W_{2,T};\widetilde{Y}_{T}|W_{3,T},T) - I(W_{2,T};\widetilde{Z}_{T}|W_{3,T},T)] \nonumber \\ 
    &\qquad+ ng'(\epsilon) \nonumber \\ 
    &\stackrel{(d)}{=} n[I(W_{2};\widetilde{Y}_{T}|W_{3}) - I(W_{2};\widetilde{Z}_{T}|W_{3})] + ng'(\epsilon) \nonumber\\
    &\leq n \max_{t} [I(W_{2};\widetilde{Y}_{T}|W_{3}=t) - I(W_{2};\widetilde{Z}_{T}|W_{3}=t)] \nonumber \\ 
    &\qquad + ng'(\epsilon) \nonumber \\ 
    &\stackrel{(e)}{=} n[I(\widetilde{W}_{2};\widetilde{Y}) - I(\widetilde{W}_{2};\widetilde{Z})] + ng'(\epsilon) \nonumber \\ 
    &\stackrel{(f)}{\leq} n[I(\widetilde{X};\widetilde{Y}) - I(\widetilde{X};\widetilde{Z})] + ng'(\epsilon) \label{eq:last_conv_eq}
\end{align}
where (a) follows by Csiszár's sum identity \cite[pp.~25]{Elgamalbook} and the memorylessness of the channel $P_{\widetilde{Y}\widetilde{Z}|\widetilde{X}}$, (b) follows by letting $W_{3,t} \triangleq (\widetilde{Y}^{t-1}, \widetilde{Z}^{n}_{t+1})$ and $W_{2,t} \triangleq (Y^{n}, W_{3,t})$, (c) follows since $Y^{n}$ is almost i.i.d by (\ref{eq:real_SRDP}) and for a function $g'(\epsilon)$ such that $g'(\epsilon)$ tends to zero when $\epsilon$ tends to zero, (d) follows by letting $\widetilde{Y} \triangleq \widetilde{Y}_{T}$ and $\widetilde{Z} \triangleq \widetilde{Z}_{T}$, (e) follows from $\widetilde{W}_{2} \sim P_{W_{2}|W_{3}=t^{*}}$ (where $t^{*}$ is the maximizer) and that $\widetilde{W}_{2} - \widetilde{X} - (\widetilde{Y},\widetilde{Z})$ forms a Markov chain, and finally (f) follows from the assumption that the decoder has a more capable channel than the eavesdropper. Equations (\ref{eq:main_cor_conv})-(\ref{eq:last_conv_eq}) combines and results in the common randomness bound in Corollary~\ref{cor:SRDPmoreCapable}.

The bound on the distortion follows as in (\ref{eq:dist_conv}). Furthermore, similar to the converse for Theorem~\ref{theo:SRDPNoiseless}, the cardinality bound follows by the support lemma and can be limited to $|\mathcal{W}|\! \leq\! |\mathcal{X}|^{2} \!+\! 1$.
\end{IEEEproof}

%
%
\section{Secure RDP over Noiseless Channels with Side Information}\label{sec:WCNC_results}
In this section, we provide the rate regions for secure RDP with different availabilities of side information when the encoder and the decoder communicate over a noiseless channel. 

We first provide the secure RDP region with side information available at both the encoder and the decoder, depicted in Fig.~\ref{fig:SecRDPwithSIatBoth}.

\begin{theorem}\label{theo:bothSI}
    The region $\mathcal{R}_{\text{SI,ED}}$ is the union over all joint probability distributions $P_{XZUY}\! =\!Q_{XZ} P_{UY|XZ}$, where $P_{Y} \approx Q_{X}$, of the set of all $(R, R_{0}, D)$ tuples that satisfy (\ref{eq:distortionfromTheorem1}) and
    \begin{align}
        & R \geq I(U;X|Z), \label{eq:comRateSIRxTx} \\
        & R_{0} \geq I(U;Y) - I(U;Z), \label{eq:crRateSIRxTx} \\
        & R + R_{0} \geq I(U;Y|Z) - H(Z|Y) \label{eq:sumRateSIRxTx}
    \end{align}
    such that $X - (U, Z) - Y$ forms a Markov chain and the cardinality $|\mathcal{U}|$ of $U$ can be limited to $|\mathcal{X}|^{2}|\mathcal{Z}| + 2$.
\end{theorem}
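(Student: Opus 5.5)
The proof will follow the template of Theorem~\ref{theo:SRDPNoiseless}: an OSRB achievability argument followed by a single-letter converse. The converse will be the harder half.

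\emph{Achievability.} Fix $P_{XZUY}=Q_{XZ}P_{U|XZ}P_{Y|UZ}$ with $\mathbb{E}[d(X,Y)]\le D$. Generate $U^n$ i.i.d.\ and bin it uniformly into $C\in[1:2^{nR_0}]$, a public choice $F\in[1:2^{n\widetilde{R}}]$, and the message $S\in[1:2^{nR}]$. In Protocol A (random binning) the joint law is $Q_{XZ}^nP_{U|XZ}^n$ times the bins. The decoder uses a Slepian--Wolf decoder $P^{\text{SW}}_{\hat U^n|CFSZ^n}$ and then outputs $Y^n\sim P^n_{Y|UZ}(\cdot|\hat U^n,Z^n)$. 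In Protocol B (random coding) $(C,F)$ are drawn uniformly and independently of $(X^n,Z^n)$, and $U^n\sim P^{\text{RB}}_{U^n|CFX^nZ^n}$. The OSRB constraints \cite[Theorem~1, Lemma~1]{OSRBAmin} are as follows:
\begin{align}
&R_0+\widetilde{R}<H(U|XZ), \label{eq:propCF}\\
&R+\widetilde{R}<H(U|Y), \label{eq:propSec}\\
&R+R_0+\widetilde{R}>H(U|Z). \label{eq:propSW}
\end{align}
Here \eqref{eq:propCF} makes $(C,F)$ nearly uniform and independent of $(X^n,Z^n)$, so the two protocols agree via \cite[Lemma~17]{CuffsThesis}. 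Constraint \eqref{eq:propSec} makes $(S,F)$ nearly independent of $Y^n$, which gives strong secrecy. Constraint \eqref{eq:propSW} guarantees Slepian--Wolf decoding with side information $Z^n$.

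For realism, the output law in Protocol A is exactly $P_Y^n$, and the closeness of $P^{\text{RB}}$ and $P^{\text{RC}}$ transfers this to the actual code. If that transfer turns out not to suffice, I would additionally invoke soft covering \cite[Lemma~IV.1]{CuffChannelSynthesis}, as in the proof of Theorem~\ref{theo:SRDPNoiseless}. Distortion follows from the typical average lemma, and a good instance $F=f$ exists by \cite[Lemma~3]{OSRBAmin}.

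Fourier--Motzkin elimination of $\widetilde{R}$ gives $R>I(U;X|Z)$ from \eqref{eq:propCF}--\eqref{eq:propSW} and $R_0>H(U|Z)-H(U|Y)=I(U;Y)-I(U;Z)$ from \eqref{eq:propSec}--\eqref{eq:propSW}. It also requires $H(U|Z)<H(U|XZ)+H(U|Y)$, a condition on the distribution rather than on the rates. I expect this to be absorbed by enlarging $U$, or through the sum-rate constraint \eqref{eq:sumRateSIRxTx}. How exactly the $-H(Z|Y)$ term in \eqref{eq:sumRateSIRxTx} emerges needs care. My first attempt would be to also bin $Z$-dependent randomness, or to apply the OSRB independence lemma to $(S,C,F)$ jointly with $Z^n$ against $Y^n$.

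\emph{Converse.} Set $U_i=(S,C,X^{i-1},Z^{i-1},Z_{i+1}^n)$, so that $X_i-(U_i,Z_i)-Y_i$ holds approximately; the decoder actually uses $(S,C,Z^n)$, so I may instead take $U_i=(S,C,Z^{n\setminus i},X^{i-1})$. Three bounds are needed.
\begin{itemize}
\item For $R$: $nR\ge H(S|CZ^n)\ge I(X^n;S|CZ^n)$, which single-letterizes to $nI(U;X|Z)$ by the i.i.d.\ property and the independence of $C$ from $(X^n,Z^n)$.
\item For $R_0$: start from $nR_0\ge I(Y^n;C|S)\ge I(Y^n;CS)-f(\epsilon)$, using secrecy. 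Adding $Z^n$ gives $I(Y^n;CSZ^n)-I(Y^n;Z^n|CS)$. The subtraction of $I(U;Z)$ would come from the independence of $C$ and $Z^n$, via $H(C)\ge I(C;Y^nZ^n|S)$.
\item For the sum rate: bound $H(SC)\ge I(SC;Y^n|Z^n)-H(Z^n|Y^n)$.
\end{itemize}
Near-i.i.d.\ realism, through \cite[Lemma~3]{cervia2020secure} and \cite[Lemma~VI.3]{CuffChannelSynthesis}, handles the $I(Y_i;Y^{i-1})$ terms. The cardinality bound $|\mathcal{X}|^2|\mathcal{Z}|+2$ then follows from the support lemma, preserving $P_{XZY}$ and the three conditional entropies.

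The main obstacle is matching the exact form of \eqref{eq:crRateSIRxTx} and \eqref{eq:sumRateSIRxTx} in the converse with a single auxiliary $U$ that works for all three bounds simultaneously, while keeping the Markov chain $X-(U,Z)-Y$.
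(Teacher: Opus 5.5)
\paragraph{Assessment} Your achievability is essentially sound. Appending an independent uniform component to $U$ does remove the condition $H(U|Z)<H(U|XZ)+H(U|Y)$ without changing any quantity in the region. The genuine gap is the converse for the common-randomness bound: you do not derive it, and with the auxiliaries you propose it cannot be derived. Any $U_i$ that omits $Z_i$, such as $(S,C,Z_{\{n\}\setminus i})$ or $(S,C,Z_{\{n\}\setminus i},X^{i-1})$, makes the target single-letter inequality false in general.

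\paragraph{Counterexample} Let $Z$ be independent of $X$ with $Z\sim Q_X$, take $R=R_0=0$, and let the decoder output the cyclic shift $Y_i=Z_{i+1}$ for $i<n$ and $Y_n=Z_1$. This code has exact realism, zero leakage and distortion $\mathbb{E}_{Q_X\times Q_X}[d(X,Y)]$. Yet $I(U_i;Y_i)=H(Z)$ and $I(U_i;Z_i)=0$ for every $i$, so your single-letterization would assert $R_0\ge H(Z)$. The tuple itself lies in the claimed region via a constant $U$, so only the auxiliary fails, not the theorem.

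The paper's chain uses $U_i=(S,C,Z_{\{n\}\setminus i})$ and does not escape this. Its steps (a)--(c) are identities, so the expression after (c) equals $H(C)$. The ``reordering'' in (d) then replaces $+H(Z^n|SY^n)$ by a second copy of $+H(Z^n|SC)$. The resulting intermediate bound $nR_0\ge I(Y^n;SCZ^n)-I(Z^n;SC)-g(\epsilon)$ already fails for constant $(S,C)$ and $Y^n=Z^n$ in the same source model. So do not try to follow that route.

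\paragraph{Missing idea} Put $Z_i$ into the auxiliary: $U_i=(S,C,Z^n)$. This choice works for three reasons.
\begin{itemize}
\item The chain $X_i-(U_i,Z_i)-Y_i$ holds exactly, because the decoder sees only $(S,C,Z^n)$ and private randomness.
\item Your bound on $R$ is unaffected, since $I(X_i;SCZ^n|Z_i)=I(X_i;SCZ_{\{n\}\setminus i}|Z_i)$.
\item Now $I(U;Z)=H(Z)$, so your own first step closes the argument. You have $nR_0\ge I(Y^n;SC)-\epsilon=I(Y^n;SCZ^n)-I(Y^n;Z^n|SC)-\epsilon\ge I(Y^n;SCZ^n)-nH(Z)-\epsilon$. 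Then use $I(Y^n;SCZ^n)\ge\sum_i I(Y_i;SCZ^n)-\sum_i I(Y_i;Y^{i-1})$. Near-i.i.d.\ realism makes $\sum_i I(Y_i;Y^{i-1})$ and $I(Y_T;T)$ small, which gives $R_0\ge I(U;Y)-I(U;Z)-\delta(\epsilon)$.
\end{itemize}

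\paragraph{Sum-rate constraint} This constraint needs no separate treatment in either direction. The identity $I(U;Y)-I(U;Z)=I(U;Y|Z)-H(Z|Y)+H(Z|U,Y)$ shows it is implied by the $R_0$ bound together with $R\ge 0$. This also settles your achievability worry about where $-H(Z|Y)$ comes from. The Fourier--Motzkin output of your three OSRB constraints is already the full region, so no extra binning of $Z$-dependent randomness is needed. The paper's appeal to Cuff's Corollary~VII.5 is likewise unnecessary.
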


The bounds on the communication rate (\ref{eq:comRateSIRxTx}) and the sum-rate (\ref{eq:sumRateSIRxTx}) in Theorem~\ref{theo:bothSI} are the same as in \cite[Theorem~8]{WagnerHamidGunduzarXiv2025} but our result differs in (\ref{eq:crRateSIRxTx}), which is not imposed in \cite[Theorem~8]{WagnerHamidGunduzarXiv2025}, due to the secrecy constraint. The achievability proof for Theorem~\ref{theo:bothSI} follows similarly to the achievability proof of Theorem~\ref{theo:SRDPNoiseless} and is provided in Appendix~\ref{app:thmBothSI}. The converse proof for Theorem~\ref{theo:bothSI} follows below.

\begin{IEEEproof}[Proof sketch for the converse proof for Theorem~\ref{theo:bothSI}]
Consider a tuple $(R, R_{0},D)$ that achieves (\ref{eq:real_SRDP}), (\ref{eq:dist_SRDP}), and (\ref{eq:secrecy_SRDPwithSI}). We define the auxiliary random variable $U_{i}\!=\!(S,C,Z_{\{n\} \backslash i})$ such that $X_{i}-(U_{i},Z_{i})-Y_{i}$ forms a Markov chain and a time-sharing random variable $T \sim \text{Unif}[1 : n]$, independent of other random variables. We have
\begin{align}
    nR_{0} \geq& H(C) - H(C|SY^{n}Z^{n}) + H(C|SY^{n}Z^{n}) \nonumber \\
    \stackrel{(a)}{=}& H(C) - H(Y^{n}|SCZ^{n}) - H(C|SZ^{n}) \nonumber \\
    &+ H(Y^{n}|SZ^{n}) + H(C|SY^{n}Z^{n}) \nonumber\\
    \stackrel{(b)}{=}& H(C) - H(Y^{n}|SCZ^{n}) - H(C|SZ^{n}) \nonumber \\
    &+ H(Y^{n}|SZ^{n}) + H(Z^{n}|SC) + H(C|S) \nonumber \\ 
    &+ H(Y^{n}|SCZ^{n}) - H(Z^{n}Y^{n}|S) \nonumber
\end{align}
\begin{align}
    \stackrel{(c)}{=}& H(C) - H(Y^{n}|SCZ^{n}) - H(C|SZ^{n}) \nonumber \\
    &+ H(Y^{n}|S) + H(Z^{n}|SY^{n}) - H(Z^{n}|S)   \nonumber \\
    &+ H(Z^{n}|SC)+ H(C|S) + H(Y^{n}|SCZ^{n}) \nonumber \\
    &- H(Z^{n}Y^{n}|S) \nonumber \\ 
    \stackrel{(d)}{=}& H(Y^{n}|S) - H(Y^{n}|SCZ^{n}) - H(Z^{n}|S)\nonumber \\
    &+ H(Z^{n}|SC)  + H(C) - H(C|SZ^{n}) + H(C|S) \nonumber \\ 
    &+ H(Z^{n}|SC) + H(Y^{n}|SCZ^{n}) - H(Y^{n}Z^{n}|S) \nonumber \\
    \stackrel{(e)}{=}& H(Y^{n}) - H(Y^{n}|SCZ^{n}) - H(Z^{n}|S) \nonumber \\
    &+ H(Z^{n}|SC) + H(C) - H(C|SZ^{n}) + H(C|S)  \nonumber \\ 
    &+ H(Y^{n}|SCZ^{n}) - H(Y^{n}Z^{n}|S) - g(\epsilon) \nonumber \\
    \stackrel{(f)}{=}& H(Y^{n}) - H(Y^{n}|SCZ^{n}) - H(Z^{n}|S) \nonumber \\
    &+ H(Z^{n}|SC) + H(C) - H(C|SZ^{n}) \nonumber\\
    &+ H(Y^{n}Z^{n}C|S) - H(Y^{n}Z^{n}|S) - g(\epsilon) \nonumber \\
    \geq& H(Y^{n}) - H(Y^{n}|SCZ^{n}) - H(Z^{n}) \nonumber \\
    &+ H(Z^{n}|SC) + H(C) - H(C|SZ^{n}) \nonumber\\
    &+ H(Y^{n}Z^{n}C|S) - H(Y^{n}Z^{n}|S) - g(\epsilon) \nonumber \\
    \geq& H(Y^{n}) - H(Y^{n}|SCZ^{n}) - H(Z^{n}) \nonumber \\
    &+ H(Z^{n}|SC) - g(\epsilon) \nonumber \\
    \stackrel{(g)}{\geq}& \sum_{i=1}^{n} [ H(Y_{i}) - H(Y_{i}|SCZ^{n}Y^{i-1})  - (H(Z_{i}) \nonumber \\
    &- H(Z_{i}|SCZ^{i-1})) - f(\epsilon)] - g(\epsilon) \nonumber \\
    \geq & \sum_{i=1}^{n} [H(Y_{i}) - H(Y_{i}|SCZ_{\{n\} \backslash i})  - (H(Z_{i}) \nonumber \\
    &- H(Z_{i}|SCZ_{\{n\} \backslash i})) - f(\epsilon)] - g(\epsilon) \nonumber \\
    =& n [H(Y_{T}|T) - H(Y_{T}|SCZ_{\{n\} \backslash T}T) \nonumber \\ 
    &- (H(Z_{T}|T) - H(Z_{T}|SCZ_{\{n\} \backslash T}T))] - nf(\epsilon) \nonumber \\
    &- g(\epsilon) \nonumber \\
    \stackrel{(h)}{=}& n[I(U;Y) - I(U;Z)] - nf(\epsilon) - g(\epsilon) \label{eq:convCR} 
\end{align}
where (a), (b), and (c) follow from the following chain rule expansions
\begin{align}
    &H(C|SY^{n}Z^{n}) = H(Y^{n}|SCZ^{n}) + H(C|SZ^{n}) \nonumber \\
    & \qquad \qquad \qquad \qquad- H(Y^{n}|SZ^{n}) \label{eq:chainRuleA}, \\
    &H(C|SZ^{n}Y^{n}) = H(Z^{n}|SC) + H(C|S) \nonumber \\
    & \qquad \qquad \qquad \qquad + H(Y^{n}|SCZ^{n}) \nonumber \\
    & \qquad \qquad \qquad \qquad - H(Z^{n}Y^{n}|S), \label{eq:chainRuleB} \\
    &H(Y^{n}|SZ^{n}) = H(Y^{n}|S) \!+\! H(Z^{n}|SY^{n}) \nonumber \\
    & \qquad \qquad \qquad \qquad \!-\! H(Z^{n}|S), \label{eq:chainRuleC}
\end{align}
with which we obtain all terms except $H(C)$, (d) follows from reordering the terms, (e) follows by the secrecy constraint (\ref{eq:secrecy_SRDPwithSI}) where $g(\epsilon)$ is a function that goes to zero as $n\! \rightarrow \! \infty$, (f) follows by applying the chain rule, (g) follows since $Z^{n}$ is i.i.d and because $Y^{n}$ is almost i.i.d by the realism constraint (\ref{eq:real_SRDP}) for a function $f(\epsilon)$ that tends to zero as $\epsilon \rightarrow 0$, and (h) follows by defining the random variables $Y\! =\! (Y_{T},T)$, $Z\! =\! (Z_{T},T)$, and identifying and defining the auxiliary random variable $U\!=\!(U_{T},T)$. 

The bound on $R$ and the bound on the sum rate constraint $R + R_{0}$ follows directly from \cite[Eq.~(23)-(27)]{WagnerHamidGunduzarXiv2025} with $[X]^{n}\! =\! X^{n}$, $Z^{n}\!=\!Z^{n}$, $M\!=\!S$, $J\!=\!C$, and $V_{T}\!=\!(S,C,Z_{\{n\} \backslash T})$. The distortion bound follows from (\ref{eq:dist_conv}). The cardinality bound follows by using the support lemma to preserve $P_{XZY}$ and the RHS in each bound in Theorem~\ref{theo:bothSI}.
\end{IEEEproof}

Next, we provide an inner bound for the rate region of secure RDP when side information is available at the decoder only, depicted in Fig.~\ref{fig:SecRDPwithSIatDec}. 
\begin{theorem}\label{theo:onlyDec}
    The region $\mathcal{R}_{\text{SI,D}}$ includes the union over all joint probability distributions $P_{XZUY}\!=\! Q_{XZ} P_{UY|XZ}$, where $P_{Y} \approx Q_{X}$, of the set of all $(R, R_{0}, D)$ tuples that satisfy (\ref{eq:distortionfromTheorem1}) and
    \begin{align}
        & R \geq I(U;X) - I(U;Z), \label{eq:comRateSITx} \\
        & R_{0} \geq I(U;Y) - I(U;Z), \label{eq:crRateSITx} \\
        & R + R_{0} \geq I(U;Y|Z) \label{eq:sumRateSITx} 
    \end{align}
    such that $X - (U,Z) - Y$ and $Z - X - U$ form Markov chains.
\end{theorem}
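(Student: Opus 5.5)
This is an inner bound, so only achievability is needed. I will follow the OSRB template from the proof of Theorem~\ref{theo:SRDPNoiseless}, adapted to Wyner--Ziv-type binning with decoder side information.

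\emph{Protocol A (random binning).} Fix $P_{XZUY}=Q_{XZ}P_{U|X}P_{Y|UZ}$, which enforces both Markov chains $Z-X-U$ and $X-(U,Z)-Y$. Generate $(X^n,Z^n,U^n)$ i.i.d.\ from $Q_{XZ}P_{U|X}$. Assign three independent uniform bin indices to $U^n$: $C\in[1:2^{nR_0}]$, $F\in[1:2^{n\widetilde R}]$ and $S\in[1:2^{nR}]$. The decoder observes $(C,F,S,Z^n)$ and uses a Slepian--Wolf decoder with side information $Z^n$ to obtain $\hat U^n$. It then draws $Y^n\sim P^n_{Y|UZ}(\cdot|\hat U^n,Z^n)$.

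\emph{Protocol B (random coding).} Generate $C$ and $F$ uniformly and independently of $(X^n,Z^n)$. The encoder samples $U^n\sim P^{\text{RB}}_{U^n|CFX^n}$; this is a legitimate encoder because $Z-X-U$ lets it ignore $Z^n$. The rest of the chain is identical to Protocol~A.

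Exactly as in (\ref{eq:TVDsums}), the TV distance between the two protocols reduces to $\|P_{CFX^nZ^n}-P_CP_FQ^n_{XZ}\|_{\text{TV}}$. Because $U^n-X^n-Z^n$ holds, it suffices that $(C,F)$ be nearly uniform and independent of $X^n$. By \cite[Theorem~1]{OSRBAmin}, this holds if $R_0+\widetilde R<H(U|X)$.

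\emph{Rate constraints.} Reliable recovery of $U^n$ from $(C,F,S,Z^n)$ requires $R_0+\widetilde R+R>H(U|Z)$ by \cite[Lemma~1]{OSRBAmin}. Secrecy $I(Y^n;S)\le\epsilon$ follows, as in (\ref{eq:secrect_rdp}), by requiring $(S,F)$ nearly uniform and independent of $Y^n$, which holds if $R+\widetilde R<H(U|Y)$. Fourier--Motzkin elimination of $\widetilde R\ge 0$ then gives:
\begin{itemize}
\item $R>H(U|Z)-H(U|X)=I(U;X)-I(U;Z)$, using $Z-X-U$;
\item $R_0>H(U|Z)-H(U|Y)=I(U;Y)-I(U;Z)$;
\item $R+R_0>H(U|Z)-H(U|X)$ combined with the other constraints.
\end{itemize}

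\emph{Realism and distortion.} I do not rely on a codebook soft-covering argument. In Protocol~A the triple $(U^n,Z^n,Y^n)$ is i.i.d., so $P_{Y^n}=P_Y^n\approx Q_X^n$, provided decoding succeeds and the protocols are close. Conditioning on $C$ and $S$, realism imposes an additional soft-covering condition. The decoder's output is generated from $(C,S,F,Z^n)$ through $U^n$, and $Y^n$ must look i.i.d.\ jointly with $Z^n$. I expect this to produce the sum-rate term $R+R_0\ge I(U;Y|Z)$, which I would derive by applying the soft covering lemma \cite[Lemma~IV.1]{CuffChannelSynthesis} to the channel $P_{Y|UZ}$ with side information. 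Distortion follows from the typical average lemma, as before. Finally, \cite[Lemma~3]{OSRBAmin} fixes an instance $F=f$ while preserving all constraints, with at most a doubling of the TV terms.

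\emph{Main obstacle.} The hardest step is reconciling the binning constraints with the required sum-rate $I(U;Y|Z)$. In particular, I must check whether (\ref{eq:sumRateSITx}) comes from the realism/soft-covering requirement rather than from Fourier--Motzkin elimination. I would first try to obtain it by requiring $(C,S,F)$ to be jointly independent of $(Y^n,Z^n)$ only up to the rate $H(U|YZ)$, and then adding the reconstruction bound.
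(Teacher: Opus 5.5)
Your route is the paper's. You use the same Protocols A and B, with the encoder using $P^{\text{RB}}_{U^n|CFX^n}$. You impose the same three binning conditions; the paper writes the first as $\widetilde R+R_0<H(U|XZ)$, which equals your $H(U|X)$ by $Z-X-U$. The same elimination gives $R>I(U;X)-I(U;Z)$ and $R_0>I(U;Y)-I(U;Z)$. The paper also attributes the sum rate to a channel-synthesis argument, citing \cite[Corollary~VII.6]{CuffChannelSynthesis} (local channel synthesis with decoder side information) rather than the plain soft-covering lemma.

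Two small fixes to your Fourier--Motzkin step. First, your third bullet is not an output of the elimination. Second, what $\widetilde R\ge 0$ actually adds are the upper bounds $R_0<H(U|X)$ and $R<H(U|Y)$, which you (like the paper) leave unaddressed. You can remove them by augmenting $U$ with independent local randomness $K$. This leaves every mutual-information term in the theorem unchanged and adds $H(K)$ to both $H(U|X)$ and $H(U|Y)$.

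The step you flag as the main obstacle is not actually needed. The realism constraint here is marginal, on $P_{Y^n}$ only. As you note yourself, it follows from three facts: Protocol A is i.i.d., the two protocols are close, and $F$ is nearly independent of $Y^n$ when you fix $F=f$. That last fact is already implied by $R+\widetilde R<H(U|Y)$. No additional condition arises from conditioning on $(C,S)$. Since any tuple satisfying all three inequalities in particular satisfies the first two, the inclusion is proved without deriving (\ref{eq:sumRateSITx}) at all.

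If you do want that constraint to emerge inside OSRB, impose $\widetilde R<H(U|Y,Z)$ on $F$ alone. This is the condition that keeps $(Y^n,Z^n)$ jointly near-i.i.d.\ after fixing $F=f$. Eliminating $\widetilde R$ against $R_0+\widetilde R+R>H(U|Z)$ then gives exactly $R+R_0>I(U;Y|Z)$.

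Your stated fallback would fail. Requiring $(C,S,F)$ jointly to be nearly independent of $(Y^n,Z^n)$ means $R_0+R+\widetilde R<H(U|Y,Z)\le H(U|Z)$. That contradicts the Slepian--Wolf condition unless $I(U;Y|Z)=0$.
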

Comparing Theorem~\ref{theo:onlyDec} with \cite[Theorem~14]{WagnerHamidGunduzarXiv2025}, we observe that the communication rate (\ref{eq:comRateSITx}) is the same but the marginal constraint on the common randomness (\ref{eq:crRateSITx}) is the same as the sum-rate constraint in \cite[Theorem~14]{WagnerHamidGunduzarXiv2025}. This can again be interpreted as a consequence of the secrecy constraint. The proof of Theorem~\ref{theo:onlyDec} follows similarly to the achievability proof of Theorem~\ref{theo:SRDPNoiseless} and is provided in Appendix~\ref{app:thmOnlyDec}.

Moreover, under the assumption that the probability distribution of the side information and the output is jointly i.i.d, as in Definition~\ref{def:jointlyiid}, the rate region in Theorem~\ref{theo:onlyDec} can be shown to be exact, as established below.

\begin{corollary}\label{cor:onlyDecPairiid}
     The rate region $\mathcal{R}_{\text{SI,D}}$ when the decoder output and the side information are jointly i.i.d. is the union over all joint probability distributions $P_{XZUY}\!=\! Q_{XZ} P_{UY|XZ}$, where $P_{Y} \approx Q_{X}$, of the set of all $(R, R_{0}, D)$ tuples that satisfy (\ref{eq:distortionfromTheorem1}) and
    \begin{align}
        & R \geq I(U;X) - I(U;Z), \\
        & R_{0} \geq I(U;Y) - I(U;Z), \\
        & R + R_{0} \geq I(U;Y|Z)
    \end{align}
    such that $X - (U,Z) - Y$ and $Z - X - U$ form Markov chains and the cardinality $|\mathcal{U}|$ of $U$ can be limited to $|\mathcal{X}|^{2}|\mathcal{Z}| + 2$.
\end{corollary}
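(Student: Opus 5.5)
Achievability is immediate from Theorem~\ref{theo:onlyDec}: every tuple in the stated region is in the inner bound, and imposing a jointly i.i.d.\ output and side information only restricts the class of admissible codes in the converse, not the scheme. So the work is the converse. I would follow the converse of Theorem~\ref{theo:bothSI} as closely as possible, with the auxiliary $U_i=(S,C,Z_{\{n\}\backslash i})$, or possibly $U_i=(S,C,Z^{i-1},X^{i-1})$, and a uniform time-sharing variable $T$. Then I set $U=(U_T,T)$, $X=X_T$, $Y=Y_T$, $Z=Z_T$.

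\textbf{Markov chains.} Since the encoder does not see $Z^n$, we have $(S,C)-X^n-Z^n$, and therefore $Z_i-X_i-(S,C,X_{\{n\}\backslash i},Z_{\{n\}\backslash i})$. I would use this to verify $Z-X-U$; it may be necessary to absorb $X_{\{n\}\backslash i}$ into $U_i$ for this. The chain $X-(U,Z)-Y$ holds because $Y^n$ is a function of $(S,C,Z^n)$ and local randomness.

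\textbf{Rate bounds.}
\begin{itemize}
\item \emph{Communication rate.} Start from $nR\ge I(X^n;S|C,Z^n)$. Expand in the Wyner--Ziv manner, using the i.i.d.\ structure of $(X^n,Z^n)$ and the independence of $C$, to obtain $\sum_i[I(U_i;X_i)-I(U_i;Z_i)]$.
\item \emph{Sum rate.} Start from $n(R+R_0)\ge I(Y^n;SC|Z^n)$. This is where the jointly i.i.d.\ assumption is used: $H(Y^n|Z^n)=\sum_i H(Y_i|Z_i)$ exactly, up to the realism slack. Then $I(Y^n;SC|Z^n)\ge \sum_i[H(Y_i|Z_i)-H(Y_i|S,C,Z^n)]=\sum_i I(Y_i;U_i|Z_i)$.
\item \emph{Common-randomness rate.} Repeat the chain of entropy identities (\ref{eq:chainRuleA})--(\ref{eq:convCR}) verbatim. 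That chain never used encoder access to $Z^n$, only the secrecy constraint $I(Y^n;S)\le\epsilon$, so it gives $I(U;Y)-I(U;Z)$.
\end{itemize}
The distortion bound follows as in (\ref{eq:dist_conv}). The cardinality bound comes from the support lemma, preserving $P_{XZY}$ (that is, $|\mathcal{X}|^2|\mathcal{Z}|-1$ values) together with the three mutual-information expressions.

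\textbf{Main obstacle.} The hard step is the communication-rate bound $I(U;X)-I(U;Z)$ with a single auxiliary that also serves the other two bounds. The standard Csiszár sum identity trick produces $U_i=(S,C,X^{i-1},Z_{i+1}^n)$. That choice conflicts with the $Z_{\{n\}\backslash i}$ auxiliary used for $R_0$ and $R+R_0$, and neither choice makes $Y_i$ depend only on $(U_i,Z_i)$ without extra terms. I would first try $U_i=(S,C,Z_{\{n\}\backslash i})$ throughout. For the $R$ bound, I would write $I(X^n;S|CZ^n)=I(X^n;SCZ^n)-I(X^n;Z^n)$ and single-letterize each term, using the jointly i.i.d.\ hypothesis to discard the cross terms, which the paper's structure suggests is exactly its role. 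If that fails, the fallback is to use the jointly i.i.d.\ assumption to argue that $Y_i$ is nearly independent of $Z_{\{n\}\backslash i}$ given $Z_i$, which should make the two auxiliary choices yield equal mutual informations up to $f(\epsilon)$.
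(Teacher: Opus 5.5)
Your plan follows the paper's converse. It uses the same auxiliary $U_i=(S,C,Z_{\{n\}\backslash i})$, a Wyner--Ziv bound for $R$, the jointly i.i.d.\ hypothesis in the sum-rate bound, and the chain (\ref{eq:convCR}) for $R_0$. Your sum-rate derivation, including $I(Y_T;T|Z_T)=0$, is correct, and that is where the hypothesis is actually needed.

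The obstacle you single out is not one. With this $U_i$, $I(X^n;S|CZ^n)\ge\sum_i[H(X_i|Z_i)-H(X_i|SCZ^n)]=\sum_i I(X_i;U_i|Z_i)$, and no jointly i.i.d.\ argument is needed. The chain $Z_i-X_i-U_i$ holds exactly, because $S$ depends only on $(X^n,C)$ and $Q_{XZ}$ is memoryless, so absorbing $X_{\{n\}\backslash i}$ is unnecessary. The chain $X_i-(U_i,Z_i)-Y_i$ is also exact, since $(U_i,Z_i)=(S,C,Z^n)$.

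The genuine gap is the $R_0$ step you take verbatim. Chain (\ref{eq:convCR}) is not a sequence of identities. Steps (a)--(c) are identities, and after (c) the expression equals $H(C)$. In passing to (d), however, the term $+H(Z^n|SY^n)$ is replaced by a second $+H(Z^n|SC)$, which is not a reordering.

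The target inequality in fact fails for this auxiliary. Let $X=(A,B)$ with $A,B$ i.i.d.\ uniform bits, $Z=B$, $C$ constant, and $S_i=A_i\oplus B_{i+1}$ (indices mod $n$, $n\ge 2$). Let the decoder output $Y_i=(S_i\oplus Z_{i+1},B_i')=(A_i,B_i')$ with fresh uniform $B_i'$. Then $P_{Y^n}=Q_X^n$, $I(Y^n;S)=0$, $(Y^n,Z^n)$ is jointly i.i.d., and $R_0=0$.

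Now look at the auxiliary in this code:
\begin{itemize}
\item $B_{i+1}$ is part of $Z_{\{n\}\backslash i}$, so $U_i$ reveals $A_i$ and $I(Y_i;U_i)=1$.
\item $B_i$ enters only $S_{i-1}$, where it is masked by $A_{i-1}$, so $I(Z_i;U_i)=0$.
\item Hence $I(U;Y)-I(U;Z)=1>R_0$.
\end{itemize}
Correspondingly, the intermediate bound $H(Y^n)-H(Y^n|SCZ^n)-H(Z^n)+H(Z^n|SC)$ equals $2n-n-n+n=n$ here. That is exactly the amount wrongly added at (d), since $H(Z^n|SY^n)=0$ while $H(Z^n|SC)=n$.

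The tuple $(R,R_0)=(1,0)$ does lie in the claimed region via the single-letter choice $U=A\oplus B$. So this example refutes the argument, not the corollary. What is missing is an auxiliary, or a different argument, that yields the $R_0$ bound without crediting side information used as a shifted one-time pad. It must also remain compatible with the $R$ and $R+R_0$ bounds and with both Markov chains.
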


Corollary~\ref{cor:onlyDecPairiid} is different from the rate region established in \cite[Theorems~17 and 18]{WagnerHamidGunduzarXiv2025} where the jointly i.i.d nature of $Y^{n}$ and $Z^{n}$ is imposed as a constraint, while in this corollary we assume it to be a part of the model. The proof for the achievability of Corollary~\ref{cor:onlyDecPairiid} follows from Theorem~\ref{theo:onlyDec}. The converse proof follows below.

\begin{IEEEproof}[Proof sketch for Corollary~\ref{cor:onlyDecPairiid}]
Assume there exists a tuple $(R, R_{0}, D)$ such that (\ref{eq:real_SRDP}), (\ref{eq:dist_SRDP}), and (\ref{eq:secrecy_SRDPwithSI}) are satisfied for some $\epsilon\! >\! 0$ and $n\!\geq\! 1$ in the system shown in Fig.~\ref{fig:SecRDPwithSIatDec}. We further assume that the side information $Z^{n}$ is jointly i.i.d with the output of the decoder $Y^{n}$ as in the Definition~\ref{def:jointlyiid}. Let the auxiliary random variable $U_{i}\!=\!(S,C,Z_{\{n\} \backslash i})$ such that $X_{i}-(U_{i},Z_{i})-Y_{i}$ and $Z_{i} - X_{i} - U_{i}$ form Markov chains. Let $T\! \sim\! [1\!:\!n]$ be a time-sharing random variable independent of all other random variables. 
 
The converse bound on $R$ we use \cite[Eqs.~(23)-(25)]{WagnerHamidGunduzarXiv2025} with $[X]^{n} = X^{n}$, $Z^{n}\!=\!Z^{n}$, $M\!=\!S$, $J\!=\!C$, and $(V,T)\!=\!U_{T}\!=\!(S,C,Z_{\{n\} \backslash T})$. This yields a bound $I(U;X|Z)$ which is equal to $I(U;X) - I(U;Z)$ by the Markov chain $Z - X - U$. For the bound on the sum-rate $R\! +\! R_{0}$ follows from \cite[Eq.~(84)]{WagnerHamidGunduzarXiv2025} with the same relabeling of the random variables as for the converse for $R$. The common randomness rate follows from equation (\ref{eq:convCR}), the distortion follows as in (\ref{eq:dist_conv}), and the cardinality bound follows by using the support lemma.
\end{IEEEproof}

%
%
\section{Optimality of Separate Source-Channel Coding in Secure RDP}\label{sec:SCS_analysis}
Consider the exact rate region in Corollary~\ref{cor:SRDPmoreCapable} for the system model depicted in Fig.~\ref{fig:noisy_SRDP_broadcast}. The data processing inequality implies $R \geq I(W;X) \geq I(Y;X)$ for the Markov chain $X - W - Y$. Thus, with unlimited common randomness (i.e., $R_0\rightarrow\infty$), one can achieve $R = I(X;Y)$, which establishes the achievable rate for the region $\mathcal{R}_{\text{N,UC}}$ described in Definition~\ref{def:SecureRDPNoisy_SCS} as the set of all $(R,D)$ tuples that satisfy (\ref{eq:distortionfromTheorem1}) and $I(Y;X)\leq R\leq  I(\widetilde{X};\widetilde{Y})$ under the assumption that the decoder has a more-capable channel, $P_{Y} \approx Q_{X}$, and $(X,Y)$ and $(\widetilde{X},\widetilde{Y},\widetilde{Z})$ are mutually independent. 

We next prove that a separation-based source and channel code design can asymptotically achieve all rates such that $R \leq \kappa C_{\text{unsecure}}$, where $C_{\text{unsecure}}=\max_{P_{\widetilde{X}}} I(\widetilde{X};\widetilde{Y})$ is defined to be the \emph{unsecured} capacity of the channel between $\widetilde{X}$ and $\widetilde{Y}$ and $\kappa$ is a channel mismatch factor defined as $\kappa \triangleq \frac{m}{n}$. 

We next provide the relationship between the rate regions for separate source-channel coding and joint source-channel coding.

\begin{theorem}\label{theo:SCSepMoreCapable}
    The region $\mathcal{R}_{\text{N,UC,S}}$ for secure noisy RDP with separate source and channel coding, 
    is equivalent to the rate region $\mathcal{R}_{\text{N,UC,J}}$.
\end{theorem}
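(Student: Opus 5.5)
The plan is to prove the two inclusions $\mathcal{R}_{\text{N,UC,S}} \subseteq \mathcal{R}_{\text{N,UC,J}}$ and $\mathcal{R}_{\text{N,UC,J}} \subseteq \mathcal{R}_{\text{N,UC,S}}$. The first is immediate, because a separate source--channel code is a particular joint source--channel code. Its encoder is the composition of a source encoder $X^{n}\mapsto S$ with a channel encoder $S\mapsto \widetilde{X}^{m}$, and its decoder is the corresponding composition. Any $(R,D)$ that separation achieves under the realism, distortion, mismatch and secrecy constraints of Definition~\ref{def:SecureRDPNoisy_SCS} is therefore achieved by a joint scheme. All the work lies in the reverse inclusion, which I split into a converse for joint coding and an achievability for separate coding.

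\emph{Converse for joint coding.} Take any joint scheme satisfying Definition~\ref{def:SecureRDPNoisy_SCS}. I would follow the first chain (\ref{eq:combination_ch_src}) in the converse of Corollary~\ref{cor:SRDPmoreCapable}, now with channel block length $m$:
\[
I(\widetilde{Y}^{m};X^{n}|C)\le I(\widetilde{X}^{m};\widetilde{Y}^{m})\le m\,C_{\text{unsecure}}.
\]
For the other side, with $W_t=(C,\widetilde{Y}^{m})$ and $T\sim\text{Unif}[1:n]$, I would lower bound $I(\widetilde{Y}^{m};X^{n}|C)\ge nI(W;X)$. The Markov chain $X-W-Y$ holds because the decoder output depends only on $(C,\widetilde{Y}^{m})$ plus private randomness, so data processing gives $I(W;X)\ge I(X;Y)$. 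Distortion and realism single-letterize exactly as in (\ref{eq:dist_conv}). Together these give
\[
I(X;Y)\le \tfrac{m}{n}C_{\text{unsecure}}\le (\kappa+\epsilon)C_{\text{unsecure}},
\]
with $P_Y\approx Q_X$ and $D\ge\mathbb{E}[d(X,Y)]$. The secrecy constraint only makes joint schemes more constrained, so it can be dropped in this step.

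\emph{Achievability by separation.} Given $P_{XY}$ with $P_Y\approx Q_X$ and $I(X;Y)<\kappa C_{\text{unsecure}}$, I would use Theorem~\ref{theo:SRDPNoiseless} with $U=Y$ and a common randomness rate large enough. This gives a noiseless-channel source code with index rate $I(X;Y)+\delta$ that meets realism and distortion and satisfies $I(Y^{n};S)\le\epsilon$. I would then send $S$ over $m=\kappa n$ channel uses with a wiretap code built on the more-capable channel. The secret-key part of that code is a one-time pad drawn from the unlimited common randomness, which protects the part of the rate above the secrecy capacity $I(\widetilde{X};\widetilde{Y})-I(\widetilde{X};\widetilde{Z})$; the remainder is carried by a reliable channel code of rate up to $C_{\text{unsecure}}$ at input $P_{\widetilde{X}}$ maximizing $I(\widetilde{X};\widetilde{Y})$. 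Reliability lets the decoder recover $S$ with vanishing error, so realism and distortion carry over, with TV changing by at most the error probability.

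\emph{Main obstacle.} The hardest step is strong secrecy, $I(Y^{n};\widetilde{Z}^{m})\le\epsilon$. My approach is to argue the Markov chain $Y^{n}-(S,C)-\widetilde{Z}^{m}$, up to decoding error, which gives
\[
I(Y^{n};\widetilde{Z}^{m})\le I(Y^{n};S\,C_{\text{ch}}\,\widetilde{Z}^{m}),
\]
where $C_{\text{ch}}$ is the channel-layer key, independent of everything else. I then need to show that $\widetilde{Z}^{m}$ gives essentially no information about $Y^{n}$ beyond what $S$ does, and that $S$ is almost independent of $Y^{n}$. The OSRB bound (\ref{eq:secrect_rdp}) already makes $S$ nearly independent of $Y^{n}$ given the source-layer design. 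The delicate point is that $C$ is shared between the two layers, so I would split the common randomness into independent source and channel parts. I would then apply a chain-rule argument together with the one-time pad and wiretap strong-secrecy guarantees. Closure under $\epsilon\to0$ and the definition of the regions as closures finishes the equality.
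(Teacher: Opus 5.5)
Your proposal is correct and follows the paper's skeleton. The inclusion $\mathcal{R}_{\text{N,UC,S}}\subseteq\mathcal{R}_{\text{N,UC,J}}$ comes from absorbing both layers into one encoder and decoder. A converse shows $I(X;Y)\le\frac{m}{n}C_{\text{unsecure}}$. Achievability concatenates the source code of Theorem~\ref{theo:SRDPNoiseless}, with unlimited common randomness, with a channel code. Your converse goes through $W=(C,\widetilde{Y}^{m},T)$ and the chain $X-W-Y$. The paper instead bounds the block quantity directly, $nI(X;Y)\le I(X^{n};Y^{n})\le I(X^{n};Y^{n}|C)\le I(\widetilde{X}^{m};\widetilde{Y}^{m})\le mC_{\text{unsecure}}$; the two routes are interchangeable.

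The real difference is the channel layer. The paper uses a plain reliable channel code whose encoder $P_{\widetilde{X}^{m}|S}$ sees only $S$, not $C$. So $Y^{n}-S-\widetilde{Z}^{m}$ holds exactly for the source-decoder output on the true $S$, and data processing gives $I(Y^{n};\widetilde{Z}^{m})\le I(Y^{n};S)\le\delta$. The step you call the main obstacle is therefore one line. It needs no wiretap code, channel-layer key, key splitting, or more-capable property.

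Your own chain rule closes the same way. With a pad $K$ independent of the source-layer randomness, $I(Y^{n};SK\widetilde{Z}^{m})=I(Y^{n};S)+I(Y^{n};K\widetilde{Z}^{m}|S)=I(Y^{n};S)$, so the wiretap guarantees are never used. Conversely, if you one-time pad all of $S$, which unlimited common randomness allows at any channel rate below $C_{\text{unsecure}}$, then $\widetilde{Z}^{m}$ is exactly independent of $X^{n}$, $S$, and the source-layer common randomness. In that case even a non-secure RDP source code at rate $I(X;Y)$ would suffice.

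So you pay for secrecy twice. The paper's route gives separation with an off-the-shelf channel code, with secrecy inherited entirely from the source code. A pad-based route protects $S$ itself, at the cost of extra key.

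In either route, your phrase \emph{up to decoding error} should be made precise as follows. Use $I(\hat{Y}^{n};\widetilde{Z}^{m})\le I(Y^{n};\widetilde{Z}^{m})+H(\hat{Y}^{n}|Y^{n})$ with $H(\hat{Y}^{n}|Y^{n})\le H_{b}(P_{e})+P_{e}\,n\log|\mathcal{X}|$ for $P_{e}\le 1/2$. This requires $nP_{e}\to0$, which the exponential error decay of channel codes below capacity provides. The paper's appeal to finite alphabets leaves this same point implicit.
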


We remark that in \cite[Theorem~1~and~2]{tian2025source}, separate source and channel coding is shown to be optimal for the non-secure RDP problem when there is unlimited common randomness and for a more general block level perception metric. Similarly, for the strong coordination problem, it is shown in \cite[Section~V-D]{cervia2020strong} that with limited common randomness, separation of source and channel coding is not generally optimal.

\begin{IEEEproof}[Proof of $\mathcal{R}_{\text{N,UC,S}} \subseteq \mathcal{R}_{\text{N,UC,J}}$]
Consider a code with $(\kappa, R, D)$ in $\mathcal{R}_{\text{N,UC,S}}$. The probability distribution of the system depicted in Fig.~\ref{fig:noisy_SRDP_SCS} with a separate source-channel coding method is
\begin{align}
    &P_{X^{n}CS\widetilde{X}^{m}\widetilde{Y}^{m}\widetilde{Z}^{m}\hat{S}Y^{n}}
    = Q_{X}^{n} P_{C} P_{S|X^{n}C} P_{\widetilde{X}^{m}|SC} \nonumber \\
    & \qquad \qquad \qquad \qquad \qquad \cdot P_{\widetilde{Y}\widetilde{Z}|\widetilde{X}}^{m} P_{\hat{S}|\widetilde{Y}^{m}C} P_{Y^{n}|\hat{S}C}
\end{align}
where the encoder and decoder blocks have been separated into a source coding step $P_{S|X^{n}C}$ into a message $S$ and a channel coding step $P_{\widetilde{X}^{m}|SC}$, with the corresponding recovery of $\hat{S}$ by the channel decoder and reconstruction of $Y^{n}$ from $\hat{S}$ and $C$ at the source decoder. We see that these intermediate steps in the encoding and the decoding can be absorbed into one encoder $P_{\widetilde{X}^{n}|X^{n}C}$ and one decoder $P_{Y^{n}|\widetilde{Y}^{m}C}$, respectively. This yields the same structure on the probability distribution as that of a joint source-channel code, and thus $\mathcal{R}_{\text{N,UC,S}} \subseteq \mathcal{R}_{\text{N,UC,J}}$.

\emph{Proof of $\mathcal{R}_{\text{N,UC,J}} \subseteq \mathcal{R}_{\text{N,UC,S}}$}
Take a code with the achievable tuple $(\kappa, R, D)$ in $\mathcal{R}_{\text{N,UC,J}}$. The probability distribution of this system, depicted in Fig.~\ref{fig:noisy_SRDP_SCS} is
\begin{align}
    &P_{X^{n}C\widetilde{X}^{m}\widetilde{Y}^{m}\widetilde{Z}^{m}Y^{n}} = Q_{X}^{n} P_{C} P_{\widetilde{X}^{m}|X^{n}C} P_{\widetilde{Y}\widetilde{Z}|\widetilde{X}}^{m} \nonumber \\
    & \qquad \qquad \qquad \qquad \quad \cdot P_{Y^{n}|\widetilde{Y}^{m}C}. \label{eq:JSCCdist}
\end{align}
We have
\begin{align}
    I(X^{n};Y^{n}) \leq& I(X^{n};Y^{n}C) \stackrel{(a)}{=} I(X^{n};Y^{n}|C) \nonumber \\
    \stackrel{(b)}{\leq}& I(\widetilde{X}^{m};\widetilde{Y}^{m}|C)
    \leq I(\widetilde{X}^{m}C;\widetilde{Y}^{m}) \nonumber \\
    \stackrel{(c)}{=}& I(\widetilde{X}^{m};\widetilde{Y}^{m}) \stackrel{(d)}{=} \sum^{m}_{i=1} I(\widetilde{X}_{i};\widetilde{Y}_{i}) \label{eq:sourceCodeChannelCodeBound}
\end{align}
where (a) follows since $C$ is independent of $X^{n}$, (b) follows from the data processing inequality applied in the Markov chain $X^{n} - \widetilde{X}^{m} - \widetilde{Y}^{m} - Y^{n}$ for each $C=c$, (c) follows from the Markov chain $C - \widetilde{X}^{m} - \widetilde{Y}^{m}$, and (d) follows since $P_{\widetilde{Y}\widetilde{Z}|\widetilde{X}}^{m}$ is memoryless. For each $i$, we have
\begin{equation}
    I(\widetilde{X}_{i};\widetilde{Y}_{i}) < C_{\text{unsecure}}, \label{eq:capacityBound}
\end{equation}
which follows from the definition of $C_{\text{unsecure}}$. By combining (\ref{eq:sourceCodeChannelCodeBound}) and (\ref{eq:capacityBound}), we get
\begin{align}
    I(X^{n};Y^{n}) \leq& I(\widetilde{X}^{m};\widetilde{Y}^{m}) \leq \sum^{m}_{i=1} I(\widetilde{X}_{i};\widetilde{Y}_{i}) \nonumber \\ 
    \leq& mC_{\text{unsecure}} \label{eq:mLetterCapacityBound}
\end{align}

Furthermore, let $T\sim\text{Unif}[1:n]$ and independent of all other random variables and define $X\triangleq (X_{T},T)$ and $Y\triangleq (Y_{T},T)$. We have
\begin{align}
    I(X^{n};Y^{n}) =& \sum^{n}_{i=1} I(X_{i};Y^{n}|X^{i-1}) \nonumber \\
    \stackrel{(a)}{=}& \sum^{n}_{i=1} I(X_{i};Y^{n}X^{i-1}) = \sum^{n}_{i=1} I(X_{i};Y^{n}) \nonumber \\
    \geq& \sum^{n}_{i=1} I(X_{i};Y_{i}) = nI(X_{T};Y_{T}|T) \nonumber \\
    \stackrel{(b)}{=}& nI(X;Y) \label{eq:singleLetterXY}
\end{align}
where (a) follows since $X^{n}$ is i.i.d by the problem definition, (b) follows by the definitions of $X$ and $Y$. Combining (\ref{eq:mLetterCapacityBound}) and (\ref{eq:singleLetterXY}), we obtain
\begin{equation}
    I(X;Y) \leq \frac{m}{n} C_{\text{unsecure}} \stackrel{(a)}{\leq} (\kappa + \epsilon) C_{\text{unsecure}}, \label{eq:finalRate}
\end{equation}
where (a) follows from the channel mismatch factor $\kappa$ in Definition~\ref{def:SecureRDPNoisy_SCS} for some $\epsilon > 0$ that tends to zero as $n \rightarrow \infty$. 

From the outline of the achievable rate for the region $\mathcal{R}_{\text{N,UC}}$ in the beginning of this section, we see that a joint code can achieve $I(X;Y) \leq R \leq I(\widetilde{X};\widetilde{Y})$, which can then also achieve (\ref{eq:finalRate}). We will now show that a source code in the rate region $\mathcal{R}$ specified by Theorem~\ref{theo:SRDPNoiseless} with block-length $n$ concatenated with a channel code with block-length $m$ can achieve the rate of equation (\ref{eq:finalRate}) and fulfill Definition~\ref{def:SecureRDPNoisy_SCS}.

As discussed above, one can achieve $R\geq I(Y;X)$ under the assumptions considered in this section. We use this code with a long enough block length $n$ such that we can encode the source sequence $X^{n}$ onto the message $S$ at a rate $R = I(X;Y) + \delta$, for some $\delta >0$. This code achieves
\begin{align}
    &||P_{Y^{n}} - Q_{X}^{n}||_{\text{TV}} < \delta, \\
    &\mathbb{E}[d(X^{n},Y^{n})] \leq \mathbb{E}[d(X;Y)] + \delta, \\
    &I(Y^{n};S) \leq \delta
\end{align}
for a $\delta > 0$ that tends to zero as $n \rightarrow \infty$. For a large enough $m$ and an $\epsilon >0$ such that $m/n \leq \kappa + \epsilon$ we encode the message $S$ onto $\widetilde{X}^{m}$ and communicate it over the memoryless BC channel $P_{\widetilde{Y}\widetilde{Z}|\widetilde{X}}$ and produce the reconstruction $\hat{S}$ which is decoded with the source code above to produce $\hat{Y}^{n}$. The induced probability distribution of this concatenated source and channel code is
\begin{align}
P_{X^{n}CS\widetilde{X}^{m}\widetilde{Y}^{m}\widetilde{Z}^{m}\hat{S}\hat{Y}^{n}}
=&
Q_{X}^{n} P_{C} P_{S|X^{n}C} P_{\widetilde{X}^{m}|S} P_{\widetilde{Y}\widetilde{Z}|\widetilde{X}}^{m} \nonumber \\
& \quad \cdot P_{\hat{S}|\widetilde{Y}^{m}} P_{\hat{Y}^{n}|\hat{S}C}.
\label{eq:separateDist_corrected}
\end{align}
By (\ref{eq:finalRate}) and our chosen rate $R = I(X;Y) + \delta$ there exists a channel code at rate
\begin{equation}
    \frac{R - \delta}{\kappa} \leq C_{\text{unsecure}},
\end{equation}
such that we can decode $\hat{S} = S$ with error probability $P_{e} \triangleq P(\hat{S} \neq S) \rightarrow 0$ when $m \rightarrow \infty$. $\hat{S} \neq S$ is also the event where $\hat{Y}^{n} \neq Y^{n}$, since if $S = \hat{S}$, the decoder from Theorem 1 behaves as if $S$ were communicated over an ideal channel.

The expected distortion of this concatenated code is
\begin{align}
    \mathbb{E}[d(X^{n},\hat{Y}^{n})] \leq& (1-P_{e})(\mathbb{E}[d(X,Y)] + \delta) \nonumber \\ &+ P_{e}d_{\text{max}}
\end{align}
which follows since the alphabet $\mathcal{X}$ is finite. Furthermore, we have
\begin{align}
    ||P_{\hat{Y}^{n}} - Q_{X}^{n}|| \leq& ||P_{\hat{Y}^{n}} - P_{Y^{n}}||_{\text{TV}} + ||P_{Y^{n}} - Q_{X}^{n}||_{\text{TV}} \nonumber \\
    \leq& P_{e} + \delta
\end{align}
which follows since $P_{\hat{Y}^{n}} $ and $ P_{Y^{n}}$ only differ for $\hat{S} \neq S$ with probability $P_{e}$. 
We first observe that, since the channel encoder depends only on $S$, we have the Markov chain $Y^{n} - S - \widetilde{Z}^{m}$, and therefore we have
\begin{equation}
    I(Y^{n};\widetilde{Z}^{m}) \leq I(Y^{n};S) \leq \delta.
\end{equation}
Moreover, $\hat{Y}^{n}\neq Y^{n}$ implies $\hat{S}\neq S$, so we have $P(\hat{Y}^{n}\neq Y^{n}) \leq P_{e}\to 0$. Since the alphabets are finite, this implies 
\begin{equation}
\left| I(\hat{Y}^{n};\widetilde{Z}^{m}) - I(Y^{n};\widetilde{Z}^{m}) \right| \to 0 
\end{equation}
as $n,m\to\infty$. Hence, we asymptotically have $I(\hat{Y}^{n};\widetilde{Z}^{m}) \to 0$, which proves the secrecy requirement in Definition~\ref{def:SecureRDPNoisy_SCS}.

Finally, when $n,m \rightarrow \infty$, we have a concatenated code that fulfills the constraints in Definition~\ref{def:SecureRDPNoisy_SCS} with a rate of
\begin{equation}
    I(X;Y) \leq R \leq I(\widetilde{X};\widetilde{Y}) \leq \kappa C_{\text{unsecure}}.
\end{equation}
This is the same as the joint source-channel coding derived from Corollary~\ref{cor:SRDPmoreCapable}. Hence, we have $\mathcal{R}_{\text{N,UC,J}} \subseteq \mathcal{R}_{\text{N,UC,S}}$.
\end{IEEEproof}

%
%


%
%
\section{Example}\label{sec:examples}
\subsection{Binary Example for Theorem~\ref{theo:SRDPNoiseless}}
The first example is an evaluation of the rate region in Theorem~\ref{theo:SRDPNoiseless} with Hamming distortion as the distortion measure, whose rate region is denoted as $\mathcal{R}_{\text{BSC},\text{B}}$. The system model is depicted in Fig.~\ref{fig:simplified_SRDP} and let $X \sim \text{Bern}(0.5)$. We provide an achievable rate region for this example below.

\begin{corollary}\label{cor:simpleEx}
    The region $\mathcal{R}_{\text{BSC},\text{B}}$ includes the union over all joint probability distributions $P_{XUY}=Q_XP_{UY|X}$ of the set of all $(R, R_{0}, D)$ tuples and $\alpha,\beta\! \in\! [0,1]$ that satisfy
    \begin{align}
        & R \geq 1 - H_{b}(\alpha), \\
        & R_{0} \geq 1 - H_{b}(\beta),\\
        & D \geq \alpha * \beta
    \end{align}
    such that $X - U - Y$ forms a Markov chain.
\end{corollary}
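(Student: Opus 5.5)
This is an inner bound, so it suffices to exhibit, for every $\alpha,\beta\in[0,1]$, one admissible distribution $P_{XUY}$ in Theorem~\ref{theo:SRDPNoiseless} whose three quantities $I(U;X)$, $I(U;Y)$ and $\mathbb{E}[d(X,Y)]$ equal the claimed expressions. I will use a cascade of two binary symmetric channels.

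\emph{Construction.} Take $Q_X=\text{Bern}(0.5)$. Let $U=X\oplus N_1$ with $N_1\sim\text{Bern}(\alpha)$, i.e., $U$ is the output of a BSC$(\alpha)$ driven by $X$. Let $Y=U\oplus N_2$ with $N_2\sim\text{Bern}(\beta)$, i.e., $Y$ is the output of a BSC$(\beta)$ driven by $U$. The noises $N_1,N_2$ are independent of each other and of $X$. Then $P_{XUY}=Q_XP_{U|X}P_{Y|U}$, so $X-U-Y$ is a Markov chain by construction.

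\emph{Realism.} A BSC with a uniform input produces a uniform output. Hence $U\sim\text{Bern}(0.5)$ and $Y\sim\text{Bern}(0.5)$. This gives $P_Y=Q_X$ exactly, which is stronger than the requirement $P_Y\approx Q_X$.

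\emph{Evaluating the three terms.}
\begin{itemize}
\item $I(U;X)=H(U)-H(U|X)=1-H_b(\alpha)$.
\item $I(U;Y)=H(Y)-H(Y|U)=1-H_b(\beta)$.
\item Under Hamming distortion, $\mathbb{E}[d(X,Y)]=P(X\neq Y)=P(N_1\oplus N_2=1)=\alpha(1-\beta)+\beta(1-\alpha)=\alpha*\beta$.
\end{itemize}
Substituting these into the three constraints of Theorem~\ref{theo:SRDPNoiseless} shows that every tuple with $R\ge 1-H_b(\alpha)$, $R_0\ge 1-H_b(\beta)$ and $D\ge\alpha*\beta$ lies in $\mathcal{R}$. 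Taking the union over $\alpha,\beta\in[0,1]$ yields the claimed region inside $\mathcal{R}_{\text{BSC},\text{B}}$.

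\emph{Remaining points.} The cardinality condition is trivially met, since $|\mathcal{U}|=2\le|\mathcal{X}|^2+1=5$. The only step needing any care is confirming that the exact uniformity of $Y$ satisfies the realism premise of Theorem~\ref{theo:SRDPNoiseless}, and it clearly does. The rest is direct computation, so I expect no real obstacle. Proving that this region is optimal, rather than just an inner bound, would be the hard part, but the statement does not ask for it.
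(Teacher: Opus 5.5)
Your proposal is correct and follows the paper's own proof: the paper also uses the cascade $X\sim\text{Bern}(0.5)\to\text{BSC}(\alpha)\to U\to\text{BSC}(\beta)\to Y$ and evaluates $I(U;X)=1-H_b(\alpha)$, $I(U;Y)=1-H_b(\beta)$ and the Hamming distortion $\alpha*\beta$ inside Theorem~\ref{theo:SRDPNoiseless}. Your explicit checks that $Y$ is exactly uniform (so the realism premise $P_Y\approx Q_X$ holds) and that $|\mathcal{U}|=2$ meets the cardinality bound are small details the paper leaves implicit.
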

The proof for Corollary~\ref{cor:simpleEx} is provided in Appendix~\ref{app:corSimpleEx}.


\begin{figure}[t]
    \centering
    \includegraphics[width=0.47\textwidth]{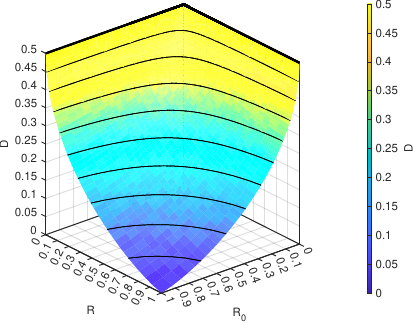}
    \caption{The achievable rate region's boundary for the communication rate $R$, common randomness rate $R_{0}$, and distortion $D$ with strong secrecy and realism defined in Definition~\ref{def:secRDPwithSI}. For distortions $>0.5$, the achievable region extends along the distortion (vertical) axis beyond $D = 0.5$ for all $R \geq 0 $ and $ R_{0} \geq 0$.}
    \label{fig:r_r0_d_trade_off_thm1_ISIT}
\end{figure}

Fig.~\ref{fig:r_r0_d_trade_off_thm1_ISIT} shows the region described in Corollary~\ref{cor:simpleEx} where in the direction towards the viewer, $R$ and $R_{0}$ increases, and in the vertical direction the distortion $D$ increases. The figure indicates that, for a fixed communication rate, the distortion decreases as the rate of common randomness grows. Similarly, when the common randomness rate is held constant, higher communication rates also yield lower distortion. This reflects a clear trade‑off between the two rates, resembling the behavior observed in \cite[Fig.~2]{CuffChannelSynthesis}, despite the absence of an explicit sum‑rate requirement. For example, near a distortion level of 0.1 and in the regime of small common‑randomness rates, the communication rate can be reduced by approximately 45-52\% when the common randomness is increased by 40-87\%. At distortion levels near 0.4, and when the common randomness rate is of similar magnitude as the communication rate, reductions of about 31-39\% in communication rate can be achieved with 43–63\% increases in common randomness. These observations highlight the significant improvements in the communication rate enabled by common randomness in secure RDP settings, improvements that do not arise in standard RD formulations.

\subsection{Gaussian Source Example for the Decoder-Only Side-Information Setting}

We remark that Theorem~\ref{theo:onlyDec} and Corollary~\ref{cor:onlyDecPairiid} are stated for finite alphabets and a near-perfect realism constraint (\ref{eq:real_SRDP}). For the Gaussian example below, we therefore proceed as follows. First, we consider the corresponding continuous-alphabet decoder-only side-information model and impose the same Markov chains as in Corollary~\ref{cor:onlyDecPairiid}. Second, we assume perfect marginal realism, i.e., $P_{Y}=P_{X}$, which is justified for Gaussian sources under mean-squared error distortion by the equivalence between the RDP regions for near-perfect and perfect marginal realism, as established in \cite[Claim~6 and Theorem~7]{WagnerHamidGunduzarXiv2025}. Motivated by these, we next evaluate a jointly Gaussian single-letter family that satisfies the Markov chains in Corollary~\ref{cor:onlyDecPairiid}, which yields an achievable Gaussian specialization of the single-letter expressions in Corollary~\ref{cor:onlyDecPairiid}. 

We consider the normalized Gaussian model
\begin{equation}
    (X,Z) \sim \mathcal{N}\!\left(0,
    \begin{bmatrix}
        1 & \eta \\
        \eta & 1
    \end{bmatrix}\right), \qquad |\eta|<1,
    \label{eq:gaussXZmodel}
\end{equation}
and the squared-error distortion measure $d(x,y)=(x-y)^{2}$. For a target distortion level $\Delta\in(0,2]$, define $\rho_{\Delta} \triangleq 1-\frac{\Delta}{2}$. Under perfect marginal realism and $Y\sim X\sim \mathcal{N}(0,1)$, we have
\begin{align}
    \mathbb{E}[(X-Y)^{2}] =& \mathbb{E}[X^{2}] + \mathbb{E}[Y^{2}] - 2\mathbb{E}[XY] \nonumber \\
    =& 2 - 2\mathbb{E}[XY]. \label{eq:gaussDistIdentity}
\end{align}
Therefore, the equality $\mathbb{E}[(X-Y)^{2}] = \Delta$ is equivalent to imposing $\mathbb{E}[XY]=\rho_{\Delta}$.

For any $\nu \in (\rho_{\Delta}^{2},1)$, define
\begin{equation}
    s(\nu) \triangleq \frac{(1-\nu)(1-\eta^{2})}{\nu-\eta^{2}}. \label{eq:sOfNu}
\end{equation}

\begin{proposition}\label{prop:GaussianSingleLetterFamily}
Fix $\Delta\in(0,2-2|\eta|]$ and let
\begin{equation}
    U = X + N_{U}, \qquad N_{U}\sim \mathcal{N}(0,s(\nu))
    \label{eq:gaussAuxiliaryU}
\end{equation}
where $N_{U}$, which is Gaussian distributed, is independent of $(X,Z)$. Next, define
\begin{equation}
    S \triangleq \mathbb{E}[X|U,Z].
\end{equation}
Then, $S$ is Gaussian and satisfies
\begin{equation}
    S = \frac{1-\eta^{2}}{1+s(\nu)-\eta^{2}}\,U
    + \frac{\eta\,s(\nu)}{1+s(\nu)-\eta^{2}}\,Z
    \label{eq:SLinearForm}
\end{equation}
with variance $\mathrm{Var}(S)=\nu$. Finally, define
\begin{equation}
    Y = \frac{\rho_{\Delta}}{\nu}S + N_{Y}
    \label{eq:gaussYConstruction}
\end{equation}
where
\begin{equation}
    N_{Y}\sim \mathcal{N}\!\left(0,1-\frac{\rho_{\Delta}^{2}}{\nu}\right)
    \label{eq:gaussYNnoise}
\end{equation}
is independent of $(X,Z,U,S)$. Since $\Delta\leq 2-2|\eta|$ implies $\rho_{\Delta}\geq |\eta|$, any $\nu\in(\rho_{\Delta}^{2},1)$ satisfies $\nu>\eta^{2}$. Hence, $s(\nu)>0$ and $1-\rho_{\Delta}^{2}/\nu>0$, so \eqref{eq:gaussAuxiliaryU} and \eqref{eq:gaussYNnoise} are well defined.

Then, the following hold:
\begin{align}
    &Z-X-U, \label{eq:gaussMC1}\\
    &X-(U,Z)-Y, \label{eq:gaussMC2}\\
    &P_{Y}=P_{X}=\mathcal{N}(0,1), \label{eq:gaussRealism}\\
    &\mathbb{E}[(X-Y)^{2}] = \Delta. \label{eq:gaussExactDist}
\end{align}
Moreover, the three single-letter quantities in Corollary~\ref{cor:onlyDecPairiid} evaluate to
\begin{align}
    R_{\mathrm{G},1}(\nu)
    &\triangleq I(U;X)-I(U;Z)
    = I(U;X|Z) \nonumber \\
    &= \frac{1}{2}\log\frac{1-\eta^{2}}{1-\nu}, \label{eq:R1Gaussian} \\
    R_{\mathrm{G},2}(\nu)
    &\triangleq I(U;Y)-I(U;Z) \nonumber \\
    &= \frac{1}{2}\log\frac{1+s(\nu)-\eta^{2}}{1+s(\nu)-\rho_{\Delta}^{2}/\nu^{2}}, \label{eq:R2Gaussian} \\
    R_{\mathrm{G},3}(\nu)
    &\triangleq I(U;Y|Z)
    = \frac{1}{2}\log\frac{\nu^{2}-\eta^{2}\rho_{\Delta}^{2}}{\nu(\nu-\rho_{\Delta}^{2})}. \label{eq:R3Gaussian}
\end{align}
Consequently, the jointly Gaussian family above yields the explicit one-parameter region
\begin{align}
    &R \geq R_{\mathrm{G},1}(\nu), \nonumber \\
    &R_{0} \geq R_{\mathrm{G},2}(\nu), \nonumber \\
    &R+R_{0} \geq R_{\mathrm{G},3}(\nu). \label{eq:GaussFamilyRegion}
\end{align}
\end{proposition}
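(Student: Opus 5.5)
The plan is to make every claim in Proposition~\ref{prop:GaussianSingleLetterFamily} a second-moment computation. The triple $(X,Z,U)$ is jointly Gaussian with zero mean, and $S$ and $Y$ are linear in jointly Gaussian variables plus independent Gaussian noise, so $(X,Z,U,S,Y)$ is jointly Gaussian. Conditional laws are therefore Gaussian, and all mutual informations reduce to ratios of (conditional) variances via $I(A;B|C)=\frac{1}{2}\log\frac{\mathrm{Var}(A|C)}{\mathrm{Var}(A|B,C)}$ for scalar $A$.

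\textbf{Step 1: the form and variance of $S$.} From the construction, $\mathrm{Var}(U)=1+s$, $\mathrm{Cov}(U,X)=1$ and $\mathrm{Cov}(U,Z)=\eta$. I would solve the $2\times 2$ normal equations for $\mathbb{E}[X|U,Z]$ to obtain the linear form \eqref{eq:SLinearForm}. For the variance, I would first condition on $Z$ and then observe $U$ through additive noise of variance $s$. Since $X|Z$ has variance $1-\eta^{2}$, this gives
\begin{equation*}
\mathrm{Var}(X|U,Z)=\frac{(1-\eta^{2})s}{1-\eta^{2}+s}.
\end{equation*}
Substituting $s=s(\nu)$ from \eqref{eq:sOfNu} yields exactly $1-\nu$; in fact $s(\nu)$ is the solution of this equation for $s$. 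Orthogonality then gives $\mathrm{Var}(S)=1-\mathrm{Var}(X|U,Z)=\nu$. I expect this step, together with keeping the orthogonality relations straight, to be the main (if routine) obstacle.

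\textbf{Step 2: Markov chains, realism and distortion.} The chain \eqref{eq:gaussMC1} holds because $N_U$ is independent of $(X,Z)$. The chain \eqref{eq:gaussMC2} holds because $Y$ is a function of $(U,Z)$, through $S$, plus noise $N_Y$ that is independent of everything else. For realism, $Y$ is zero-mean Gaussian with
\begin{equation*}
\mathrm{Var}(Y)=\frac{\rho_\Delta^{2}}{\nu^{2}}\,\nu+1-\frac{\rho_\Delta^{2}}{\nu}=1,
\end{equation*}
which gives \eqref{eq:gaussRealism}. For distortion, the orthogonality principle gives $\mathbb{E}[XS]=\mathbb{E}[S^{2}]=\nu$, so $\mathbb{E}[XY]=\rho_\Delta$. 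Then \eqref{eq:gaussDistIdentity} yields \eqref{eq:gaussExactDist}.

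\textbf{Step 3: the three rate expressions.} Every $W$ in the span of $(U,Z)$ satisfies $\mathbb{E}[SW]=\mathbb{E}[XW]$, which gives the needed covariances.
\begin{itemize}
\item For $R_{\mathrm{G},1}$: by \eqref{eq:gaussMC1}, $I(U;X)-I(U;Z)=I(U;X|Z)$, and $I(U;X|Z)=\frac{1}{2}\log\frac{1-\eta^{2}}{1-\nu}$ by Step~1.
\item For $R_{\mathrm{G},2}$: we have $\mathrm{Cov}(U,Y)=\frac{\rho_\Delta}{\nu}\,\mathbb{E}[XU]=\frac{\rho_\Delta}{\nu}$. Hence $I(U;Y)=\frac{1}{2}\log\frac{1+s}{1+s-\rho_\Delta^{2}/\nu^{2}}$ and $I(U;Z)=\frac{1}{2}\log\frac{1+s}{1+s-\eta^{2}}$. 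Subtracting gives \eqref{eq:R2Gaussian}.
\item For $R_{\mathrm{G},3}$: $\mathrm{Cov}(Y,Z)=\frac{\rho_\Delta\eta}{\nu}$, so $\mathrm{Var}(Y|Z)=1-\frac{\rho_\Delta^{2}\eta^{2}}{\nu^{2}}$, and $\mathrm{Var}(Y|U,Z)=\mathrm{Var}(N_Y)=1-\frac{\rho_\Delta^{2}}{\nu}$. The ratio of these simplifies to \eqref{eq:R3Gaussian}.
\end{itemize}

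\textbf{Step 4: the region.} I would substitute these quantities into the single-letter bounds of Corollary~\ref{cor:onlyDecPairiid}. Following the continuous-alphabet and perfect-realism justification given just before the proposition, this produces the one-parameter region \eqref{eq:GaussFamilyRegion}.
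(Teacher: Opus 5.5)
Your proposal is correct and follows essentially the same second-moment and Gaussian-conditioning route as the paper's proof, including the orthogonality identities $\mathbb{E}[XS]=\mathbb{E}[S^{2}]=\nu$ and $\mathrm{Cov}(U,Y)=\rho_{\Delta}/\nu$. The differences are cosmetic: you get $\mathrm{Var}(X|U,Z)$ by conditioning on $Z$ first, and you evaluate $I(U;Y|Z)$ directly as $\frac{1}{2}\log\frac{\mathrm{Var}(Y|Z)}{\mathrm{Var}(N_{Y})}$, whereas the paper passes through $I(S;Y|Z)$ with $\mathrm{Var}(S|Z)=\nu-\eta^{2}$; both routes give identical expressions.
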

The proof of Proposition~\ref{prop:GaussianSingleLetterFamily} is provided in Appendix~\ref{app:propGaussian}.

\begin{remark}\label{rem:GaussianBenchmark}
For $0<\Delta<2-2|\eta|$, the communication term in \eqref{eq:R1Gaussian} satisfies
\begin{equation}
    \lim_{\nu\downarrow \rho_{\Delta}^{2}} R_{\mathrm{G},1}(\nu)
    = \frac{1}{2}\log\frac{1-\eta^{2}}{1-\rho_{\Delta}^{2}}.
    \label{eq:GaussianBenchmarkLimit}
\end{equation}
The same expression is the exact minimum communication rate in the corresponding \emph{non-secure} RDP Gaussian side-information problem with finite-rate common randomness and perfect or near-perfect marginal realism \cite[Proposition~20]{WagnerHamidGunduzarXiv2025}. Moreover, we have
\begin{equation}
    \lim_{\nu\downarrow \rho_{\Delta}^{2}} R_{\mathrm{G},3}(\nu)=\infty.
\end{equation}
Hence, within the jointly Gaussian family in Proposition~\ref{prop:GaussianSingleLetterFamily}, approaching the minimum-communication boundary requires an arbitrarily large common-randomness rate.

At the boundary $\Delta=2-2|\eta|$, we have $\rho_{\Delta}=|\eta|$, and thus we obtain
\begin{equation}
    \lim_{\nu\downarrow \rho_{\Delta}^{2}} R_{\mathrm{G},1}(\nu)=0.
\end{equation}
Also, we have
\begin{equation}
    \lim_{\nu\downarrow \rho_{\Delta}^{2}} R_{\mathrm{G},3}(\nu)=
    \begin{cases}
        0, & \eta=0,\\[0.8ex]
        \frac{1}{2}\log 2, & \eta\neq 0.
    \end{cases}
\end{equation}
\end{remark}

\begin{remark}\label{rem:GaussianZeroRate}
For $\Delta \geq 2-2|\eta|$, zero communication is feasible in the perfect-marginal-realism Gaussian model. If $\eta>0$, choosing $Y=Z$ gives $Y\sim \mathcal{N}(0,1)$ and
\begin{equation}
    \mathbb{E}[(X-Y)^{2}] = \mathbb{E}[(X-Z)^{2}] = 2-2\eta.
\end{equation}
If $\eta<0$, choosing $Y=-Z$ gives $Y\sim \mathcal{N}(0,1)$ and
\begin{equation}
    \mathbb{E}[(X-Y)^{2}] = \mathbb{E}[(X+Z)^{2}] = 2+2\eta = 2-2|\eta|.
\end{equation}
When $\eta=0$, choosing $Y\sim\mathcal{N}(0,1)$ independent of $Z$ yields distortion $2$. Therefore, zero communication is feasible whenever
\begin{equation}
    \Delta \geq \Delta_{0}(\eta)\triangleq 2-2|\eta|. \label{eq:ZeroRateThresholdGaussian}
\end{equation}
\end{remark}

Thus, the Gaussian family in Proposition~\ref{prop:GaussianSingleLetterFamily} shows explicitly how the decoder side information quality, captured by $|\eta|$, influences the secure single-letter trade-off. In particular, stronger side information (i.e., higher correlation with $X$) decreases the communication term in \eqref{eq:R1Gaussian} and reduces the zero-rate achievability threshold in \eqref{eq:ZeroRateThresholdGaussian}.

\section{Conclusion}\label{sec:conclusion}
We studied secure RDP tradeoffs under negligible information leakage for both noiseless communication channels and noisy broadcast channels. We characterized the exact secure RDP region for the noiseless case. For transmission over broadcast channels, we derived an inner bound and proved its tightness for a class of more-capable broadcast channels. For the latter exact secure RDP region, we also proved that separating source coding and channel coding is optimal when the encoder and the decoder have access to an unlimited amount of common randomness. We also established the exact RDP region when both the encoder and the decoder have access to side information correlated with the source and the channel is noiseless. When only the decoder have access to correlated side information and the channel is noiseless, we derived a general inner bound and identified a special case for which the secure RDP region is exact. Moreover, our binary and Gaussian examples showed that common randomness can substantially reduce the communication rate in secure RDP settings, a gain that standard rate-distortion settings cannot attain.

These results were argued to be relevant for applications such as neural image compression, where systems must maintain perceptual quality while limiting information leakage when the encoder output is transmitted over public channels. Our analysis established the fundamental information-theoretic limits for trustworthy learned compression methods. In future, we will design trustworthy neural compression methods to approach these information-theoretic limits.

\section*{Acknowledgments} 
The authors used ChatGPT and Copilot to mainly revise and improve parts of the text. All content was reviewed and edited by the authors, who assume full responsibility. This work was partially supported by the ZENITH Research \& Leadership Fund, the Swedish Foundation for Strategic Research (SSF), and the German Federal Ministry of Research, Technology and Space (BMFTR) 6GEM+ Transfer Hub under Grants 16KIS2412 and 16KISS005.

\begin{appendices}

%
%
\section{Proof of Theorem~\ref{theo:SRDPnoisy} for Secure RDP over a Noisy Broadcast Channel}\label{app:ThmSRDPNoisy}
\begin{IEEEproof}[Proof sketch]
Similar to Theorem~\ref{theo:SRDPNoiseless}, the achievability proof for Theorem~\ref{theo:SRDPnoisy} applies the Output Statistics of Random Binning (OSRB) method \cite{OSRBAmin}.

Let $P_{XWY}$ be any distribution such that $\mathbb{E}[d(X^{n},Y^{n})] \leq D + \epsilon_{n}$, where $\epsilon_{n} \rightarrow 0$ as $n \rightarrow \infty$. We also introduce the random variable $F \sim \text{Unif}[1:2^{n\widetilde{R}}]$ that is available at the encoder and decoder and represents the randomly chosen encoder-decoder pair.

Similarly to the proof of Theorem~\ref{theo:SRDPNoiseless}, we generate the auxiliary random variable and assign bin indices $C$ and $F$ to each sequence $w^{n}$ in two ways:

(i) \emph{random binning} where the auxiliary random variable sequence $W^{n}$ is generated i.i.d according to $P_{W}^{n}$ such that the joint distribution $P_{X^{n}W^{n}}$ factorizes as $P_{X^{n}W^{n}} = Q_{X}^{n} P_{W^{n}|X^{n}}$ and $P_{W^{n}\widetilde{X}^{n}\widetilde{Y}^{n}\widetilde{Z}^{n}}$ factorizes as $P_{W^{n}\widetilde{X}^{n}\widetilde{Y}^{n}\widetilde{Z}^{n}} = P_{W}^{n} P_{\widetilde{X}^{n}|W^{n}} P_{\widetilde{Y}\widetilde{Z}|\widetilde{X}}^{n}$ where $P_{\widetilde{Y}\widetilde{Z}|\widetilde{X}}^{n}$ is the memoryless BC given in the system model depicted in Fig.~\ref{fig:noisy_SRDP_broadcast}. We randomly assign each $w^{n}$ bin indices $C = \varphi_{1}(w^{n})$ and $F = \varphi_{2}(w^{n})$ such that the decoder can reliably recover $W^{n}$ from $C$, $F$ and the broadcast channel (BC) output $\widetilde{Y}$. The binning functions $\varphi_{1}: \mathcal{W}^{n} \rightarrow [1:2^{nR_{0}}]$ and $\varphi_{2} : \mathcal{W}^{n} \rightarrow [1:2^{n\widetilde{R}}]$ assigns a bin index to each $w^{n}$ uniformly at random. The encoder selects the channel input according to $P^{\text{RB}_{\widetilde{X}^{n}|W^{n}X^{n}}}$. The decoder $P_{\hat{W}^{n}|CF\widetilde{Y}^{n}}^{\text{RB}}$ recovers $W^{n}$ from $(C,F,\widetilde{Y})$ and the output is $\hat{W}^{n}$. $Y^{n}$ is produced from $\hat{W}^{n}$ with $P^{\text{RB}}_{Y^{n}|\hat{W}^{n}}$; and

(ii) \emph{random coding} where the indices $C$ and $F$ for the common randomness and the choice of encoder-decoder are randomly and uniformly generated according to $P_{C}$ and $P_{F}$. The auxiliary random variable $W^{n}$ is generated according to $P^{RB}_{W^{n}|CFX^{n}}$, which is the same distribution as in the random binning and the relation is derived as in the proof for Theorem~\ref{theo:SRDPNoiseless}. The auxiliary random variable $W^{n}$ is recovered from $(C,F,\widetilde{Y}^{n})$ using the same decoder $P_{\hat{W}^{n}|CF\widetilde{Y}^{n}}^{\text{RB}}$ as in the random binning with the output $\hat{W}^{n}$. Similarly, $Y^{n}$ is produced from $\hat{W}^{n}$ with $P^{\text{RB}}_{Y^{n}|\hat{W}^{n}}$, which is the same as in the random binning.

The random binning and random coding ways of assigning bin indices $C$ and $F$ to the generated auxiliary random variable $W^{n}$ induces two probability distributions denoted $P^{\text{RB}}$ and $P^{\text{RC}}$, respectively. The induced distributions are
\begin{align}
    P^{\text{RB}} =& Q_{X}^{n} P_{W^{n}|X^{n}} P_{C|W^{n}} P_{F|W^{n}} P_{\widetilde{X}^{n}|W^{n}}^{\text{RB}} P_{\widetilde{Y}^{n}\widetilde{Z}^{n}|\widetilde{X}^{n}} \nonumber \\ 
    &\cdot P^{\text{RB}}_{\hat{W}^{n}|CF\widetilde{Y}^{n}} P^{\text{RB}}_{Y^{n}|\hat{W}^{n}} \nonumber \\ 
    =& P_{CFX^{n}} P^{\text{RB}}_{W^{n}|CFX^{n}} P_{\widetilde{X}^{n}|W^{n}}^{\text{RB}} P_{\widetilde{Y}^{n}\widetilde{Z}^{n}|\widetilde{X}^{n}} \nonumber \\
    &\cdot P^{\text{RB}}_{\hat{W}^{n}|CF\widetilde{Y}^{n}} P^{\text{RB}}_{Y^{n}|\hat{W}^{n}} , \label{eq:rb_dist_case2} \\
    P^{\text{RC}} =& P_{C} P_{F} Q_{X}^{n} P^{\text{RB}}_{W^{n}|CFX^{n}} P_{\widetilde{X}^{n}|W^{n}}^{\text{RB}} P_{\widetilde{Y}^{n}\widetilde{Z}^n|\widetilde{X}^{n}} \nonumber \\
    &\cdot P^{\text{RB}}_{\hat{W}^{n}|CF\widetilde{Y}^{n}} P^{\text{RB}}_{Y^{n}|\hat{W}^{n}}. \label{eq:rc_dist_case2}
\end{align}

In the same way as in equation (\ref{eq:TVDsums}), the TV distance between $P^{\text{RB}}$ and $P^{\text{RC}}$ can be bounded using the triangle inequality which yields
\begin{align}
    &||P^{\text{RB}} - P^{\text{RC}}||_{\text{TV}} \leq  ||P_{CFX^{n}} - P_{C}P_{F}Q_{X}^{n} ||_{\text{TV}}. \label{eq:tvd_noisy}
\end{align}

From (\ref{eq:tvd_noisy}) we have that the TV distance between the induced distributions from random binning and random coding vanishes if the distributions of $(C,F)$ are uniform and independent of $X^{n}$ in the random binning. By \cite[Theorem 1]{OSRBAmin} $C$ and $F$ are nearly jointly uniformly distributed and nearly independent of $X^{n}$ if we have the following rate
\begin{equation}
    \widetilde{R} + R_{0} < H(W|X) \label{eq:indep_X_of_F}.
\end{equation}

$W^{n}$ can be reliably recovered from $(C,F,\widetilde{Y}^{n})$ if we fix the decoder $P^{\text{RB}}_{\hat{W}^{n}|CF\widetilde{Y}^{n}}$ to be a Slepian-Wolf decoder. By \cite[Lemma~1]{OSRBAmin} the Slepian-Wolf decoder succeeds if we have the rate
\begin{equation}
    \widetilde{R} + R_{0} > H(W|\widetilde{Y}). \label{eq:reliable_decoding}
\end{equation}

By \cite[Theorem~1]{OSRBAmin}, we can ensure that the introduced random variable $Y^{n}$ and $\widetilde{Z}^{n}$ are nearly independent of each other and of $F$ if we have the following rate
\begin{equation}
    \widetilde{R} < H(W|Y,\widetilde{Z}) \label{eq:f_indep_prob}
\end{equation}
which ensures that the secrecy constraint (\ref{eq:secrecy_noisyRDP}) is satisfied since the alphabets of $Y^{n}$ and $Z^{n}$ are finite.

Let the auxiliary random variable $W = (W_{1}, W_{2})$ such that $(X,W_{1},Y)$ and $(W_{2},\widetilde{X},\widetilde{Y},\widetilde{Z})$ are mutually independent. By equations (\ref{eq:reliable_decoding}) and (\ref{eq:f_indep_prob}), we have
\begin{align}
    &R_{0} > H(W|\widetilde{Y}) - H(W|Y,\widetilde{Z}) \nonumber \\ 
    &= I(W_{1},W_{2};Y,\widetilde{Z}) - I(W_{1},W_{2};\widetilde{Y}) \nonumber  \\ 
    &= I(W_{1};Y) + I(W_{2};\widetilde{Z}) - I(W_{2};\widetilde{Y}) \label{eq:proto_coding_ineq}.
\end{align}

Additionally, from equations (\ref{eq:indep_X_of_F}) and (\ref{eq:reliable_decoding}), we have
\begin{equation}
    I(W;\widetilde{Y}) = I(W_{2};\widetilde{Y}) > I(W_{1};X) = I(W;X). \label{eq:jscc_case2}
\end{equation}

To prove the achievability of the realism constraint in (\ref{eq:real_SRDP}), we introduce the rate $R_{J}$ and let that represent the rate of reliable communication between the encoder and the decoder with the auxiliary random variable $W_{2}$ forming the Markov chain $W_{2}^{n} - \widetilde{X}^{n} - (\widetilde{Y}^{n},\widetilde{Z}^{n})$. This communication rate $R_{J}$ does not consider any secrecy constraint. The distance in TV between the induced output probability distribution from the codebook of $W_{1}(c,j)$, by the common randomness $C$ at rate $R_{0}$ and the reliable communication at rate $R_{J}$, and the target output probability distribution in (\ref{eq:real_SRDP}) is
\begin{equation}
    \left| \left| \sum_{c,j} \frac{P_{Y^{n}|W_{1}^{n}}(y^{n}|W_{1}(c,j))}{2^{nR_{j}} 2^{nR_{0}}} - Q_{X}^{n} \right| \right|_{\text{TV}}. \label{eq:TVD_input_noisy}
\end{equation}

The distance in TV between the induced input probability distribution of the codebook $W_{1}(c,j)$ and the input probability distribution given in the problem is
\begin{equation}
    \frac{1}{2^{nR_{0}}}\left| \left| \sum_{c,j} \frac{P_{X^{n}|W_{1}^{n}}(x^{n}|W_{1}(c,j))}{2^{nR_{j}}} - Q_{X}^{n} \right| \right|_{\text{TV}}. \label{eq:TVD_output_noisy}
\end{equation}
The probability distribution of $C$ is moved outside the TV distance expression since the common randomness is independent of $X^{n}$ by the problem definition. By the soft covering lemma \cite[Lemma~IV.1]{CuffChannelSynthesis} the two TV distances in (\ref{eq:TVD_output_noisy}) and (\ref{eq:TVD_input_noisy}) goes to zero as $n \rightarrow \infty$ if we impose the following rates
\begin{align}
    &R_{J} + R_{0} > I(W_{1};Y), \label{eq:y_rate_soft_cov} \\
    &R_{J} > I(W_{1};X). \label{eq:x_rate_soft_cov}
\end{align}
We can set the rate of achievable communication between the encoder and the decoder to $R_{J} = I(W_{2};\widetilde{Y})$, which is achievable since we have
\begin{align}
    nI(W_{2};\widetilde{Y})= I(W_{2}^{n};\widetilde{Y}^{n}) \stackrel{(a)}{\leq} I(\widetilde{X}^{n};\widetilde{Y}^{n}) = nI(\widetilde{X};\widetilde{Y}) \label{eq:rel_com_case2}
\end{align}
where (a) follows by the data processing inequality in the Markov chain $W_{2}^{n} - \widetilde{X}^{n} - (\widetilde{Y}^{n},\widetilde{Z}^{n})$. By the setting $R_{J}=I(W_{2};\widetilde{Y})$, (\ref{eq:x_rate_soft_cov}) is satisfied by (\ref{eq:jscc_case2}) and (\ref{eq:y_rate_soft_cov}) is satisfies by (\ref{eq:proto_coding_ineq}). Moreover, by the typical average lemma \cite[pp. 26]{Elgamalbook} the distortion constraint in (\ref{eq:dist_SRDP}) is satisfied since all tuples $(x^{n},w^{n},y^{n})$ are in the jointly typical set with high probability as $n \rightarrow \infty$.

Since the TV distance in (\ref{eq:tvd_noisy}) goes to zero as $n \rightarrow \infty$, by property 2 of \cite[Lemma~3]{OSRBAmin} there exists an instance of $F=f$ such that the TV distance in (\ref{eq:tvd_noisy}) also goes to zero when conditioning on this instance of $F$ and it therefore exists a reliable encoder-decoder pair. Similiarly to the proof of Theorem~\ref{theo:SRDPNoiseless}, we can now identify an encoder-decoder pair based on this instance of $F=f$ that achieves (\ref{eq:real_SRDP}), (\ref{eq:dist_SRDP}), and (\ref{eq:secrecy_noisyRDP}).

The cardinality of $\mathcal{W}_{1}$ can be limited to $|\mathcal{W}_{1}| \leq |\mathcal{X}|^{2} + 1$ and the cardinality of $\mathcal{W}_{2}$ can be limited to $\mathcal{W}_2 \leq |\mathcal{\widetilde{X}}| + 1$ by the support lemma.

We remark that the steps in our proof of Theorem~\ref{theo:SRDPnoisy} is similar to \cite{cervia2020remote}, however our Theorem considers a different secrecy constraint as well as a realism constraint. Furthermore, the binning structure in our proof is different as well.
\end{IEEEproof}

\section{Secure RDP over Noiseless Channels with Side Information}
%
%
\subsection{Proof for Theorem~\ref{theo:bothSI}}\label{app:thmBothSI}
\begin{IEEEproof}[Proof sketch for achievability]
For the achievability proof of Theorem~\ref{theo:bothSI}, we use the OSRB method with step analogous to the proof for Theorem~\ref{theo:SRDPNoiseless} and thus the following subsections are mostly a proof outline that is highlighting the differences in the proofs.

We fix the distribution $P_{XZUY}$ such that $\mathbb{E}[d(X^{n},Y^{n}] \leq D + \epsilon_{n}$ for a $\epsilon_{n} > 0$ that tends to zero as $n \rightarrow \infty$. We further introduce additional common randomness $F \in [1:2^{n\widetilde{R}}]$ that is shared between the encoder and the decoder which, as previously, which represents the choice of codebook for the encoder and the decoder. We generate the auxiliary random variable $U^{n}$ and assign bin indices $C$, $S$, the common randomness and the message index in the model, and $F$ in two ways:

(i) \emph{random binning} where the auxiliary random variable sequence $U^{n}$ is generated in i.i.d according to the distribution $P_{U}^{n}$ factorizes as $P_{X^{n}Z^{n}U^{n}} = Q_{XZ}^{n} P_{U^{n}|X^{n}Z^{n}}$. We assign bin indices $C$, $S$, and $F$ to each sequence $u^{n}$ according to $C = \varphi_{1}(u^{n})$, $S = \varphi_{2}(u^{n})$, and $F = \varphi_{3}(u^{n})$, which maps each sequence $u^{n}$ uniformly at random to an index $C \in [1:2^{nR_{0}}]$, $S \in [1:2^{nR}]$, and $F \in [1:2^{n\widetilde{R}}]$ respectively. This means that the encoder picks the message according to the distribution $P^{\text{RB}}_{S|U^{n}Z^{n}} \sim \text{Unif}[1:2^{nR}]$. The decoder $P^{\text{RB}}_{\hat{U}^{n}|CFSZ^{n}}$ recover $U^{n}$ from $(C,S,F,Z^{n})$ and produces the output $\hat{U}^{n}$ which the produces $Y^{n}$ according to $P^{\text{RB}}_{Y^{n}|\hat{U}^{n}}$.

(ii) \emph{random coding} where the common randomness and the encoder-decoder pair, $C$ and $F$, is generated uniformly at random according to $P_{C}$ and $P_{F}$ respectively. The auxiliary random variable sequence $U^{n}$ is generated according to $P^{\text{RB}}_{U^{n}|CFX^{n}Z^n}$, which is the same as in the random binning and the relationship between this and the random binning distributions is derived in the same way as in the proof for Theorem~\ref{theo:SRDPNoiseless}. The selection of the message $S$ and the decoder is the same as for the random binning, which is $P^{\text{RB}}_{S|U^{n}Z^{n}}$ and $P^{\text{RB}}_{\hat{U}^{n}|CFSZ^{n}}$ with $P^{\text{RB}}_{Y^{n}|\hat{U}^{n}}$.

Similar to the previous proofs, the two ways of assigning the bin indices induce two different probability distributions, denote $P^{\text{RB}}$ and $P^{\text{RC}}$ for the random binning and the random coding respectively
\begin{align}
    P^{\text{RB}} =& Q^{n}_{XZ} P_{U^{n}|X^{n}Z^{n}} P_{C|U^{n}} P_{F|U^{n}} P_{S|U^{n}Z^{n}}^{\text{RB}} P^{\text{RB}}_{\hat{U}^{n}|CFSZ^{n}} \nonumber \\
    &\cdot P^{\text{RB}}_{Y^{n}|\hat{U}^{n}} \nonumber \\ 
    =& P_{X^{n}Z^{n}CF} P_{U^{n}|CFX^{n}Z^{n}}^{\text{RB}} P_{S|U^{n}Z^{n}}^{\text{RB}} P^{\text{RB}}_{\hat{U}^{n}|CFSZ^{n}} \nonumber \\
    &\cdot P^{\text{RB}}_{Y^{n}|\hat{U}^{n}}, \\
    P^{\text{RC}} =& Q_{XZ}^{n} P_{C} P_{F} P_{U^{n}|CFX^{n}Z^{n}}^{\text{RB}} P_{S|U^{n}Z^{n}}^{\text{RB}} P^{\text{RB}}_{\hat{U}^{n}|CFSZ^{n}} \nonumber \\
    &\cdot P^{\text{RB}}_{Y^{n}|\hat{U}^{n}}.
\end{align}

The TV distance between $P^{\text{RB}}$ and $P^{\text{RC}}$ can be bounded in the same way as in equation (\ref{eq:TVDsums}), which results in
\begin{align}
    ||P^{\text{RB}} - P^{\text{RC}}||_{\text{TV}} \leq ||P_{X^{n}Z^{n}CF} - Q_{XZ}^{n} P_{C} P_{F}||_{\text{TV}}. \label{eq:tvdSIatBoth}
\end{align}

As in the previous achievability proofs, the distance in equation (\ref{eq:tvdSIatBoth}) above goes to zero as $n \rightarrow \infty$ if we impose the following rates. The indices $F$ and $C$ and jointly uniformly distributed and independent of $(X^{n},Z^{n})$ if, by \cite[Theorem~1]{OSRBAmin}, we have the rates
\begin{equation}
    \widetilde{R} + R_{0} < H(U|XZ). \label{eq:indepSIatBoth}
\end{equation}

The public indices $S$ and $F$ are independent of $Y^{n}$ and jointly uniformly distributed if we have the following rates by \cite[Theorem~1]{OSRBAmin}
\begin{equation}
    R + \widetilde{R} < H(U|Y). \label{eq:secrecy}
\end{equation}
This ensures that the secrecy constraint (\ref{eq:secrecy_SRDPwithSI}) is satisfied by the same arguments as in the proof of Theorem~\ref{theo:SRDPNoiseless}.

If we fix the decoder $P^{\text{RB}}_{\hat{U}^{n}|CFSZ^{n}}$ as a Slepian-Wolf decoder, it can reliably produce $U^{n}$ from $(C,F,S,Z^{n})$ if we have the following rates  by \cite[Lemma~1]{OSRBAmin}
\begin{equation}
    R + \widetilde{R} + R_{0} > H(U|Z). \label{eq:reconstruction}
\end{equation}

Similar to the proof of Theorem~\ref{theo:SRDPNoiseless}, the induced distribution from the codebook of $U^{n}(c,s)$ specified by $C \in [1:2^{nR_{0}}]$ and $S \in [1:2^{nR}]$ in the system depicted in Fig.~\ref{fig:SecRDPwithSIatBoth}, approaches the desired output distribution and the given input distribution if we have $R > I(U;X|Z)$ and $R + R_{0} > I(U;Y|Z) - H(Z|Y)$ by applying \cite[Corollary~VII.5]{CuffChannelSynthesis} similar to \cite[Section~IV]{WagnerHamidGunduzarXiv2025}.  Applying Fourier-Motzkin elimination \cite[Appendix~D]{Elgamalbook} to the rates in (\ref{eq:indepSIatBoth}) - (\ref{eq:reconstruction}) and the rates given by \cite[Corollary~VII.5]{CuffChannelSynthesis}, we have
\begin{align}
    & R = I(U;X|Z) + \delta, \label{eq:messageRateUnsec} \\
    & R_{0} = I(U;Y) - I(U;Z) + \delta, \label{eq:secCR} \\
    & R + R_{0} = I(U;Y|Z) - H(Z|Y) + \delta
\end{align}
for some $\delta >0$ which goes to zero as $n \rightarrow \infty$.

Moreover, the distortion constraint is fulfilled by the Typical Average lemma \cite[pp.~26]{Elgamalbook} since the tuples $(x^{n},u^{n},y^{n})$ are in the typical set with high probability as $n \rightarrow \infty$. With the same argument as in the proof of Theorem~\ref{theo:SRDPNoiseless}, \cite[2) in Lemma~3]{OSRBAmin} allows us to identify an encoder-decoder conditioned on an instance of $F=f$ that achieves (\ref{eq:real_SRDP}), (\ref{eq:dist_SRDP}), and (\ref{eq:secrecy_SRDPwithSI}) since (\ref{eq:tvdSIatBoth}) goes to zero as $n\rightarrow \infty$.
\end{IEEEproof}

%
%
\subsection{Proof for Theorem~\ref{theo:onlyDec}}\label{app:thmOnlyDec}
\begin{IEEEproof}[Proof sketch]
In the same ways as in the proof of Theorem~\ref{theo:bothSI}, we fix the distribution $P_{XZUY}$ such that $\mathbb{E}[d(X^{n},Y^{n}] \leq D + \epsilon_{n}$ for a $\epsilon_{n} > 0$ that tends to zero as $n \rightarrow \infty$. We introduce additional common randomness $F \in [1:2^{n\widetilde{R}}]$ that is shared between the encoder and the decoder which, as previously, representing the choice of codebook for the encoder and the decoder. We generate the auxiliary random variable $U^{n}$ and assign bin indices for common randomness $C$, and the message index $S$, and the codebook choice $F$ in two ways:

(i) \emph{random binning} where the auxiliary random variable sequence $U^{n}$ is generated in i.i.d according to the distribution $P_{U}^{n}$ factorizes as $P_{X^{n}Z^{n}U^{n}} = Q_{XZ}^{n} P_{U^{n}|X^{n}}$. We assign bin indices $C$, $S$, and $F$ to each sequence $u^{n}$ according to $C = \varphi_{1}(u^{n})$, $S = \varphi_{2}(u^{n})$, and $F = \varphi_{3}(u^{n})$, which maps each sequence $u^{n}$ uniformly at random to an index $C \in [1:2^{nR_{0}}]$, $S \in [1:2^{nR}]$, and $F \in [1:2^{n\widetilde{R}}]$ respectively. The decoder $P^{\text{RB}}_{\hat{U}^{n}|CFSZ^{n}}$ recover $U^{n}$ from $(C,S,F,Z^{n})$ and produces the output $\hat{U}^{n}$ which the produces $Y^{n}$ according to $P^{\text{RB}}_{Y^{n}|\hat{U}^{n}}$.

(ii) \emph{random coding} where the common randomness $C$ and the encoder-decoder pair $F$ is generated uniformly at random according to $P_{C}$ and $P_{F}$ respectively. The auxiliary random variable sequence $U^{n}$ is generated according to $P^{\text{RB}}_{U^{n}|CFX^{n}}$, which is the same as in the random binning and the relationship is derived in the same way as in the proof for Theorem~\ref{theo:SRDPNoiseless}. The selection of the message $S$ and the decoder is the same as for the random binning, which is $P^{\text{RB}}_{S|U^{n}}$ and $P^{\text{RB}}_{\hat{U}^{n}|CFSZ^{n}}$ with $P^{\text{RB}}_{Y^{n}|\hat{U}^{n}}$.

The distributions that are induced in the random binning $P^{\text{RB}}$ and the random coding $P^{\text{RC}}$ are
\begin{align}
    P^{\text{RB}} =& Q_{XZ}^{n} P_{U|X}^{n} P_{C|U^{n}} P_{F|U^{n}} P_{S|U^{n}}^{\text{RB}} P_{\hat{U}^{n}|CFSZ^{n}}^{\text{RB}} \nonumber \\
    &\cdot P_{Y^{n}|\hat{U}^{n}}^{\text{RB}} \nonumber \\
    =& P_{X^{n}Z^{n}CF} P_{U^{n}|CFX^{n}}^{\text{RB}} P_{S|U^{n}}^{\text{RB}} P_{\hat{U}^{n}|CFSZ^{n}}^{\text{RB}} \nonumber \\
    &\cdot P_{Y^{n}|\hat{U}^{n}}^{\text{RB}},  \\
    P^{\text{RC}} = & Q_{XZ}^{n} P_{C} P_{F} P_{U^{n}|CFX^{n}}^{\text{RB}} P_{S|U^{n}}^{\text{RB}} P_{\hat{U}^{n}|CFSZ^{n}}^{\text{RB}} \nonumber \\
    &\cdot P_{Y^{n}|\hat{U}^{n}}^{\text{RB}}.
\end{align}

In the same way as in equation (\ref{eq:TVDsums}), the TV distance between the two distributions can be upper bounded by
\begin{align}
    ||P^{\text{RB}} - P^{\text{RC}}||_{\text{TV}} \leq ||P_{X^{n}Z^{n}CF} - Q_{XZ}^{n} P_{C} P_{F}||_{\text{TV}}. \label{eq:tvdSIatDec}
\end{align}

Following similar reasoning as in the proof of Theorem~\ref{theo:SRDPNoiseless}, the TV distance in (\ref{eq:tvdSIatDec}) tend to zero when $n \rightarrow \infty$ if we impose the following rate constraints. The index $C$ and the public index $F$ are almost independent of $(X^{n},Z^{n})$ and jointly uniformly distributed if we impose \cite[Theorem~1]{OSRBAmin}
\begin{equation}
    \widetilde{R} + R_{0} < H(U|XZ). \label{eq:indepSIatDec}
\end{equation}
The public indices $S$ and $F$ are almost independent of $Y^{n}$, satisfying the secrecy constraint (\ref{eq:secrecy_SRDPwithSI}), and jointly uniformly distributed if we have \cite[Theorem~1]{OSRBAmin}
\begin{equation}
    R + \widetilde{R} < H(U|Y), \label{eq:secrecySIatDec}
\end{equation}
which ensures that the secrecy constraint (\ref{eq:secrecy_SRDPwithSI}) is satisfied following the same arguments as in the proof of Theorem~\ref{theo:SRDPNoiseless}. Lastly, fix the decoder $P^{\text{RB}}_{\hat{U}^{n}|CFSZ^{n}}$ to a Slepian-Wolf (SW) decoder $P^{\text{SW}}_{\hat{U}^{n}|CFSZ^{n}}$ which can reliably estimate $U^{n}$ from $(C,S,F,Z^{n})$ such that the sequence estimation error goes to zero when $n \rightarrow \infty$ if we impose \cite[Lemma~1]{OSRBAmin}
\begin{equation}
    R + \widetilde{R} + R_{0} > H(U|Z). \label{eq:reconstructionSIatDec}
\end{equation}
Applying Fourier-Motzkin elimination to the rates in (\ref{eq:indepSIatDec}) - (\ref{eq:reconstructionSIatDec}), we have
\begin{align}
    &R = I(U;X|Z) + \delta, \\
    &R_{0} = I(U;Y) - I(U;Z) + \delta
\end{align}
where $\delta \rightarrow 0$ when $n \rightarrow \infty$. We have
\begin{align}
    I(U;X|Z) =& H(U|Z) - H(U|ZX) \nonumber \\
    =& H(U|Z) - H(U|X) \nonumber \\
    =& I(U;X) - I(U;Z)
\end{align}
since $Z - X - U$ forms a Markov chain by the generation of the auxiliary random variable in the random binning.

Moreover, applying \cite[Corollary~VII.6]{CuffChannelSynthesis} in the system in Fig.~\ref{fig:SecRDPwithSIatDec} yields the following sum-rate for $R+ R_{0}$ for achieving the realism constraint (\ref{eq:real_SRDP}) using the side information $Z^{n}$ with local channel synthesis in the decoder
\begin{equation}
    R + R_{0} \geq I(U;Y|Z).
\end{equation}

Furthermore, the distortion constraint is fulfilled by the typical average lemma \cite[pp.~26]{Elgamalbook} due to the tuples $(x^{n},u^{n},y^{n})$ being in the typical set with high probability as $n \rightarrow \infty$. With the same argument as in the proof of Theorem~\ref{theo:SRDPNoiseless}, \cite[2) in Lemma~3]{OSRBAmin} allows us to identify an encoder-decoder conditioned on this instance of $F=f$ that achieves (\ref{eq:real_SRDP}), (\ref{eq:dist_SRDP}), and (\ref{eq:secrecy_SRDPwithSI}) since (\ref{eq:tvdSIatDec}) goes to zero as $n\rightarrow \infty$. Finally, the cardinality bound is shown using the support lemma.
\end{IEEEproof}

\section{Examples}
\subsection{Proof for Corollary~\ref{cor:simpleEx}}\label{app:corSimpleEx}
\begin{IEEEproof}[Proof sketch for achievability]
Consider a BSC from $X$ to $U$ with cross-over probability $\alpha$ and a BSC from $U$ to $Y$ with cross-over probability $\beta$, such that the conditions in (\ref{eq:real_SRDP})-(\ref{eq:secrecy_SRDP}) and the Markov chain $X-U-Y$ are satisfied. With the given distributions for the random variables in the system model, the rates in Theorem~\ref{theo:SRDPNoiseless} evaluate to
\begin{align}
    R \geq I(U;X) = 1 - H_{b}(\alpha), \\
    R_{0} \geq I(U;Y) = 1 - H_{b}(\beta)
\end{align}
Moreover, with Hamming distortion, we have
\begin{align}
    D \geq \mathbb{E}[d(X,Y)] =& \sum_{x,y\in\{0,1\}} p_{XY}(x,y) \mathbb{1}\{x \neq \!y\} \nonumber \\
    =& \alpha * \beta
\end{align}
where $\mathbb{1}\{\cdot\}$ is the indicator function.
\end{IEEEproof}

\subsection{Proof for Proposition~\ref{prop:GaussianSingleLetterFamily}}\label{app:propGaussian}
\begin{IEEEproof}[Proof Sketch]
Since $U=X+N_{U}$ with $N_{U}$ independent of $(X,Z)$, the joint law of $(X,Z,U)$ is Gaussian, and we have $P_{ZU|X} = P_{Z|X}P_{U|X}$, which proves \eqref{eq:gaussMC1}. Define
\begin{align}
&V \triangleq
\begin{bmatrix}
U\\
Z
\end{bmatrix}, \\
&K_{X,[U\,Z]}
\triangleq
\begin{bmatrix}
\mathrm{Cov}(X,U) & \mathrm{Cov}(X,Z)
\end{bmatrix}, \\
&K_{[U\,Z]}
\triangleq
\begin{bmatrix}
\mathrm{Var}(U) & \mathrm{Cov}(U,Z)\\
\mathrm{Cov}(Z,U) & \mathrm{Var}(Z)
\end{bmatrix}.
\end{align}
Since $(X,U,Z)$ is jointly Gaussian, the conditional mean of $X$ given $(U,Z)$ is affine in $(U,Z)$. Standard Gaussian conditioning formulas then give
\begin{align}
    &\mathbb{E}[X|U,Z]
    = K_{X,[U\,Z]} K_{[U\,Z]}^{-1} V \nonumber\\
    &=
    \begin{bmatrix}
        \mathrm{Cov}(X,U) & \mathrm{Cov}(X,Z)
    \end{bmatrix} \nonumber \\
    & \qquad\times
    \begin{bmatrix}
        \mathrm{Var}(U) & \mathrm{Cov}(U,Z)\\
        \mathrm{Cov}(Z,U) & \mathrm{Var}(Z)
    \end{bmatrix}^{-1}
    \begin{bmatrix}
        U \\ Z
    \end{bmatrix} \nonumber\\
    &=
    \begin{bmatrix}
        1 & \eta
    \end{bmatrix}
    \begin{bmatrix}
        1+s(\nu) & \eta\\
        \eta & 1
    \end{bmatrix}^{-1}
    \begin{bmatrix}
        U \\ Z
    \end{bmatrix} \nonumber \\ &=
    \frac{1-\eta^{2}}{1+s(\nu)-\eta^{2}}\,U
    + \frac{\eta\,s(\nu)}{1+s(\nu)-\eta^{2}}\,Z
\end{align}

which proves \eqref{eq:SLinearForm}. Moreover, from the law of total variance, we have
\begin{align}
    \mathbb{E}[\mathrm{Var}(X|U,Z)] =& \mathrm{Var}(X|U,Z)
    = 1 - \mathrm{Var}\big(\mathbb{E}[X|U,Z]\big) \nonumber \\
    =& \frac{s(\nu)(1-\eta^{2})}{1+s(\nu)-\eta^{2}}
    \overset{(a)}{=} 1-\nu
\end{align}
where $(a)$ follows by substituting \eqref{eq:sOfNu}. Hence, we have $\mathrm{Var}(S)=\nu$.

Next, \eqref{eq:gaussYConstruction}--\eqref{eq:gaussYNnoise} show that $Y$ is generated from $(U,Z)$ through $S=\mathbb{E}[X|U,Z]$ and an independent Gaussian $N_{Y}$. Therefore, we have $P_{XUZY}=P_{XUZ}P_{Y|UZ}$, which proves \eqref{eq:gaussMC2}. Also, $Y$ is Gaussian with zero mean and variance
$\displaystyle \mathrm{Var}(Y) = \frac{\rho_{\Delta}^{2}}{\nu^{2}}\mathrm{Var}(S)+ \mathrm{Var}(N_{Y}) = \frac{\rho_{\Delta}^{2}}{\nu^{2}}\nu + 1-\frac{\rho_{\Delta}^{2}}{\nu} = 1$, which proves \eqref{eq:gaussRealism}.

Furthermore, we have
$\mathbb{E}[XS] = \mathbb{E}\!\big[\mathbb{E}[XS|U,Z]\big]= \mathbb{E}\!\big[S\,\mathbb{E}[X|U,Z]\big]= \mathbb{E}[S^{2}]
= \mathrm{Var}(S)= \nu$. Hence, we have $\mathbb{E}[XY]
= \frac{\rho_{\Delta}}{\nu}\mathbb{E}[XS]
= \rho_{\Delta}$, and substituting this into \eqref{eq:gaussDistIdentity} proves \eqref{eq:gaussExactDist}.


We next evaluate the three single-letter quantities. By \eqref{eq:gaussMC1}, we have $I(U;X)-I(U;Z)=I(U;X|Z)$. Since $(X,U,Z)$ is jointly Gaussian, we have
\begin{align}
    I(U;X|Z)
    =& \frac{1}{2}\log\frac{\mathrm{Var}(X|Z)}{\mathrm{Var}(X|U,Z)} \nonumber \\
    =& \frac{1}{2}\log\frac{1-\eta^{2}}{1-\nu},
\end{align}
which proves \eqref{eq:R1Gaussian}.

To evaluate \eqref{eq:R3Gaussian}, note that $S$ is a deterministic function of $(U,Z)$ and $Y$ depends on $(U,Z)$ only through $S$. Therefore, we have $I(U;Y|Z)=I(S;Y|Z)$. Also, we obtain
\begin{align}
    \mathrm{Cov}(S,Z) =& \mathbb{E}[SZ] =
    \mathbb{E}\big[Z\,\mathbb{E}[X|U,Z]\big] \nonumber \\
    =& \mathbb{E}\big[\mathbb{E}[ZX|U,Z]\big] =\mathbb{E}[ZX]
    = \eta,
\end{align}
so using the standard scalar Gaussian conditional-variance formula, we have
\begin{equation}
    \mathrm{Var}(S|Z)=\mathrm{Var}(S)-\frac{\mathrm{Cov}(S,Z)^{2}}{\mathrm{Var}(Z)}=\nu-\eta^{2}.
\end{equation}
Conditioned on $Z$, the channel from $S$ to $Y$ is Gaussian and we have
\begin{equation}
    Y=\frac{\rho_{\Delta}}{\nu}S+N_{Y},
\end{equation}
with noise variance $1-\rho_{\Delta}^{2}/\nu$. Therefore,
\begin{align}
    I(U;Y|Z)
    &= \frac{1}{2}\log\!\left(
    1+\frac{\frac{\rho_{\Delta}^{2}}{\nu^{2}}(\nu-\eta^{2})}
    {1-\rho_{\Delta}^{2}/\nu}\right) \nonumber \\
    &= \frac{1}{2}\log\frac{\nu^{2}-\eta^{2}\rho_{\Delta}^{2}}
    {\nu(\nu-\rho_{\Delta}^{2})},
\end{align}
which proves \eqref{eq:R3Gaussian}.

Finally, $(U,Y)$ and $(U,Z)$ are both jointly Gaussian. Since we have
\begin{equation}
    \mathrm{Var}(U)=1+s(\nu), \qquad \mathrm{Cov}(U,Z)=\eta,
\end{equation}
we obtain
\begin{equation}
    I(U;Z)=\frac{1}{2}\log\frac{1+s(\nu)}{1+s(\nu)-\eta^{2}}.\label{eq:GaussianIUZ}
\end{equation}
Moreover, consider
\begin{equation}
    \mathbb{E}[US]
    = \mathbb{E}\!\big[U\,\mathbb{E}[X|U,Z]\big]
    = \mathbb{E}[UX]
    = 1,
\end{equation}
so we have
\begin{equation}
    \mathrm{Cov}(U,Y)=\frac{\rho_{\Delta}}{\nu}.
\end{equation}
Hence, we obtain
\begin{equation}
    I(U;Y)
    = \frac{1}{2}\log\frac{1+s(\nu)}{1+s(\nu)-\rho_{\Delta}^{2}/\nu^{2}}.\label{eq:GaussianIUY}
\end{equation}
Subtracting (\ref{eq:GaussianIUZ}) and (\ref{eq:GaussianIUY}) proves \eqref{eq:R2Gaussian}. The region \eqref{eq:GaussFamilyRegion} then follows immediately.
\end{IEEEproof}

\end{appendices}

\bibliographystyle{IEEEtran}
\bibliography{IEEEabrv,bibliography}

\end{document}